\documentclass[11pt]{article}
\usepackage{amssymb,amsmath,amsthm,bbm,enumitem,mathtools,dsfont,bm}
\usepackage{graphicx,caption}
\usepackage{subcaption}
\usepackage[dvipsnames]{xcolor}
\usepackage{tikz}
\usepackage[a4paper, total={6.7in, 10in}]{geometry}  
\usepackage{setspace}
\usepackage[hidelinks]{hyperref}
\usepackage{authblk}
\usepackage{textcomp}
\usepackage[title]{appendix}

\usepackage{apacite} 
\AtBeginDocument{%

   \urlstyle{APACsame}%
}
\usepackage[round]{natbib}    

\usetikzlibrary {arrows.meta}

\newtheorem{theorem}{Theorem}[section]
\theoremstyle{definition}
\newtheorem{definition}{Definition}[section]
\theoremstyle{definition}
\newtheorem{example}{Example}[section]
\theoremstyle{definition}
\newtheorem{assumption}{Assumption}[section]
\theoremstyle{definition}
\newtheorem{proposition}{Proposition}[section]
\theoremstyle{definition}
\newtheorem{corollary}{Corollary}[section]
\theoremstyle{remark}
\newtheorem{remark}{Remark}[section]
\theoremstyle{definition}
\newtheorem{lemma}{Lemma}[section]

\newcommand{\I}{\mathbf{I}}
\newcommand{\J}{\mathbf{J}}

\newcommand{\Y}{\mathbf{Y}}
\newcommand{\Nn}{\mathcal{N}_n}

\newcommand{\In}{\mathcal{I}_n}

\DeclareMathOperator{\real}{\mathbb{R}}

\newcommand{\nzp}{\text{nzp}}
\newcommand{\An}{\mathcal{A}_n}

\allowdisplaybreaks

\DeclareMathOperator*{\argmin}{arg\,min}

\setlist[itemize]{leftmargin=0.4cm,labelindent=\parindent}

\title{\bf Pareto Optimal Centralized Risk Sharing with Multiple Agents: Inclusivity and Fairness}
 
\author[$\star$]{Debora Daniela Escobar}
\affil[$\star$]{Department of Actuarial Mathematics and Statistics, Maxwell Institute for Mathematical Sciences, Heriot-Watt University, Edinburgh, United Kingdom. Email: d.escobar@hw.ac.uk.}
\author[$\ddagger$]{Wing Fung Chong}
\affil[$\ddagger$]{Department of Statistics and Actuarial Science, School of Computing and Data Science, The University of Hong Kong, Hong Kong, China. Email: chongwf@hku.hk.}
\date{\today}

\begin{document}

\large

\sloppy
 
\maketitle

\begin{abstract}
This paper studies centralized risk sharing with endogenous prices. Multiple policyholders transfer risks to a central insurer through indemnity decisions, while prices are determined by pricing functionals applied to ceded risks. The resulting problem is multiobjective, with Pareto optimality as the natural efficiency criterion. We show that classical Pareto optimality may fail to reveal whether all agents are represented in a balanced decision process that scalarized objectives may assign zero weight to some agents, and group aggregates may obscure individual risk positions. Motivated by the bilateral Pareto characterization through sequential optimization, we introduce inclusive and fair Pareto optimality, a representation-based refinement requiring every agent to appear exactly once, either individually or as part of a group, in a finite ordered sequence of optimizations. Our main result proves equivalence between this concept and balanced sequential optimization, placing it between Geoffrion-proper Pareto optimality and classical Pareto optimality. An illustrative example demonstrates the framework using the Expected Shortfall.

\end{abstract}

\section{Introduction}
Optimal insurance and reinsurance design is a classical problem in actuarial science, insurance economics, and risk management. Its modern formulation can be traced back to the seminal contributions of \cite{borch1960attempt},  \cite{arrow1963uncertainty, Arrow1974}, and \cite{miller1972}, who derived optimal deductible-type contracts under variance-minimization and expected-utility-maximization criteria. These foundational insights have generated a large literature on optimal insurance and reinsurance, studying how optimal contracts depend on preferences, beliefs, premium principles, risk measures, moral hazard, adverse selection, regulatory constraints, and other market frictions; see, for example, \cite{CAI2008}, \cite{Kaluszka2008}, \cite{Cheung2010}, \cite{ChiTan2011}, \cite{sung2011}, \cite{CUI2013}, \cite{Bernard2015}, \cite{Cheung2015}, \cite{Zhuang2016}, \cite{CAI2017}, \cite{ASIMIT2018}, \cite{Cheung2019}, \cite{lo2019pareto}, \cite{Xu2019}, \cite{chi2020optimal}, \cite{ASIMIT2021587}, \cite{BoonenZhang2021}, \cite{BOONEN2022}, \cite{boonen2023}, \cite{WANG2026261}, and the references therein. Despite the richness of this literature, much of it treats insurance as a bilateral risk-sharing problem between one policyholder and one insurer or reinsurer.

In practice, insurance appears bilateral from the policyholder's perspective, but multilateral from the insurer's perspective. A single insurer pools risks from many policyholders, collects premiums ex-ante, and indemnifies realized losses ex-post. More generally, centralized risk sharing among multiple agents refers to a scheme in which each policyholder transfers their risk to a common central agent, while recognizing that other policyholders are also participating in the arrangement. Insurance can therefore be viewed as a centralized risk-sharing scheme, in which a selected group of agents shares risk through a central agent. This centralized structure has recently attracted renewed attention by \cite{BERNARD2020}. \cite{boonenchongghossub2024}
provide a microeconomic analysis of Pareto-efficient risk sharing in centralized insurance markets, with applications to flood insurance. Further developments include Stackelberg equilibria in \cite{GHOSSOUBSTACK2024}, \cite{boonen2026pareto}, and robustness under dependence uncertainty in \cite{boonen2025pareto}. In these models, the interaction among policyholders and the insurer naturally gives rise to a multiobjective optimization problem, and Pareto optimality, in the tradition of \cite{pareto1906manuale}, serves as the canonical equilibrium notion.

A key feature of much of this recent centralized-insurance literature is that premiums are treated as exogenous decision variables. This formulation makes it possible to trace the Pareto frontier by varying deterministic transfers and to apply scalarization techniques similar in spirit to those developed in \cite{ASIMIT2018} and \cite{ASIMIT2021587} with equal bargaining powers. Such a price-as-decision-variable perspective is natural in market-value or economic equilibrium models. In many insurance applications, however, premiums are not freely chosen decisions. Rather, they are determined by premium principles applied to the indemnity coverage. In that case, premiums are endogenous functions of the ceding decisions. Changing an indemnity therefore changes both the random risk allocation and the associated premium simultaneously. Consequently, the usual deterministic-transfer argument for tracing the Pareto frontier is no longer directly available.

This paper studies multilateral centralized risk sharing under this endogenous pricing structure. We consider a centralized insurance problem with multiple policyholders and one insurer. Each policyholder chooses an indemnity, or ceding function, the insurer charges a premium determined by a given pricing functional, and every agent evaluates the resulting risk position through an objective functional. The premium charged to each policyholder depends on the ceded risk induced by that policyholder’s indemnity. Thus, the insurer’s and policyholders’ objectives are coupled through the indemnity decisions, while premiums themselves are not independent choice variables. This distinction is crucial to this paper. It leads to a multiobjective optimization problem in which Pareto optimality remains the natural starting point, but where its usual scalarization-based interpretations become less informative about the role played by each agent in the decision process, such as whether they still have equal bargaining powers.

The characterization of Pareto optimal solutions has long been a central topic in multiobjective optimization and vector optimization. Since Pareto optimality is defined through non-dominance rather than by a single objective value, much of the literature has sought constructive ways to represent Pareto optimal solutions through scalar or constrained optimization problems. A classical approach is weighted-sum scalarization, where the multiple objectives are aggregated into one weighted objective. Positive weights provide a direct route to Pareto optimality, while convexity of the feasible objective region allows supporting-hyperplane arguments to recover Pareto optimal solutions through suitable non-negative weights. This scalarization viewpoint is foundational in multiobjective mathematical programming; see \cite{evans1984} for a classical overview. However, even in convex settings, the resulting scalarization need not assign positive weight to every objective. Some objectives may receive zero weight, and therefore may be absent from the scalar representation. This is not merely a technical point in multilateral decision problems; when objectives correspond to agents, zero weights mean that some agents' objectives may not be represented in the scalar characterization. Related Pareto-optimality characterization by weighted-sum scalarization for multi-agent investment and comonotonic allocations have been studied; see, for example, \cite{xia2004multi,ravanelli2014comonotone}.

A second major characterization approach is the $\varepsilon$-constraint method, usually traced to \cite{haimes1971bicriterion}. Instead of aggregating all objectives into one weighted objective, this method optimizes one objective while imposing bounds on the remaining objectives. It therefore provides a constrained scalar characterization of Pareto optimality. Its strength is that it does not require all objectives to enter the same scalar aggregate. Its limitation, however, is that the relevant constraint bounds are tied to the Pareto optimal solution being characterized. In other words, to characterize a given Pareto optimal solution by this method, one typically needs to know the other objective values at that very solution in advance. The characterization is therefore implicit; it explains how a Pareto optimum can be certified once the appropriate bounds are known, but it does not by itself provide an explicit representation of how the objectives should enter the decision process. This issue has motivated adaptive and exact $\varepsilon$-constraint schemes for Pareto-frontier generation; see, for example, \cite{LAUMANNS2006}, \cite{BERUBE2009}, and \cite{Cooper2020}.

A third line of work refines Pareto optimality itself. Geoffrion's proper efficiency strengthens Pareto optimality by ruling out efficient points with unbounded or anomalous trade-offs among objectives, thereby producing a notion with sharper scalarization and separation properties; see \cite{GEOFFRION1968}. Later developments on super efficiency by \cite{borwein1993super}, and scalarizations for multiple proper-efficiency notions by \cite{GarciaCastanoFernando2023}, continue this theme by refining efficiency to obtain stronger bounded-trade-off, stability, or scalarization properties. Related characterization tools include nonlinear scalarizations, such as achievement scalarizing functions (\cite{wierzbicki1980}) and Pascoletti-Serafini scalarization (\cite{PascolettiA1984}), Karush-Kuhn-Tucker-type conditions, and outcome-space or approximation methods for nondominated frontiers; see \cite{benson1998outer} and \cite{herzel2021}.

In summary, weighted-sum scalarization may support a Pareto optimal decision, but some agents can receive zero weight; $\varepsilon$-constraint methods can certify a Pareto optimum, but the required bounds depend on the solution itself; and Proper efficiency refines Pareto optimality by controlling the magnitude of trade-offs. This paper contributes to this characterization literature by introducing a different refinement of Pareto optimality, motivated by the multilateral centralized risk sharing problem with endogenous prices. 
We introduce inclusive and fair Pareto optimality, a representation-based refinement of Pareto optimality. It identifies those Pareto optimal decisions that admit a finite sequential optimization representation in which every agent is included exactly once, either individually or as part of a group.

Our first contribution is to revisit the bilateral case from the perspective of sequential optimization. Building on the characterization of Pareto optimal insurance decisions in \cite{CAI2017}, 
we reframe the bilateral Pareto optimality 
through a small family of sequential optimization procedures. Such optimizations are composed as the union of the interior case and boundary cases. In the interior case, one minimizes a strict convex combination of the two agents' objectives. At the boundary, one first optimizes one agent's objective and then, among the corresponding minimizers, optimizes the other agent's objective. This reformulation is not merely notational. It reveals that the bilateral characterization of Pareto optimality is fundamentally sequential. We encode these procedures through sequential matrices and a recursively defined set-valued function, which provide the conceptual and technical starting point for the multilateral analysis.

Our second contribution is to show why the bilateral sequential characterization does not extend directly to the multilateral centralized problem. We introduce a class of sequential matrices in which each column represents one stage of optimization and each row represents one agent. The natural balanced class requires every agent to appear exactly once across the sequence, either individually or as part of a group. We show that every decision generated by such a balanced sequential optimization is Pareto optimal. This extends the sufficient direction of the bilateral theory. The converse, however, fails in general. The reason is structural, that in a multilateral problem, optimizing a convex combination of several agents' objectives identifies only an aggregate group value, not the individual risk positions within that group. Hence an improvement in one agent's objective may be offset by a deterioration in another's without violating optimality of the aggregate. We further show that enlarged classes of sequential procedures may relate to Pareto optimal decisions even when some agents are excluded, or when some agents are optimized more often than others. These observations demonstrate that the classical Pareto optimality alone does not distinguish between balanced and imbalanced sequential representations.

Our third contribution is to introduce and characterize {\it inclusive and fair Pareto optimality}. The concept is designed to capture exactly those Pareto optimal decisions that admit a balanced sequential representation. {\it Inclusivity} requires that every agent's objective be represented in the optimization sequence. {\it Fairness}, in the sense of representation, requires that each agent appears exactly once, either alone or as part of a group. To define the concept, we introduce convex-group risk measures, Pareto optimality between two groups of agents, and ordered set partitions of the agent set. An inclusive and fair Pareto optimal decision is obtained by first optimizing the convex-group risk of an initial group and then recursively incorporating additional groups through Pareto optimality between all previously included agents and the next group.

Our main result proves that inclusive and fair Pareto optimality is exactly characterized by balanced sequential optimization. Equivalently, the set of inclusive and fair Pareto optimal decisions coincides with the union of the recursively generated solution sets associated with the balanced sequential matrices. Moreover, decisions obtained by minimizing a strict convex combination of all agents' objectives in one step are inclusive and fair Pareto optimal. Thus, in the terminology of convex multiobjective optimization, Geoffrion-proper Pareto optimality implies inclusive and fair Pareto optimality, which in turn implies classical Pareto optimality. The new concept is therefore weaker than proper Pareto optimality but stronger than classical Pareto optimality. It preserves the tractable sequential structure of the bilateral theory through groups, while ruling out exclusive or unfair sequential procedures that the classical Pareto optimality alone may not account for.

The remainder of this paper is organized as follows. Section~\ref{sec:problem} introduces the multilateral centralized risk-sharing problem with endogenous prices and defines the classical Pareto optimality. Section~\ref{sec:PO_sequential} revisits the bilateral case, reformulates the known Pareto characterization through sequential optimization, and then extends the sequential framework to the multilateral case. It also explains why the bilateral characterization fails to extend directly and how the classical Pareto optimality relates to exclusive or unfair sequential procedures. Section~\ref{sec:iPO} introduces convex-group risk measures, Pareto optimality between groups, ordered set partitions, and inclusive and fair Pareto optimality. The main characterization theorem establishes the equivalence between inclusive and fair Pareto optimality and balanced sequential optimization. Section~\ref{sec:example_section} provides an illustrative example with two policyholders and one insurer. Section~\ref{sec:conclusion} concludes and discusses directions for future research. Appendices support some details of this paper.

\section{Problem Formulation}\label{sec:problem}
Let $\Omega$ denote the set of future states of the world, and let $\mathcal{F}$ be a $\sigma$-algebra on $\Omega$. Let $P$ be the objective probability measure on $(\Omega,\mathcal{F})$. These together form the probability space $(\Omega,\mathcal{F},P)$, representing uncertainty over a time horizon $[0,T]$, where $0$ denotes the present and $T>0$ is a future time. Throughout this paper, all random variables are defined on $(\Omega,\mathcal{F},P)$.

\begin{definition}[Centralized Risk Sharing]\label{defsetup}
Define a centralized risk sharing problem in $[0,T]$ with $n+1$ agents, denote by $\mathcal{N}_n=\{1,2,\dots, n, n+1\}$ the labels of the agents, where the $(n+1)$-th agent is the central agent, for any $n\in\mathbb{N}=\{1,2,\dots\}$.

At time $t=0$ and prior to the risk sharing arrangement, the first $n$ agents hold their respective risks $X_1,X_2,\dots,X_n$, while the central agent does not hold any risk. With the risk sharing arrangement, the first $n$ agents make an ex-ante decision to transfer part of their respective risk, denoted by $I_i\left(X_i\right)$, for $i=1,2,\dots, n$, to the central agent, and each decision is such that $0\leq I_i(X_i)\leq X_i$. In return, the central agent ex-ante charges each $i$-th agent a price $\pi_i=\Pi_i\left(I_i\left(X_i\right)\right)$, for $i=1,2,\dots,n$. The $i$-th agent, for $i=1,2,\dots,n$, holds the random risk  
\begin{equation}\label{Yi}
Y_i=Y_i\left(I_i\right)=X_i-I_i\left(X_i\right)+\Pi_i\left(I_i\left(X_i\right)\right)\, ,
\end{equation}
while the single $\left(n+1\right)$-th central agent holds a risk 
depending on all decision functions $\mathbf{I}=(I_1,I_2,\dots, I_n)$, given by
\begin{equation}\label{Yn+1}
Y_{n+1}= Y_{n+1} (\mathbf{I})=\sum_{i=1}^{n}I_i\left(X_i\right)-\sum_{i=1}^{n}\Pi_i\left(I_i\left(X_i\right)\right)\,.
\end{equation}

At time $t=T$, if the first $n$ agents ex-post experience their risks $\left(X_1,X_2,\dots,X_n\right)$ as $\left(x_1,x_2,\dots,x_n\right)$, the $i$-th agent will hold an ex-post loss position as $y_i=x_i-I_i\left(x_i\right)+\Pi_i\left(I_i\left(X_i\right)\right)$, for $i=1,2,\dots,n$, while the $\left(n+1\right)$-th agent will hold an ex-post loss position as $y_{n+1}=\sum_{i=1}^{n}I_i\left(x_i\right)-\sum_{i=1}^{n}\Pi_i\left(I_i\left(X_i\right)\right)$.
\qed
\end{definition}

Let $\mathcal{X}$ be the set which contains all possible random variables involved in this risk sharing arrangement. The Centralized Risk Sharing problem requires specifying the set of admissible decision functions $\mathcal{I}_0\ni I_i$, and the pricing rules $\Pi_i: \mathcal{X} \rightarrow \mathbb{R}$, for $i=1,2,\dots,n$. We aim to find an optimal decision, where each agent is endowed with an objective function, denoted by $\rho_i: \mathcal{X}\rightarrow \mathbb{R}$, for $i\in\mathcal{N}_n$.

Classical insurance literature studies a particular case of this setting when $n=1$, with one policyholder and one (re)insurer.
In this context, the policyholder is the buyer of protection and the (re)insurer the seller. More generally, insurance can be tailored for multiple policyholders and one insurer, fitting in the centralized risk sharing. 

In insurance, all of the policyholders' risks are assumed to be non-negative; that is, $X_i\geq 0 $, for $i=1,2,\dots, n$. The decisions $I_i$, for $i=1,2,\dots, n$, are called ceding functions, and their common admissible set, which is convex, is 
\begin{equation*}
\mathcal{I}_0=\left\{I:\mathbb{R}_+= [0, \infty)\to\mathbb{R}_+:I\left(0\right)=0, \ 0\leq I(x)\leq x \text{ for all } x\geq 0\right\}.\footnote{For practical insurance applications, these ceding functions are usually required to be $1$-Lipschitz as well. That is, for any $0\leq x\leq y $, $0\leq I(y)-I(x)\leq y-x$, which implies that, when an ex-post loss increases, both the policyholder and the insurer bear a larger loss, thereby ruling out ex-post moral hazard. Since the results and the example in this paper do not rely on this condition, it is omitted.}
\end{equation*}
Thus, 
at most an ex-post loss $x\geq 0$ is transferred, while $I(0)=0$ means that if the ex-post loss is zero, there will be no transfer.
In the problem, decisions are represented as vectors $\I=(I_1,I_2,\dots,I_n)\in \mathcal{I}_0^n=\mathcal{I}_0\times \overset{(n)}{\cdots} \times \mathcal{I}_0 $, the $n$-fold Cartesian product of $\mathcal{I}_0$; note that the set $\mathcal{I}_0^n$ is convex. In general, consider the Centralized Risk Sharing problem for a subset of decisions $\I\in \mathcal{I}_n \subseteq\mathcal{I}_0^n $, where $\mathcal{I}_n$ is called from here on the feasible set, and the subscript $n$ refers to the setting with $n$ policyholders plus one insurer. An optimal decision vector is denoted by $\I^*=(I_1^*, I_2^*, \dots , I_n^*)\in\In$, and the individual losses for each agent using the optimal decision vector $\I^*$ are  given by $Y^*_i=Y_i\left(I^*_i\right)=X_i-I^*_i\left(X_i\right)+\Pi_i\left(I^*_i\left(X_i\right)\right)$, for   $i=1,2,\dots, n$, and $Y^*_{n+1}=Y_{n+1}(\mathbf{I}^*) = \sum_{i=1}^{n}I^*_i \left(X_i\right)-\sum_{i=1}^{n}\Pi_i \left(I^*_i\left(X_i\right)\right)$, following the same notations as in \eqref{Yi} and \eqref{Yn+1}, respectively.

The pricing rules $\Pi_i$, for $i=1,2,\dots,n$, in insurance are premium principles, which are defined for non-negative random variables, and the objectives $\rho_i$, for $i\in\mathbb{N}$, are chosen to be risk measures. With the arrangement, the time-0 risk objective of each policyholder is given by
\begin{equation}\label{rhoi}
\rho_i\left(Y_i\right)=\rho_i\left(Y_i\left(I_i\right)\right)=\rho_i\left(X_i-I_i\left(X_i\right)+\Pi_i\left(I_i\left(X_i\right)\right)\right)\,,
\end{equation}
for $i=1,2,\dots, n$, while for the insurer is given by
\begin{equation}\label{rhon+1}
\rho_{n+1}\left(Y_{n+1}\right)=\rho_{n+1}(Y_{n+1}(\mathbf{I})) = 
\rho_{n+1}\left(\sum_{i=1}^{n}I_i\left(X_i\right)-\sum_{i=1}^{n}\Pi_i\left(I_i\left(X_i\right)\right)\right)\,,
\end{equation}
for any decision $\I\in \In$. When the risk measures are applied to the risks calculated with an optimal decision vector $\I^*\in\In$, they are denoted as $\rho_i(Y_i^*) = \rho_i(Y_i(I_i^*))$, for $i=1,2,\dots, n$, and $\rho_{n+1}(Y_{n+1}^*) = \rho_{n+1}(Y_{n+1}(\mathbf{I}^*))$, following the same notations as in \eqref{rhoi} and \eqref{rhon+1}.

For the Centralized Risk Sharing problem in insurance, assume that $\mathcal{X}$ is a linear space,
so that the risks in \eqref{rhoi} and \eqref{rhon+1} are well-defined. Note that once the premium principles and risk measures are specified, they may impose other conditions on $\mathcal{X}$, usually in terms of the moments of the random variables, and thus we may choose $\mathcal{X}$ to be a suitable $L^p$ space\footnote{Recall that $L^p = \{Z\in L^0 : \, \int_\Omega |Z|^p \, dP < \infty \}$, for $p\geq 0$, where $L^0$ is the set of all finite real-valued random variables on $(\Omega, \mathcal{F}, P)$.}, for some $p\geq 0$; since, for any $x\geq 0$, $I(x)$ and $x-I(x)$, for $I\in\mathcal{I}_0$, are always bounded above by $x$, the random variables involved in the Centralized Risk Sharing problem would have as many finite moments as the non-negative risks prior to being shared.

For the remainder of this paper, we use the insurance terminology. Nevertheless, our problem setting, results, and discussions can be adapted beyond insurance.
All the elements of the Centralized Risk Sharing problem can be written in the tuple:
\[ (X_1,X_2, \dots, X_n, \rho_1, \rho_2, \dots, \rho_{n+1}, \Pi_1, \Pi_2,\dots,  \Pi_n; \mathcal{I}_n)\, ,\]
that we shorten as $(\Nn;\In)$, with a slight abuse of notation; the labels $\Nn$ of the $n+1$ agents identify the risks prior to sharing, the risk measures for all agents, and the premium principles for the $n$ policyholders, while $\mathcal{I}_n$ remains as the feasible set. The full setting being described by $(\Nn; \In)$, for $n\in\mathbb{N}$, means the Centralized Risk Sharing problem in Definition~\ref{defsetup}. In particular, when $n=1$, the full setting is a bilateral case, since it involves two agents: one policyholder and the insurer.

A standard equilibrium to determine the best decision $\mathbf{I}^* = \left(I^*_1,I^*_2,\dots,I^*_n\right)\in\mathcal{I}_n$, in a setting involving a multiobjective optimization problem, is Pareto optimality.
\begin{definition}\label{def:PO}
Given the full setting $(\Nn; \In)$, for $n\in\mathbb{N}$, a feasible set $\mathcal{C}\subseteq\mathcal{I}_n$, and a group of agents $\mathcal{G}\subseteq\mathcal{N}_n$, we say $\mathbf{I}^* = \left(I^*_1,I^*_2,\dots,I^*_n\right)\in\mathcal{C}$ is Pareto optimal among 
$\mathcal{G}$ within 
$\mathcal{C}$, if there does not exist another $\mathbf{I} = \left(I_1,I_2,\dots,I_n\right)\in\mathcal{C}$ such that, $\rho_i\left(Y_i\right)\leq\rho_i\left(Y^*_i\right)$, for any $i\in\mathcal{G}$, with at least one inequality being strict. In particular, when $\mathcal{G}=\mathcal{N}_n$ and $\mathcal{C}=\mathcal{I}_n$, we say $\mathbf{I}^*\in\mathcal{I}_n$ is Pareto optimal under the full setting.
\qed
\end{definition}

In some results of this paper, we shall require the premium principles, the risk measures, and the feasible set to satisfy certain conditions. We summarize them in the assumption below. When this assumption is required for a result, we specify it with a $(\dagger)$ for simplicity.
\begin{assumption}[$\dagger$]\label{assumption} 
The pricing rules $\Pi_i$ are semi-linear on $\mathcal{X}$; the risk measures $\rho_i$ are convex on $\mathcal{X}$; and the feasible set $\mathcal{I}_n$ is convex, for $i\in \Nn$, for $n\in \mathbb{N}$.\footnote{Let $f:\mathcal{X}\rightarrow \real$. The functional $f$ is semi-linear if $f(wZ_1+Z_2) = wf(Z_1) +  f(Z_2)$, for all $w\geq 0$ and all $Z_1,Z_2\in \mathcal{X}$; $f$ is convex if $f(w Z_1 + (1-w)Z_2) \leq w  f(Z_1) + (1- w)f(Z_2)$, for all $w\in [0,1]$ and all $Z_1, Z_2 \in \mathcal{X}$. Note that the risk measures herein are only required to satisfy the convexity, and thus they might not be ``convex risk measures'' in the usual sense, which satisfy translation invariance, monotonicity, and convexity.}
\end{assumption}
\begin{example}\label{eg:rational}
We outline an example of a convex feasible set. Let
\begin{equation*}
\mathcal{I}_n=\left\{\I\in\mathcal{I}_0^n:\rho_i\left(Y_i\left(I_i\right)\right) \leq \rho_i(X_i), \text{ for } i=1,2,\dots,n, \text{ and }\rho_{n+1}(Y_{n+1}(\mathbf{I})) \leq \rho_{n+1}(0)\right\}.
\end{equation*}
In the insurance context, the conditions in this set are known as individual rationality. If the premium principles are semi-linear, and the risk measures are convex, this feasible set $\mathcal{I}_n$ is indeed convex.
Recall that $\mathcal{I}_0^n$ is convex. Take $\I=(I_1, I_2, \dots, I_n),\;\mathbf{J}= (J_1, J_2, \dots , J_n) \in \In\subseteq\mathcal{I}_0^n$, and thus $\theta_1\I + \theta_2 \mathbf{J}\in\mathcal{I}_0^n$, where $\theta_1\in [0,1]$ and $\theta_2=1-\theta_1\in [0,1]$. We can show that the objectives are convex on the decision set; see Lemma~\ref{rhoconvexI} below. Thus, we have 
\begin{align*}
&\rho_i\left(Y_i(\theta_1 I_i + \theta_2 J_i)\right) \leq \theta_1 \rho_i \left( Y_i\left(I_{i}\right) \right) + \theta_2 \rho_i \left( Y_i\left(J_{i}\right) \right) \,, \text{ for } i=1,2,\dots, n\,,\\
&\rho_{n+1} \left( Y_{n+1}(\theta_1 \mathbf{I} +\theta_2 \mathbf{J})\right) \leq \theta_1\rho_{n+1}\left( Y_{n+1}\left( \mathbf{I} \right) \right) + \theta_2\rho_{n+1}\left( Y_{n+1}\left(\mathbf{J}\right) \right)\;.
\end{align*}

Since $\I,\mathbf{J}\in\mathcal{I}_n$, we have $\rho_i\left(Y_i\left(I_i\right)\right) \leq \rho_i(X_i), \text{ for } i=1,2,\dots,n,\text{ and }\rho_{n+1}(Y_{n+1}(\mathbf{I})) \leq \rho_{n+1}(0)$, as well as $\rho_i\left(Y_i\left(J_i\right)\right) \leq \rho_i(X_i), \text{ for } i=1,2,\dots,n,\text{ and }\rho_{n+1}(Y_{n+1}(\mathbf{J})) \leq \rho_{n+1}(0)$. Therefore, for $i=1,2,\dots,n$, $\rho_i\left(Y_i(\theta_1 I_i + \theta_2 J_i)\right)\leq \rho_i(X_i)$, while $\rho_{n+1}\left( Y_{n+1}(\theta_1 \mathbf{I} +\theta_2 \mathbf{J})\right)\leq \rho_{n+1}(0)$, implying that $\theta_1\I + \theta_2 \mathbf{J}\in\mathcal{I}_n$
\qed
\end{example}

The following lemma is used for the 
Example \ref{eg:rational} above 
and will also be useful when proving later results in this paper. The proof of the following lemma is in Appendix~\ref{ap:rhoconvexI}.
\begin{lemma}\label{rhoconvexI}
Let the full setting $(\Nn;\In)$, for $n\in\mathbb{N}$, satisfy $(\dagger)$. If $\mathcal{C}\subseteq \mathcal{I}_n$ is a convex feasible subset, then, each agent's objective is convex in the feasible decisions $\mathbf{I}\in\mathcal{C}$. That is, for any $m\in\mathbb{N}$, $\mathbf{I}^{(j)}=(I^{(j)}_1,I^{(j)}_2,\dots, I^{(j)}_n)\in\mathcal{C}$, $\theta_j \geq 0$, for $j=1,2,\dots,m$, with $\sum_{j=1}^m\theta_j = 1$, it holds for any $i=1,2,\dots, n$,
\[\begin{aligned}
 &\rho_i\left(Y_i\left(\sum_{j=1}^m\theta_jI^{(j)}_i\right)\right)  \leq \sum_{j=1}^m\theta_j\rho_i\left(Y_i\left(I^{(j)}_i\right)\right)\,,\\
 &\rho_{n+1}\left(Y_{n+1}\left(\sum_{j=1}^m\theta_j\mathbf{I}^{(j)}\right)\right)  \leq \sum_{j=1}^m\theta_j\rho_{n+1}\left(Y_{n+1}\left(\mathbf{I}^{(j)}\right)\right)\,.
\end{aligned}\]
\end{lemma}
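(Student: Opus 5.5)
The plan is to reduce the claim to two facts. First, under $(\dagger)$ the maps $\mathbf{I}\mapsto Y_i(I_i)$ and $\mathbf{I}\mapsto Y_{n+1}(\mathbf{I})$ are affine along convex combinations of feasible decisions. Second, a convex functional composed with such an affine map satisfies Jensen's inequality for finite convex combinations.

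Throughout, $\sum_{j=1}^m\theta_j\mathbf{I}^{(j)}\in\mathcal{C}$ because $\mathcal{C}$ is convex; this follows from the two-point definition by the usual induction on $m$. Hence every expression in the statement is well defined, and all random variables involved lie in the linear space $\mathcal{X}$.

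\textbf{Step 1: the pricing functionals are additive across convex combinations.} Semi-linearity, $\Pi_i(wZ_1+Z_2)=w\Pi_i(Z_1)+\Pi_i(Z_2)$ for $w\ge 0$, has two useful consequences.
\begin{itemize}
\item Taking $w=1$ and $Z_1=Z_2=0$ gives $\Pi_i(0)=0$.
\item Taking $Z_2=0$ then gives positive homogeneity.
\end{itemize}
Applying the identity repeatedly yields $\Pi_i\bigl(\sum_{j}\theta_j Z_j\bigr)=\sum_j\theta_j\Pi_i(Z_j)$ for $\theta_j\ge 0$, by induction on $m$.

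Pointwise we have $\bigl(\sum_j\theta_jI^{(j)}_i\bigr)(X_i)=\sum_j\theta_jI^{(j)}_i(X_i)$. Combining this with $\sum_j\theta_j=1$, so that $X_i=\sum_j\theta_jX_i$, we obtain
\[
Y_i\Bigl(\sum_{j=1}^m\theta_jI^{(j)}_i\Bigr)=\sum_{j=1}^m\theta_j\Bigl(X_i-I^{(j)}_i(X_i)+\Pi_i\bigl(I^{(j)}_i(X_i)\bigr)\Bigr)=\sum_{j=1}^m\theta_jY_i\bigl(I^{(j)}_i\bigr).
\]
Summing the same computation over $i$ gives $Y_{n+1}\bigl(\sum_j\theta_j\mathbf{I}^{(j)}\bigr)=\sum_j\theta_jY_{n+1}(\mathbf{I}^{(j)})$. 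No $X_i$ term appears in the insurer's position, so here the normalization $\sum_j\theta_j=1$ is not even needed.

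\textbf{Step 2: finite Jensen inequality.} The convexity in $(\dagger)$ is stated only for two points, so I extend it to $m$ points by induction on $m$. The case $m=1$ is trivial. For the inductive step, if $\theta_m=1$ the claim is trivial. Otherwise write
\[
\sum_{j=1}^m\theta_jZ_j=(1-\theta_m)\sum_{j<m}\tfrac{\theta_j}{1-\theta_m}Z_j+\theta_mZ_m,
\]
apply two-point convexity, and then apply the inductive hypothesis to the inner combination. Finally, apply this to $\rho_i$ with $Z_j=Y_i(I^{(j)}_i)$, and to $\rho_{n+1}$ with $Z_j=Y_{n+1}(\mathbf{I}^{(j)})$, using the identities from Step 1.

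\textbf{Main obstacle.} The only delicate point is Step 1. We must justify that semi-linearity gives full additivity and homogeneity with $\Pi_i(0)=0$; without this, the premium term would break affinity. We must also keep track of the constant $X_i$ term, which is where $\sum_j\theta_j=1$ is used. Everything else is routine induction.
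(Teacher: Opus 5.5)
Your proposal is correct and follows essentially the same route as the paper: semi-linearity of $\Pi_i$ makes $Y_i$ and $Y_{n+1}$ affine along convex combinations of decisions, and convexity of $\rho_i$ then gives the inequality. The only difference is that the paper proves the case $m=2$ ``without loss of generality'', whereas you spell out the inductions it leaves implicit (deriving $\Pi_i(0)=0$, extending additivity to $m$ terms, and the finite Jensen step).
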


\section{Pareto Optimality and Sequential Optimization}\label{sec:PO_sequential}
This section is devoted to studying Pareto optimal decision vectors $\mathbf{I}^* \in\mathcal{I}_n$ for the Centralized Risk Sharing problem in Definition \ref{defsetup}. 
We first review the bilateral case ($n=1$), previously studied in \cite{CAI2017},
as a basis for the discussion in a
multilateral case.

\subsection{Bilateral Risk Sharing Revisited}\label{sec:bilateral}
The bilateral setting ($n=1$) is described by the tuple $\left(X_1, \rho_1, \rho_2, \Pi_1; \mathcal{I}_1\right)$, that is shortened by $(\mathcal{N}_1; \mathcal{I}_1)$, where $\mathcal{I}_1\subseteq \mathcal{I}_0$; each decision is a ceding function $I_1\in \mathcal{I}_1$, where the subscript of $I_1$ shall be dropped in this section for simplicity. Under the bilateral setting, centralization is actually irrelevant, as there is only one policyholder.

Define the set-valued function\footnote{For a set $E$, let $2^E$ denote its power set.} $K:[0,1] \rightarrow 2^{\mathcal I_1}$ by, for $\lambda\in\left[0,1\right]$,
\begin{equation}
K\left(\lambda\right)=\argmin_{I\in\mathcal{I}_1}\left\lbrace\lambda\rho_1\left(Y_1\right)+\left(1-\lambda\right)\rho_2\left(Y_2\right)\right\rbrace\, .
\label{eq:function_K}
\end{equation}
The following proposition is a well-known result; we present it to prepare the study of the multilateral case in the next section.
\begin{proposition}\label{2_min_PO}
Consider the bilateral setting $(\mathcal{N}_1;\mathcal{I}_1)$. If $I^*\in K\left(\lambda\right)$, for some $\lambda\in\left(0,1\right)$, then $I^*\in\mathcal{I}_1$ is Pareto optimal under the bilateral setting.
\end{proposition}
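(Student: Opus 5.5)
The plan is to argue by contradiction, using only the definition of $K(\lambda)$ in \eqref{eq:function_K} and Definition~\ref{def:PO} with $\mathcal{G}=\mathcal{N}_1$ and $\mathcal{C}=\mathcal{I}_1$. No convexity or assumption $(\dagger)$ should be needed, since strict positivity of both weights does all the work.

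Fix $\lambda\in(0,1)$ and $I^*\in K(\lambda)$, so $I^*\in\mathcal{I}_1$. Suppose $I^*$ is not Pareto optimal under the bilateral setting. Then there is some $I\in\mathcal{I}_1$ with
\[
\rho_1(Y_1(I))\le\rho_1(Y_1^*) \quad\text{and}\quad \rho_2(Y_2(I))\le\rho_2(Y_2^*),
\]
and at least one of these inequalities is strict.

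Multiply the first inequality by $\lambda>0$ and the second by $1-\lambda>0$, then add them. This gives
\[
\lambda\rho_1(Y_1(I))+(1-\lambda)\rho_2(Y_2(I))<\lambda\rho_1(Y_1^*)+(1-\lambda)\rho_2(Y_2^*).
\]
The inequality is strict because whichever agent's inequality is strict is multiplied by a positive weight, so the strictness survives the summation. Since the risk measures are real-valued, all quantities are finite and the arithmetic is legitimate.

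The displayed strict inequality says that $I$ attains a strictly smaller value of the scalarized objective than $I^*$. This contradicts $I^*$ being a minimizer over $\mathcal{I}_1$, so $I^*$ is Pareto optimal.

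There is no real obstacle. The only point needing care is that $\lambda\in(0,1)$ is exactly what preserves strictness. At $\lambda\in\{0,1\}$ the strict improvement could fall on the agent with zero weight, and the argument would fail. This is consistent with the boundary cases later being handled by sequential optimization.
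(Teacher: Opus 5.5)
Your proof is correct and is essentially the paper's argument. The paper calls this bilateral statement well known and gives no separate proof, but its proof of the multilateral extension, Proposition~\ref{n_min_PO}, is exactly your contradiction: strictly positive weights turn any Pareto improvement into a strict decrease of the weighted sum, with no need for convexity or $(\dagger)$.
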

\noindent
The other direction does not hold in general. Theorem 2.1 in \cite{CAI2017} proposed a characterization of Pareto optimal decisions in the bilateral case, which we shall recall it as Corollary \ref{2_result} below.
It essentially modifies the sufficient condition in Proposition \ref{2_min_PO}, more precisely on the function $K$, for the characterization under the assumptions $(\dagger)$.
The key idea involves three results which are recalled below.

\begin{theorem}\label{2_PO_min}
Let the bilateral setting $(\mathcal{N}_1; \mathcal{I}_1)$
satisfy $(\dagger)$. If $I^*\in\mathcal{I}_1$ is Pareto optimal under the bilateral setting, then  $I^*\in K\left(\lambda\right)$, for some $\lambda\in\left[0,1\right]$.
\end{theorem}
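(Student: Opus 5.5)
The argument is the standard supporting-hyperplane one for convex multiobjective problems, carried out in the two-dimensional objective space. Consider the image set
\[
\mathcal{V}=\left\{\left(\rho_1\left(Y_1(I)\right),\rho_2\left(Y_2(I)\right)\right): I\in\mathcal{I}_1\right\}\subseteq\real^2,
\]
and its upward extension $\mathcal{V}^+=\mathcal{V}+\real^2_+$. Let $v^*=(\rho_1(Y_1^*),\rho_2(Y_2^*))$ be the objective vector of the Pareto optimal $I^*$. The goal is to separate $v^*$ from a suitable open convex set and read off a nonnegative normal vector, normalised to $(\lambda,1-\lambda)$.

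\textbf{Step 1 (convexity of $\mathcal{V}^+$).} Take $I,J\in\mathcal{I}_1$ and $\theta\in[0,1]$. By $(\dagger)$ the set $\mathcal{I}_1$ is convex, so $\theta I+(1-\theta)J\in\mathcal{I}_1$. Lemma~\ref{rhoconvexI} with $n=1$ and $m=2$ gives, for each agent,
\[
\rho_k\left(Y_k\left(\theta I+(1-\theta)J\right)\right)\le \theta\rho_k\left(Y_k(I)\right)+(1-\theta)\rho_k\left(Y_k(J)\right),\qquad k=1,2.
\]
Hence any convex combination of points of $\mathcal{V}$ dominates, coordinatewise from above, a point of $\mathcal{V}$, so it lies in $\mathcal{V}^+$. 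Adding $\real^2_+$ preserves this, so $\mathcal{V}^+$ is convex.

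\textbf{Step 2 (disjointness).} Pareto optimality of $I^*$ means that no point of $\mathcal{V}$ is $\le v^*$ componentwise with at least one strict inequality. It follows that $\mathcal{V}^+\cap\left(v^*-\mathrm{int}\,\real^2_+\right)=\emptyset$: if some $v\in\mathcal{V}$ and $d\in\real^2_+$ satisfied $v+d<v^*$ strictly, then $v<v^*$ strictly, contradicting optimality.

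\textbf{Step 3 (separation).} The set $v^*-\mathrm{int}\,\real^2_+$ is nonempty, open and convex, and Step~2 shows it is disjoint from the convex set $\mathcal{V}^+$. By the finite-dimensional separating hyperplane theorem there exists $\mu\in\real^2\setminus\{0\}$ with
\[
\mu\cdot v\ \ge\ \mu\cdot w\quad\text{for all } v\in\mathcal{V}^+,\ w\in v^*-\mathrm{int}\,\real^2_+ .
\]
Since $\mathcal{V}^+$ is unbounded in the directions of $\real^2_+$, we get $\mu\ge0$: if some $\mu_k<0$, moving $v$ far in the $k$-th coordinate would make $\mu\cdot v$ arbitrarily negative. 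Letting $w\to v^*$ then gives $\mu\cdot v\ge\mu\cdot v^*$ for all $v\in\mathcal{V}$. Set $\lambda=\mu_1/(\mu_1+\mu_2)\in[0,1]$. Because $v^*\in\mathcal{V}$, the point $I^*$ minimises $\lambda\rho_1(Y_1)+(1-\lambda)\rho_2(Y_2)$ over $\mathcal{I}_1$, i.e.\ $I^*\in K(\lambda)$.

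\textbf{Main obstacle.} The only delicate point is Step~1. The objectives are composite maps $I\mapsto\rho_k(Y_k(I))$, and their convexity depends on semi-linearity of $\Pi_1$, which makes the premium term affine in $I$. This is exactly what Lemma~\ref{rhoconvexI} provides, so it is invoked rather than reproved. Strict separation is not needed, because one of the two sets is open. No topology on $\mathcal{I}_1$ is required either, since the separation takes place entirely in $\real^2$.
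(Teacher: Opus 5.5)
Your proof is correct and follows essentially the same supporting-hyperplane argument the paper gives for Theorem~\ref{n_PO_min}, which specializes to this statement with $n=1$, $\mathcal{G}=\mathcal{N}_1$, $\mathcal{C}=\mathcal{I}_1$: pass to objective space, get convexity from Lemma~\ref{rhoconvexI}, separate from the orthant below $v^*$, show the normal is nonnegative, and normalize. The one difference is that you separate the upward closure $\mathcal{V}+\real^2_+$ from the open orthant $v^*-\mathrm{int}\,\real^2_+$, whereas the paper separates $\overline{S}=\mathrm{conv}(S)$ from $T_{\mathbf{y}^{\ast}}$ via relative interiors; your variant is slightly cleaner, since it removes the need for the paper's steps (a)--(b) and for the final transfer of the minimizer from $\overline{S}$ back to $S$.
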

Denote $\mathrm{PO}$ as the set of Pareto optimal decisions. Proposition~\ref{2_min_PO} states that $\cup_{\lambda\in (0,1)} K(\lambda)\subseteq \mathrm{PO}$, while Theorem~\ref{2_PO_min} states that $\mathrm{PO}\subseteq \cup_{\lambda\in [0,1]} K(\lambda)$. These\footnote{Under the assumptions $(\dagger)$, the smaller and the larger sets of decisions than the PO set are respectively known as Geoffrion-proper Pareto optimality and weak Pareto optimality.} are illustrated in Figure~\ref{fig:2_results12}.
First, note that the larger set $\cup_{\lambda\in[0,1]} K(\lambda)$ consists of three parts: $\cup_{\lambda\in(0,1)} K(\lambda)$, $K(0)$, and $K(1)$, as shown in Figure~\ref{fig:2_partition}. Secondly, by collating Proposition~\ref{2_min_PO} and Theorem~\ref{2_PO_min} via this composition, observe that, under $(\dagger)$, Pareto optimal decisions are either in $\cup_{\lambda\in (0,1)} K(\lambda)$, in $K(0)$, or in $K(1)$, as shown in Figure~\ref{fig:2_results12partition}. More precisely, since all decisions in $\cup_{\lambda\in (0,1)} K(\lambda)$ are Pareto optimal, what remains to be characterized are the decisions in the left-hand-side circular band and the right-hand-side circular band in Figure~\ref{fig:2_results12partition}, that would fill up the $\mathrm{PO}$ set. Note that the left band is a subset of $K(0)$, while the right band is a subset of $K(1)$. These observations explain the proposal by \cite{CAI2017} to modify the function $K$ in order to characterize the decisions in these bands, and consequently Pareto optimality under $(\dagger)$.

\begin{figure}[h!]
\centering
\begin{subfigure}[C]{0.45\linewidth}
\centering
\begin{tikzpicture}[thick,fill,scale=0.9]
\draw (0,0) circle (1.3cm);
\node at (0,-0.2) {$ \bigcup\limits_{\lambda\in (0,1)} K(\lambda)$};
\draw (0,0) ellipse (2.5 and 2.2);
\node at (1.4,1.1) {$\mathrm{PO}$};

\begin{scope} \clip (0,0) circle (1.3cm); 
\foreach \x in {-0.8,-0.4,...,1.2} { \draw[gray, dotted] (\x,-1.3) -- (\x,1.4); } 
\end{scope}

\end{tikzpicture}
\caption{\scriptsize{Proposition~\ref{2_min_PO}. }}\label{fig:2_min_PO}
\end{subfigure}%
~
\begin{subfigure}[c]{0.45\textwidth}
\centering
\begin{tikzpicture}[thick,fill,scale=0.9]
\draw[rounded corners=15pt] (0,0) rectangle (6,4.4); 
\draw (3,2.2) ellipse (1.34 and 1.16);
\node at (3,2.2) {$\mathrm{PO}$};
\node at (4.8,3.6) {$\bigcup\limits_{\lambda\in [0,1]} K(\lambda)$};
\end{tikzpicture}
\caption{\scriptsize{Theorem~\ref{2_PO_min} under $(\dagger)$.}}\label{fig:2_PO_min}
\end{subfigure}
\caption{\footnotesize{Illustration of Proposition~\ref{2_min_PO} and Theorem~\ref{2_PO_min}.}}\label{fig:2_results12}
\end{figure}
\begin{figure}[h!]
\centering
\begin{subfigure}[c]{0.45\textwidth}
\centering
\begin{tikzpicture}[thick,fill,scale=0.9]
\draw[rounded corners=15pt] (0,0) rectangle (6,4.4);
\draw (3,0) -- (3,1);
\draw (3,3.4) -- (3,4.4);
\begin{scope}
\clip (0,0) rectangle (3,4.4);
\clip (0,0) rectangle (6,4.4) (3,2.2) circle (1.2cm);
\foreach \x in {0.5,1,...,4} {
    \draw[gray, thin, dashed] (\x,0) -- (\x,4.4);
}
\end{scope}
\begin{scope}
\clip (3,0) rectangle (6,4.4);
\clip (0,0) rectangle (6,4.4) (3,2.2) circle (1.2cm);
\foreach \y in {0.5,1,...,4} {
    \draw[gray, thin, dashed] (3,\y) -- (6,\y);
}
\end{scope}
\draw (3,2.2) circle (1.2cm);
\node at (3,2) {$\bigcup\limits_{\lambda\in (0,1)}K(\lambda)$};
\node at (1,3.7) {$K(0)$};
\node at (5.2,3.7) {$K(1)$};
\begin{scope} \clip (3,2.2) circle (1.2cm); 
\foreach \x in {2.2,2.6,3.0,3.4,3.8} { 
\draw[gray, dotted] (\x,0.8) -- (\x,3.6); 
} 
\end{scope}

\end{tikzpicture}
\caption{ }\label{fig:2_partition}
\end{subfigure}%
~
\begin{subfigure}[c]{0.45\textwidth}
\centering
\begin{tikzpicture}[thick,fill,scale=0.9]
\draw[rounded corners=15pt] (0,0) rectangle (6,4.4);
\draw (3,0) -- (3,1);
\draw (3,3.4) -- (3,4.4);
\draw (3,2.2) circle (1.2cm);
\begin{scope}
\clip (0,0) rectangle (3,4.4);
\clip (0,0) rectangle (6,4.4) (3,2.2) circle (1.2cm);
\foreach \x in {0.5,1,...,4} {
    \draw[gray, thin, dashed] (\x,0) -- (\x,4.4);
}
\end{scope}
\begin{scope}
\clip (3,0) rectangle (6,4.4);
\clip (0,0) rectangle (6,4.4) (3,2.2) circle (1.2cm);
\draw[gray, thin, dashed] (3,0.1) -- (5.8,0.1);
\foreach \y in {0.6,1.1,...,4.4} {
    \draw[gray, thin, dashed] (3,\y) -- (6,\y);
}
\end{scope}
\draw[fill=Mulberry!40, fill opacity=0.3] (3,2.2) ellipse (2.3 and 2);
\node at (3,2) {$\bigcup\limits_{\lambda\in (0,1)}K(\lambda)$};
\node at (1,3.8) {$K(0)$};
\node at (5.2,3.8) {$K(1)$};
\node at (4.3,3.35) {$\mathrm{PO}$};
\begin{scope} \clip (3,2.2) circle (1.2cm); 
\foreach \x in {2.2,2.6,3.0,3.4,3.8} { 
\draw[gray, dotted] (\x,0.8) -- (\x,3.6); 
} 
\end{scope}
\end{tikzpicture}
\caption{ }\label{fig:2_results12partition}
\end{subfigure}
\caption{\scriptsize{(a) Illustrates the composition of $\cup_{\lambda\in [0,1]}K(\lambda)$ in three parts: $\cup_{\lambda\in (0,1)} K(\lambda)$ (the circle with vertical dotted lines),  $K(0)$ (the left-hand portion with vertical dashed lines), and $K(1)$ (the right-hand portion with horizontal dashed lines). (b) Illustration of Proposition~\ref{2_min_PO} and Theorem~\ref{2_PO_min}, under $(\dagger)$, via the composition in (a), where the purple shaded circle is the $\mathrm{PO}$ set.}}\label{fig:2_together12}
\end{figure}

Define the set-valued function $K^*:[0,1] \rightarrow 2^{\mathcal I_1}$ by, for $\lambda\in[0,1]$,
\begin{equation}
\begin{aligned}
K^*\left(\lambda\right)=&\;K\left(\lambda\right),\text{ if }\lambda\in\left(0,1\right),\\
K^*\left(0\right)=\argmin_{I\in K\left(0\right)}\rho_1(Y_1)&,\text{ and }K^*\left(1\right)=\argmin_{I\in K\left(1\right)}\rho_2(Y_2).
\end{aligned}
\label{eq:function_K_star}
\end{equation}

\noindent
While $K^*(\lambda)$ coincides with $K(\lambda)$, for $\lambda\in(0,1)$, at the boundary cases when $\lambda=0$ and $\lambda=1$, $K^*$ is defined through sequential optimization. That is, for $\lambda=0$, the objective of the insurer is first optimized, and then, among the decisions that are optimal for the insurer, the objective of the policyholder is further optimized. For $\lambda=1$, the sequence is opposite, the objective of the policyholder is first optimized, and then, among the decisions that are optimal for the policyholder, the objective of the insurer is further optimized.

\begin{theorem}\label{2_K^*_PO}
Consider the bilateral setting $(\mathcal{N}_1; \mathcal{I}_1)$. If $I^*\in K^*\left(\lambda\right)$, for some $\lambda\in\left[0,1\right]$, then $I^*\in\mathcal{I}_1$ is Pareto optimal under the bilateral setting.
\end{theorem}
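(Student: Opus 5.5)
The plan is to split into the interior case $\lambda\in(0,1)$ and the two boundary cases $\lambda\in\{0,1\}$, and to argue by contradiction in each boundary case. No convexity assumption is needed: the statement does not invoke $(\dagger)$, and the argument only uses the definitions of $K$, $K^*$, and Pareto optimality in Definition~\ref{def:PO}.

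\emph{Interior case.} For $\lambda\in(0,1)$ we have $K^*(\lambda)=K(\lambda)$ by \eqref{eq:function_K_star}, so the claim is exactly Proposition~\ref{2_min_PO}. If one wants it self-contained: suppose some $I\in\mathcal{I}_1$ weakly improves both objectives with one strict inequality. Multiplying by the strictly positive weights $\lambda$ and $1-\lambda$ and adding gives a strictly smaller weighted sum, which contradicts $I^*\in K(\lambda)$.

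\emph{Case $\lambda=0$.} Here $K(0)=\argmin_{I\in\mathcal{I}_1}\rho_2(Y_2)$ and $K^*(0)=\argmin_{I\in K(0)}\rho_1(Y_1)$. Suppose $I^*\in K^*(0)$ is not Pareto optimal. Then there is $I\in\mathcal{I}_1$ with $\rho_1(Y_1(I))\le\rho_1(Y_1^*)$ and $\rho_2(Y_2(I))\le\rho_2(Y_2^*)$, at least one strict.
\begin{itemize}
\item Since $I^*\in K(0)$, $\rho_2(Y_2^*)$ is the minimum of $\rho_2$ over $\mathcal{I}_1$. Hence $\rho_2(Y_2(I))\le\rho_2(Y_2^*)$ forces equality, so $I\in K(0)$ and the $\rho_2$ inequality cannot be the strict one.
\item The strict inequality must therefore be $\rho_1(Y_1(I))<\rho_1(Y_1^*)$ with $I\in K(0)$. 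This contradicts $I^*$ minimizing $\rho_1$ over $K(0)$.
\end{itemize}

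\emph{Case $\lambda=1$.} This is symmetric, with the roles of $\rho_1$ and $\rho_2$ swapped: $K(1)$ is the set of minimizers of $\rho_1$, and $K^*(1)$ minimizes $\rho_2$ over it.

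The only delicate point is the first step of the boundary case, namely that a weakly improving $I$ must itself lie in $K(0)$ (respectively $K(1)$), so that the second-stage optimality applies. This follows directly from minimality of the first-stage value. I also implicitly assume that $K^*(\lambda)$ is nonempty whenever it contains $I^*$, which holds trivially.
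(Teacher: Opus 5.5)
Your proof is correct and follows the paper's approach: the paper obtains this statement as the case $n=1$ of Theorem~\ref{n_K^*_PO}, under the identification of $K^*(0)$, $K^*(1)$ and $K^*(\lambda)$ with the matrix classes in $\mathcal{A}_1$. That theorem's earliest-stage argument is exactly your boundary-case reasoning (a weakly improving decision must match the first-stage optimum and so stays in the first-stage minimizer set, where a strict improvement for the second agent contradicts second-stage optimality), while the interior case is Proposition~\ref{2_min_PO}, and, as you note, no $(\dagger)$ is needed.
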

\noindent
Theorem \ref{2_K^*_PO} generalizes Proposition~\ref{2_min_PO}, by including the boundary cases when $\lambda=0$ and $\lambda=1$. That is, not only are the decisions that minimize a strict convex combination of the policyholder's and insurer's objectives in one go Pareto optimal, but those obtained through sequential optimization are also Pareto optimal.

The following theorem states the other direction of Theorem \ref{2_K^*_PO}, under the assumptions $(\dagger)$.
\begin{theorem}\label{2_PO_K^*}
Let the bilateral setting $(\mathcal{N}_1; \mathcal{I}_1)$ satisfy $(\dagger)$. If $I^*\in\mathcal{I}_1$ is Pareto optimal under the bilateral setting, then $I^*\in K^*\left(\lambda\right)$, for some $\lambda\in\left[0,1\right]$. 
\end{theorem}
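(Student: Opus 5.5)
Our plan is to combine Theorem~\ref{2_PO_min} with a case split on the value of $\lambda$ it provides. Let $I^*\in\mathcal{I}_1$ be Pareto optimal under the bilateral setting satisfying $(\dagger)$. By Theorem~\ref{2_PO_min}, there exists $\lambda\in[0,1]$ with $I^*\in K(\lambda)$. We then treat the three parts of the partition shown in Figure~\ref{fig:2_partition} separately: $\lambda\in(0,1)$, $\lambda=0$, and $\lambda=1$.

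\textbf{Interior case.} If $\lambda\in(0,1)$, the conclusion is immediate from the definition in \eqref{eq:function_K_star}. Since $K^*(\lambda)=K(\lambda)$ for such $\lambda$, we get $I^*\in K^*(\lambda)$.

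\textbf{Boundary case $\lambda=0$.} Here $I^*\in K(0)=\argmin_{I\in\mathcal{I}_1}\rho_2(Y_2)$. We must show that $I^*\in\argmin_{I\in K(0)}\rho_1(Y_1)$, and we argue by contradiction. Suppose some $I\in K(0)$ satisfies $\rho_1(Y_1(I))<\rho_1(Y_1(I^*))$. Because both $I$ and $I^*$ lie in $K(0)$, they attain the same minimal insurer value, so $\rho_2(Y_2(I))=\rho_2(Y_2(I^*))$. Thus $I$ weakly improves the insurer and strictly improves the policyholder. This contradicts the Pareto optimality of $I^*$ in Definition~\ref{def:PO}, with $\mathcal{G}=\mathcal{N}_1$ and $\mathcal{C}=\mathcal{I}_1$. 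Hence $I^*\in K^*(0)$.

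\textbf{Boundary case $\lambda=1$.} This is symmetric: we swap the roles of $\rho_1$ and $\rho_2$ and conclude that $I^*\in K^*(1)$.

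The only substantive step is supplying the multiplier $\lambda$. This is the supporting-hyperplane argument behind Theorem~\ref{2_PO_min}, which relies on $(\dagger)$ through Lemma~\ref{rhoconvexI}: that lemma makes the attainable objective set convex up to upward closure, so it can be separated at $(\rho_1(Y_1^*),\rho_2(Y_2^*))$ with nonnegative weights. Since Theorem~\ref{2_PO_min} may be assumed, we expect no real obstacle. The one point needing care is that membership in $K(0)$ fixes $\rho_2$ at its minimum. It is this fact that turns the lexicographic refinement into a direct Pareto comparison.
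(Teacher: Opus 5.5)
Your proposal is correct and follows the paper's own argument. The paper also invokes Theorem~\ref{2_PO_min} and splits on whether the weight is interior or at the boundary. For the boundary case it uses Proposition~\ref{lemma:subset_PO}: Pareto optimality for the remaining agent within $\argmin_{I\in\mathcal{I}_1}\rho_{i_0}(Y_{i_0})$, which for a single agent is simply minimization. That is exactly the contradiction you write out directly, and as in the paper, $(\dagger)$ enters only through Theorem~\ref{2_PO_min}.
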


Combining Theorems~\ref{2_K^*_PO} and \ref{2_PO_K^*}, \cite{CAI2017} showed the characterization of Pareto optimal decisions through the function $K^*$ under the bilateral setting.
\begin{corollary}[Theorem 2.1 of \cite{CAI2017}]\label{2_result}
Let the bilateral setting $(\mathcal{N}_1; \mathcal{I}_1)$ satisfy $(\dagger)$. Then, $I^*\in\mathcal{I}_1$ is Pareto optimal under the bilateral setting, if and only if $I^*\in K^*\left(\lambda\right)$, for some $\lambda\in\left[0,1\right]$.
\end{corollary}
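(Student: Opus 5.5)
The corollary follows by combining the two directions already recorded. I would argue each inclusion separately and then state the equivalence.

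\emph{Sufficiency.} Suppose $I^*\in K^*(\lambda)$ for some $\lambda\in[0,1]$. Theorem~\ref{2_K^*_PO} applies directly, without needing $(\dagger)$, and gives that $I^*$ is Pareto optimal. For completeness, the reasoning behind that theorem splits by cases.
\begin{itemize}
\item If $\lambda\in(0,1)$, then $K^*(\lambda)=K(\lambda)$, and Proposition~\ref{2_min_PO} gives Pareto optimality. The argument is that a dominating $I$ would strictly lower $\lambda\rho_1+(1-\lambda)\rho_2$, because both weights are positive.
\item If $\lambda=0$, suppose some $I$ dominates $I^*$. Then $\rho_2(Y_2(I))\le\rho_2(Y_2(I^*))=\min_{\mathcal{I}_1}\rho_2$, so $I\in K(0)$. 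On the other hand, $\rho_1(Y_1(I))\le\rho_1(Y_1(I^*))$, and $I^*$ minimises $\rho_1$ over $K(0)$, so equality holds there too. Hence neither inequality is strict, which is a contradiction.
\item The case $\lambda=1$ is symmetric, with the roles of the two agents exchanged.
\end{itemize}

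\emph{Necessity.} Under $(\dagger)$, let $I^*$ be Pareto optimal. Theorem~\ref{2_PO_K^*} gives $I^*\in K^*(\lambda)$ for some $\lambda$. Behind this, Theorem~\ref{2_PO_min} first places $I^*$ in some $K(\lambda)$. That step is the supporting-hyperplane argument, which uses the convexity of the objectives on $\mathcal{I}_1$ from Lemma~\ref{rhoconvexI}. It shows that the set $\{(\rho_1(Y_1(I)),\rho_2(Y_2(I))):I\in\mathcal{I}_1\}+\mathbb{R}^2_+$ is convex, and separating the Pareto point from the open lower-left orthant gives nonnegative weights. Then:
\begin{itemize}
\item If $\lambda\in(0,1)$, we are done, since $K^*(\lambda)=K(\lambda)$.
\item If $\lambda=0$, then $I^*\in K(0)$. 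Were $I^*$ not in $\argmin_{K(0)}\rho_1$, some $I\in K(0)$ would have equal $\rho_2$ and strictly smaller $\rho_1$, contradicting Pareto optimality.
\item The case $\lambda=1$ is symmetric.
\end{itemize}

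Taken together, the two directions give the equivalence. The only substantive step is the convex separation in Theorem~\ref{2_PO_min}, which is where $(\dagger)$ is needed. Since that theorem is available, the main obstacle is handled by citation.
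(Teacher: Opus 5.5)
Your proposal is correct and follows the paper's route: the corollary is obtained by combining Theorem~\ref{2_K^*_PO} (sufficiency, no $(\dagger)$ needed) with Theorem~\ref{2_PO_K^*} (necessity under $(\dagger)$). Your boundary-case argument for necessity is exactly the bilateral instance of Proposition~\ref{lemma:subset_PO} that the paper uses, and your scalarization sketch via convexity of $S+\mathbb{R}^2_+$ is an equivalent variant of the paper's convex-hull domination argument in Theorem~\ref{n_PO_min}.
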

\begin{figure}[h!]
\centering
\begin{tikzpicture}[thick,fill,scale=0.9]
\draw (0,0) ellipse (2.3 and 2);
\node at (0,-0.25) {$\mathrm{PO} = \bigcup\limits_{\lambda\in [0,1]} K^*(\lambda)  $};
\end{tikzpicture}
\caption{\footnotesize{Characterization of Pareto optimality, when $n=1$, under $(\dagger)$, as per Corollary~\ref{2_result}.}}
\label{fig:2_PO_equal_K^*}
\end{figure}
\begin{figure}[h!]
\centering
\begin{subfigure}[c]{0.45\textwidth}
\centering
\begin{tikzpicture}[thick,fill,scale=0.9]
\draw (3,0.2) -- (3,1);
\draw (3,3.4) -- (3,4.2);
\draw (3,2.2) circle (1.2cm);
\draw (3,2.2) ellipse (2.3 and 2);
\begin{scope}
\clip (3,2.2) ellipse (2.3cm and 2cm); 
\clip (-10,-10) rectangle (10,10) (3,2.2) circle (1.2cm); 
\clip (0,0) rectangle (3,4.4); 
\foreach \y in {-4,-3.5,...,8} {
    \draw[gray, dotted] (-1,\y) -- (4,\y+3);
}
\end{scope}
\begin{scope}
\clip (3,2.2) ellipse (2.3cm and 2cm); 
\clip (-10,-10) rectangle (10,10) (3,2.2) circle (1.2cm); 
\clip (3,0) rectangle (6,4.4); 
\foreach \y in {-4,-3.5,...,8} {
    \draw[gray, dotted] (7,\y) -- (2,\y+3);
}
\end{scope}

\begin{scope} \clip (3,2.2) circle (1.2cm); 
\foreach \x in {2.2,2.6,3.0,3.4,3.8} { 
\draw[gray, dotted] (\x,0.8) -- (\x,3.6); 
} 
\end{scope}

 \node at (3,2) {$\bigcup\limits_{\lambda\in (0,1)}K(\lambda)$};
\node at (1.25,2.2) {\footnotesize{$K^*(0)$}};
\node at (4.75,2.2) {\footnotesize{$K^*(1)$}};
\end{tikzpicture}
\caption{ }\label{fig:2_partition*}
\end{subfigure}%
~
\begin{subfigure}[c]{0.45\textwidth}
\centering
\begin{tikzpicture}[thick,fill,scale=0.9]
\draw[rounded corners=15pt] (0,0) rectangle (6,4.4);
\draw (3,0) -- (3,1);
\draw (3,3.4) -- (3,4.4);
\draw (3,2.2) circle (1.2cm);

\begin{scope}
\clip (3,2.2) ellipse (2.3cm and 2cm); 
\clip (-10,-10) rectangle (10,10) (3,2.2) circle (1.2cm); 
\clip (0,0) rectangle (3,4.4); 
\foreach \y in {-4,-3.5,...,8} {
    \draw[gray, dotted] (-1,\y) -- (4,\y+3);
}
\end{scope}
\begin{scope}
\clip (3,2.2) ellipse (2.3cm and 2cm); 
\clip (-10,-10) rectangle (10,10) (3,2.2) circle (1.2cm); 
\clip (3,0) rectangle (6,4.4); 
\foreach \y in {-4,-3.5,...,8} {
    \draw[gray, dotted] (7,\y) -- (2,\y+3);
}
\end{scope}
\begin{scope}
\clip (0,0) rectangle (3,4.4);
\clip (0,0) rectangle (6,4.4) (3,2.2) circle (1.2cm);
\foreach \x in {0.5,1,...,4} {
    \draw[gray, thin, dashed] (\x,0) -- (\x,4.4);
}
\end{scope}
\begin{scope}
\clip (3,0) rectangle (6,4.4);
\clip (0,0) rectangle (6,4.4) (3,2.2) circle (1.2cm);
\draw[gray, thin, dashed] (3,0.1) -- (5.8,0.1);
\foreach \y in {0.6,1.1,...,4.4} {
    \draw[gray, thin, dashed] (3,\y) -- (6,\y);
}
\end{scope}

\begin{scope} \clip (3,2.2) circle (1.2cm); 
\foreach \x in {2.2,2.6,3.0,3.4,3.8} { 
\draw[gray, dotted] (\x,0.8) -- (\x,3.6); 
} 
\end{scope}
\draw[fill=Mulberry!40, fill opacity=0.3] (3,2.2) ellipse (2.3 and 2);
\node at (3,2) {$\bigcup\limits_{\lambda\in (0,1)}K(\lambda)$};
\node at (0.8,3.8) {$K(0)$};
\node at (5.2,3.8) {$K(1)$};
\node at (1.25,2.2) {\footnotesize{$K^*(0)$}};
\node at (4.75,2.2) {\footnotesize{$K^*(1)$}};
\node at (4.3,3.2) {$\mathrm{PO}$};
\end{tikzpicture}
\caption{ }\label{fig:2_results12*}
\end{subfigure}
\caption{\scriptsize{(a) Shows the composition of $\cup_{\lambda\in [0,1]}K^*(\lambda)$ in three parts: $\cup_{\lambda\in (0,1)} K(\lambda)$ (the circle with vertical dotted lines),  $K^*(0)$ (the left-hand-side circular band with upward
dotted lines), and $K^*(1)$ (the right-hand-side circular band with downward diagonal dotted lines). (b) Illustration of Corollary~\ref{2_result}  within the composition of $\cup_{\lambda\in [0,1]}K(\lambda)$, under $(\dagger)$, where the purple shaded circle is the $\mathrm{PO}$ set.}}\label{fig:2_all}
\end{figure}

\noindent
This characterization for the bilateral case entails that, under $(\dagger)$, all Pareto optimal decision functions can be solved by either, (i) minimizing the time-$0$ risk positions of both agents in terms of their strict convex combination, or (ii) a bi-level optimization, first minimizing the time-$0$ risk position of only one of the agents, and then, among those decision functions which are optimal for that agent, minimizing the time-$0$ risk position of the other agent. That is, under ($\dagger$), $\mathrm{PO}=\cup_{\lambda\in [0,1]} K^*(\lambda)$; see Figure~\ref{fig:2_PO_equal_K^*} for illustration. Precisely, this characterization can be interpreted by viewing $\mathrm{PO}=\cup_{\lambda\in [0,1]} K^*(\lambda)$ in three parts: $\cup_{\lambda\in (0,1)} K^*(\lambda)=\cup_{\lambda\in (0,1)} K(\lambda)$, $K^*(0)$, and $K^*(1)$, as shown in Figure~\ref{fig:2_partition*}. We show each part separately to illustrate the three different paths to obtain Pareto optimality; however, note these sets may intersect. The first part corresponds to (i) minimizing the strict convex combination, while the second and third parts, $K^*(0)$ and $K^*(1)$, correspond to (ii) the bi-level optimizations. Note that $K^*(0) \subseteq K(0)$ and $K^*(1)\subseteq K(1)$, and thus this characterization identifies the Pareto optimal decisions in the circular bands within $K(0)$ and $K(1)$ of Figure~\ref{fig:2_results12partition} by $K^*(0)$ and $K^*(1)$, respectively. To summarize, we illustrate the characterization of $\mathrm{PO}$ as the composition of these three parts via the function $K^*$ in Figure~\ref{fig:2_results12*}. 

\subsubsection{Encoding Bilateral Notation via Sequential Matrices}
In the following, we propose a new  
notation and redefine the functions to reflect more clearly the role of sequential optimization in characterizing Pareto optimality in the bilateral case ($n=1$). These will, upon extension, be instrumental under the multilateral setting for the remainder of this paper. 
First, for any feasible subset $\mathcal{C}\subseteq \mathcal{I}_1$, define the set-valued function $K\left(\cdot;\mathcal{C}\right):[0,1]^2 \rightarrow 2^{\mathcal I_1}$ by, for $\bm{\lambda} = (\lambda_1, \lambda_2)^T \in [0,1]^2$ with $\lambda_1 + \lambda_2 = 1$,
\begin{equation}\label{eq:K_lambda_C_1}
K\left({\bm\lambda};\mathcal{C}\right)=\argmin_{I\in\mathcal{C}}\left\lbrace\lambda_1\rho_1\left(Y_1\right)+\lambda_2\rho_2\left(Y_2\right)\right\rbrace\, ;
\end{equation}
if $\mathcal{C} = \mathcal{I}_1$, we simply write $K\left({\bm\lambda};\mathcal{I}_1\right)$ as $K\left({\bm\lambda}\right)$, which coincides with the function $K$ in \eqref{eq:function_K}.

Next, define the following set of $2\times 2$ matrices:
\begin{equation}\label{eq:A1}
\mathcal{A}_1 =  \left\{ 
  \begin{pmatrix}
\lambda_1 & 0\\
\lambda_2 & 0 
\end{pmatrix}\text{ such that }\lambda_1, \lambda_2 \in (0,1)\text{ with } \lambda_1+\lambda_2=1 \,, \, 
 \begin{pmatrix}
1 & 0\\
0 & 1 
\end{pmatrix} ,  
  \begin{pmatrix}
0 & 1\\
1 & 0 
\end{pmatrix}     \right\} \, .
\end{equation}
Although this set contains infinitely many matrices, they fall into three distinct classes. More precisely, we write $\mathcal{A}_1 = \mathcal{A}_1^{(1)}\cup \mathcal{A}_1^{(2)}\cup\mathcal{A}_1^{(3)} $, where
\begin{equation}
\mathcal{A}_1^{(1)} = \left\{ \begin{pmatrix}
\lambda_1 & 0\\
\lambda_2 & 0 
\end{pmatrix}:\;\lambda_1, \lambda_2 \in (0,1),\; \lambda_1+\lambda_2=1\, \right\}\;,
\label{eq:A_1_1}
\end{equation}
\begin{equation}
\mathcal{A}_1^{(2)} = \left\{ \begin{pmatrix}
1 & 0\\
0 & 1 
\end{pmatrix}\right\}\,, \text{ and } \, \mathcal{A}_1^{(3)} =  \left\{\begin{pmatrix}
0 & 1\\
1 & 0 
\end{pmatrix}\right\}\, .
\label{eq:A_1}
\end{equation}
Any $\Lambda \in \mathcal{A}_1^{(1)}$ is a representative of this class. Since each of the sets $\mathcal{A}_1^{(2)}$ and $\mathcal{A}_1^{(3)}$ is a singleton, its unique element is the representative of the corresponding class. 
We introduce these matrices so that each class corresponds to a specific sequential optimization, when the matrices in that class are read 
by columns from left to right. For example, a matrix in $\mathcal{A}_1^{(1)}$ indicates that we optimize a strict convex combination of both agents' risks, with the positive weights given in the first column; since the second column is the zero column, there is no further optimization step. The matrix in $\mathcal{A}_1^{(2)}$ indicates that we first optimize the first agent's objective and then the second agent's objective, while the matrix in $\mathcal{A}_1^{(3)}$ reverses this order, so the second agent's objective is optimized first, followed by the first agent's objective.

We now associate these matrices with their corresponding sequential optimizations as follows. 
For each matrix $\Lambda\in \mathcal{A}_1$, define recursively:
\begin{equation}\label{eq:function_K_star_recursion_n_2}
K^{*\left(1\right)}\left(\Lambda[\cdot,1]\right)=K\left(\Lambda[\cdot,1];\mathcal{I}_1\right)=K\left(\Lambda[\cdot,1]\right),
\end{equation}
\begin{equation}\label{eq:function_K_star_recursion_n_1}
K^{*\left(2\right)}\left(\Lambda[\cdot,1:2]\right)=K\left(\Lambda[\cdot, 2];K^{*\left(1\right)}\left(\Lambda[\cdot,1:1]\right)\right),\text{ if $n(\Lambda)=2$},
\end{equation}
where $K$ is defined in \eqref{eq:K_lambda_C_1}, $n(\Lambda)\in\{1,2\}$ denotes the number of non-zero columns of $\Lambda$, $\Lambda[\cdot,j]$ denotes the $j$-th column of $\Lambda $, and $\Lambda[\cdot,1:j]$ denotes the sub-matrix consisting of the first $j$ columns of $\Lambda$, for $j=1,2$; note that, $\Lambda[\cdot, 1:2]=\Lambda$ and $\Lambda[\cdot, 1:1]=\Lambda[\cdot, 1]$. By the definition of $K^{*(1)}$, the objective(s) of the agent(s) are optimized over all feasible decisions according to the weights in the first column of a matrix $\Lambda\in\mathcal{A}_1$, as indicated by the superscript $(1)$. By the definition of $K^{*(2)}$, if the second column of $\Lambda$ is non-zero, then over the decisions that are optimal for an agent from the first optimization step, the objective of another agent is further optimized, 
as indicated by the superscript $(2)$.

Finally, define the set-valued function $K^*:\mathcal{A}_1 \rightarrow 2^{\mathcal I_1}$ by, for $\Lambda\in \mathcal{A}_1$,
\begin{equation}\label{eq:function_K_star_1}
K^*\left(\Lambda\right)=K^{*\left(n\left(\Lambda\right)\right)}\left(\Lambda[\cdot,1:n\left(\Lambda\right)]\right)\, ,
\end{equation}
where $K^{*\left(j\right)}$, for $j=1,2$, is defined recursively by \eqref{eq:function_K_star_recursion_n_2} and \eqref{eq:function_K_star_recursion_n_1}.

The following shows that the set-valued functions $K^*$ defined by \eqref{eq:function_K_star} and \eqref{eq:function_K_star_1} are equivalent by examining each of the three representative matrices in $\mathcal{A}_1$.
\begin{itemize}
\item
For $\Lambda \in \mathcal{A}_1^{(1)}$, $n(\Lambda) = 1$,  $\Lambda[\cdot ,1] = (\lambda_1,\lambda_2)^T$ for some $\lambda_1, \lambda_2\in (0,1)$ such that $\lambda_1+\lambda_2=1$, 
\[
K^{\ast} (\Lambda) = K^{\ast (1)} ((\lambda_1,\lambda_2)^T) = K((\lambda_1, \lambda_2)^T; \mathcal{I}_1)  = \argmin_{I\in \mathcal{I}_1} \left\lbrace\lambda_1\rho_1(Y_1) +  \lambda_2\rho_2(Y_2) \right\rbrace \, ,
\]
which corresponds to $K^*(\lambda)=K(\lambda)$, when $\lambda \in (0,1)$, in \eqref{eq:function_K_star}. 
\item 
For $\Lambda \in \mathcal{A}_1^{(2)}$, $n(\Lambda) = 2$,  $\Lambda[\cdot ,1] = (1,0)^T$, $\Lambda[\cdot ,2] = (0,1)^T$,
\[
K^{\ast(1)} ((1,0)^T) = K((1,0)^T; \mathcal{I}_1) = \argmin_{I\in \mathcal{I}_1}\rho_1(Y_1)\,, \text{ and}
\]
\[
K^{\ast} (\Lambda) = K^{\ast (2)}(\Lambda[\cdot ,1:2]) = K((0,1)^T; K^{\ast(1)} ((1,0)^T)) = \argmin_{I\in K^{\ast(1)} ((1,0)^T)} \rho_2(Y_2)\,,
\]
which corresponds to $K^*(1)$ in \eqref{eq:function_K_star}.
\item 
For $\Lambda \in \mathcal{A}_1^{(3)}$, $n(\Lambda) = 2$, $\Lambda[\cdot ,1] = (0,1)^T$, $\Lambda[\cdot ,2] = (1,0)^T$,
\[
K^{\ast(1)} ((0,1)^T) = K((0,1)^T; \mathcal{I}_1) = \argmin_{I\in \mathcal{I}_1}\rho_2(Y_2)\,, \text{ and}
\] 
\[
K^{\ast} (\Lambda) = K^{\ast (2)}(\Lambda[\cdot ,1:2]) = K((1,0)^T; K^{\ast(1)} ((0,1)^T)) = \argmin_{I\in K^{\ast(1)} ((0,1)^T)} \rho_1(Y_1)\,,
\]
which corresponds to $K^*(0)$ in \eqref{eq:function_K_star}.
\end{itemize}

Our notation and the generalized function $K$ make the sequential optimizations explicit through the functions $K^{*(j)}$, for $j=1,2$, which recursively record the decisions after each stage of the optimization procedure according to a matrix in $\mathcal{A}_1$.
Finally, we note that although $\Lambda[\cdot,1:2]=\Lambda$ and $\Lambda[\cdot,1:1]=\Lambda[\cdot,1]$, we intentionally use the longer notation to emphasize the recursive structure in \eqref{eq:function_K_star_recursion_n_1}; 
this will facilitate the extension to the multilateral case in the next section.


\subsection{Multilateral Centralized Risk Sharing}\label{sec:Multilateral}
This section discusses the multilateral setting ($n\in\mathbb{N}$), described by the tuple $(\mathcal{N}_n;\mathcal{I}_n)$. When $n\in\mathbb{N}\backslash\{1\}$, centralization becomes relevant, since there is more than one policyholder transferring risk to the central agent. We study Pareto optimal decisions
$\mathbf{I}^* \in\mathcal{I}_n$ by extending the results of Section \ref{sec:bilateral} 
as far as possible through out generalization of the functions $K$ and $K^*$ that capture sequential optimization.


Define the following sets of weights for convex and strict convex combinations:
$$\Delta_n[0,1]=\left\{{\bm\lambda}=\left(\lambda_1,\lambda_2,\dots, \lambda_{n+1}\right)^T\in\left[0,1\right]^{n+1}:\lambda_1+\lambda_2+\cdots +\lambda_{n+1}=1\right\}\, ,$$  
$$ \Delta_n(0,1)=\left\{{\bm\lambda}=\left(\lambda_1,\lambda_2,\dots, \lambda_{n+1}\right)^T\in\left(0,1\right)^{n+1}:\lambda_1+\lambda_2+\cdots +\lambda_{n+1}=1\right\}\, .$$
More generally, for $\bm\lambda\in\mathbb{R}^{n+1}$, let $\nzp(\bm\lambda)\subseteq\mathcal{N}_n$ denote the set of indices corresponding to the non-zero entries of $\bm\lambda$. Let $\mathcal{M}_{n+1}$ denote the set of square matrices of order $n+1$; for $\Lambda\in\mathcal{M}_{n+1}$, write $\Lambda=(\lambda_i^{(j)})_{i,j=1}^{n+1}$, where $i$ and $j$ denote the row and column indices, respectively; also, for $i,j=1,2,\dots,n+1$, let $\Lambda[i,\cdot]$, $\Lambda[\cdot,j]$, and $\Lambda[\cdot,1:j]$ denote, respectively, the $i$-th row of $\Lambda$, the $j$-th column of $\Lambda$, and the sub-matrix consisting of the first $j$ columns of $\Lambda$. Let $\mathcal{M}_{n+1}([0,1];0)\subseteq\mathcal{M}_{n+1}$ be the set of all non-zero matrices whose entries lie in $[0,1]$ and whose zero columns, if any, appear only as the rightmost columns. 
For $\Lambda\in\mathcal{M}_{n+1}([0,1];0)$, let $n(\Lambda)\in\{1,2,\dots,n+1\}$ denote the number of non-zero columns of $\Lambda$. Moreover, every
$\Lambda\in \mathcal{M}_{n+1}([0,1]; 0)$ can be written in terms of its columns as, $\Lambda=[\bm\lambda^{(1)};\bm\lambda^{(2)}; \dots ; \bm\lambda^{(n(\Lambda))};\mathbf{0};\dots; \mathbf{0}]$, where $\mathbf{0}$ denotes the zero vector.

Below, we define a set of square matrices that covers all sequential optimizations for the multilateral centralized risk sharing problem.
\begin{definition}
Define the set of sequential matrices $\mathcal{A}_n$ by:
\begin{align}\label{A_n}
\mathcal{A}_n   = \biggr\{   & \nonumber  \Lambda = (\lambda_{i}^{(j)})_{i,j=1}^{n+1}=[\bm\lambda^{(1)}; \bm\lambda^{(2)};\dots ; \bm\lambda^{(n(\Lambda))}; \mathbf{0}; \dots ; \mathbf{0}]    \in \mathcal{M}_{n+1} ([0,1]; 0): \\ 
&(\text{a})\;\Lambda[\cdot, j]=\bm\lambda^{(j)} \in \Delta_{n}[0,1],   \text{ for } j=1,2,\dots, n(\Lambda) \,, \text{ and,} \\
&(\text{b})\;\text{for all }i\in\mathcal{N}_n, \,   \nzp(\Lambda[i,\cdot]^T)  = \{j\}, \text{ for some } j\in\{1,2,\dots, n(\Lambda)\}\biggr\}\,.\nonumber
\end{align}
For any $\Lambda =  (\lambda_{i}^{(j)})_{i,j=1}^{n+1}\in \mathcal{A}_n$, the entry $\lambda_i^{(j)}$
is the weight associated with the agent $i$ at the $j$-th level of the $n(\Lambda)$ sequential optimizations.\qed
\end{definition}
\noindent
The definition of $\mathcal{A}_n$ naturally generalizes the matrices in $\mathcal{A}_1$ of \eqref{eq:A1} covering all possible sequential optimizations when $n\in\mathbb{N}$.
Fix a matrix $\Lambda = (\lambda_{i}^{(j)})_{i,j=1}^{n+1} = [\bm\lambda^{(1)}; \bm\lambda^{(2)}; \dots ; \bm\lambda^{(n(\Lambda))}; \mathbf{0}; \dots ; \mathbf{0}] \in \mathcal{A}_n$. The matrix $\Lambda$ has between one and $n+1$ non-zero columns, all of which are leftmost. It defines a sequential optimization procedure with $n(\Lambda)\in\{1,2,\dots,n+1\}$ stages, in which $\lambda_i^{(j)}$ is the weight assigned to the agent $i$ at stage $j$.
The first condition in the definition of $\mathcal{A}_n$ specifies, through the positive weights, the agents whose risks are optimized at each stage; the positive weights in each non-zero column are normalized to sum to one. The second condition states that each agent is assigned exactly one positive weight across all stages, and this weight may appear at any stage of the sequential optimization procedure. Together, these two conditions in $\mathcal{A}_n$ generate all possible partitions of the agents, as noted below.

\begin{remark}\label{rem:Anpartition}
Given a matrix 
$\Lambda = 
[\bm\lambda^{(1)}; \bm\lambda^{(2)}; \dots ; \bm\lambda^{(n(\Lambda))}; \mathbf{0}; \dots ; \mathbf{0}] \in \mathcal{A}_n$, the sets defined by $\mathcal{G}_j = \nzp(\bm{\lambda}^{(j)})$, for $j=1,2,\dots,n(\Lambda)$, form a partition of $\mathcal{N}_n$;
that is: (i) $\mathcal{G}_j\neq \emptyset $, for all $j=1,2,\dots, n(\Lambda)$, and (ii) $\Nn = \cup_{j=1}^{n(\Lambda)}\mathcal{G}_j $, such that $\mathcal{G}_j\cap \mathcal{G}_k=\emptyset$, for all $j,k=1,2,\dots , n(\Lambda)$ with $j\neq k$. The detailed arguments are in Appendix~\ref{ap:Anpartition}.
%
\end{remark}

To define the sets of optimal decisions arising from sequential optimization, we next generalize the functions $K$, $K^{*(j)}$, and $K^*$ defined in \eqref{eq:K_lambda_C_1}, \eqref{eq:function_K_star_recursion_n_2}, \eqref{eq:function_K_star_recursion_n_1}, and \eqref{eq:function_K_star_1} to any $n\in\mathbb{N}$.
\begin{definition}\label{def:K*_n}
Given the full setting $(\Nn;\In)$ for $n\in\mathbb{N}$ and any feasible subset $\mathcal{C}\subseteq\mathcal{I}_n$, define for any ${\bm\lambda}\in\Delta_n[0,1]$, the set of minimizers
\begin{equation}\label{eq:K_lambda_c}
K\left({\bm\lambda};\mathcal{C}\right)
=\argmin_{\I=\left(I_1,I_2,\dots, I_n\right)\in\mathcal{C}}\left\lbrace\lambda_1\rho_1\left(Y_1\right)+\lambda_2\rho_2\left(Y_2\right)+\cdots + \lambda_{n+1}\rho_{n+1}\left(Y_{n+1}\right)\right\rbrace\,.
\end{equation}
If $\mathcal{C}=\mathcal{I}_n$, we simply write $K\left({\bm\lambda};\mathcal{I}_n\right)$ as $K(\bm\lambda)$. For any $\Lambda\in\mathcal{A}_n$, define recursively:
\begin{equation}\label{eq:K_star_recursion_1}
K^{*\left(1\right)}\left(\Lambda[\cdot,1]\right)
=K\left(\Lambda[\cdot,1];\mathcal{I}_n\right)
=K\left(\Lambda[\cdot,1]\right)\, ,
\end{equation}
\begin{equation}\label{eq:K_star_recursion_n}
K^{*\left(j\right)}\left(\Lambda[\cdot,1:j]\right)=K\left(\Lambda[\cdot, j];K^{*\left(j-1\right)}\left(\Lambda[\cdot,1:j-1]\right)\right),\text{ for }j=2,3,\dots,n\left(\Lambda\right)\,.
\end{equation}
Finally, the set-valued function $K^*:\mathcal{A}_n \rightarrow 2^{\mathcal I_n}$ is defined by, for $\Lambda\in\mathcal{A}_n$,
\begin{equation}\label{eq:function_K_star_n}
K^*\left(\Lambda\right)=K^{*\left(n\left(\Lambda\right)\right)}\left(\Lambda[\cdot,1:n\left(\Lambda\right)]\right)\, .
\end{equation}  
The set $\cup_{\Lambda\in \An} K^*(\Lambda)$ consists of all decision vectors arising from the sequential optimizations encoded by the sequential matrices in $\mathcal{A}_n$ via the function $K^*$.\qed
\end{definition}

The following proposition shows that the convexity of a feasible subset is preserved after each stage of the optimization of a convex combination, provided that the assumptions $(\dagger)$ hold. The proof of the following Proposition is in Appendix~\ref{ap:n_K_lambda_c_convex}.
\begin{proposition}\label{n_K_lambda_c_convex}
Let the full setting $(\Nn;\In)$, for $n\in\mathbb{N}$, satisfy $(\dagger)$, let a feasible subset $\mathcal{C}\subseteq \mathcal{I}_n$, and ${\bm\lambda}\in\Delta_n[0,1]$. If $\mathcal{C}$ is convex, then 
$K\left({\bm\lambda};\mathcal{C}\right)$, as defined in \eqref{eq:K_lambda_c}, is convex.
\end{proposition}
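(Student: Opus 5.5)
The approach is the standard argument that the set of minimizers of a convex function over a convex set is convex, with Lemma~\ref{rhoconvexI} supplying the convexity of the objective. Fix $\bm\lambda\in\Delta_n[0,1]$ and a convex $\mathcal{C}\subseteq\mathcal{I}_n$, and define
\[
F(\I)=\sum_{i=1}^{n+1}\lambda_i\rho_i(Y_i)
\]
for $\I\in\mathcal{C}$, where $Y_i=Y_i(I_i)$ for $i\leq n$ and $Y_{n+1}=Y_{n+1}(\I)$.

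First, I will show that $F$ is convex on $\mathcal{C}$. Take $\I,\J\in\mathcal{C}$ and $\theta\in[0,1]$. Convexity of $\mathcal{C}$ gives $\theta\I+(1-\theta)\J\in\mathcal{C}$. Applying Lemma~\ref{rhoconvexI} with $m=2$ bounds each $\rho_i$ at the combination by $\theta\rho_i(\cdot)$ at $\I$ plus $(1-\theta)\rho_i(\cdot)$ at $\J$. This covers the policyholders $i=1,\dots,n$ and the insurer $i=n+1$. Multiplying these inequalities by the nonnegative weights $\lambda_i$ and summing yields
\[
F(\theta\I+(1-\theta)\J)\le\theta F(\I)+(1-\theta)F(\J).
\]

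Next comes the minimizer step. If $K(\bm\lambda;\mathcal{C})$ is empty, it is trivially convex, so assume it is nonempty. Then all its elements share the common minimal value $m^*=\min_{\mathcal{C}}F$, which is a finite real number since each $\rho_i$ is real-valued. For $\I,\J\in K(\bm\lambda;\mathcal{C})$, the convexity inequality above gives
\[
F(\theta\I+(1-\theta)\J)\le m^*.
\]
Since $\theta\I+(1-\theta)\J\in\mathcal{C}$, we also have $F(\theta\I+(1-\theta)\J)\ge m^*$. Hence equality holds, and the convex combination lies in $K(\bm\lambda;\mathcal{C})$.

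I expect no real obstacle. The only care needed is in two places. First, the combination must lie in $\mathcal{C}$ before Lemma~\ref{rhoconvexI} is applied, because the lemma is stated for elements of a convex feasible subset. Second, the empty-argmin case should be noted explicitly. Weights $\lambda_i=0$ cause no issue, since zero times a real value is zero.
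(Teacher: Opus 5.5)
Your proof is correct and follows the paper's argument: both use Lemma~\ref{rhoconvexI} to get convexity of the weighted objective, then show that a convex combination of two minimizers is again a minimizer. The only cosmetic difference is that the paper compares the combination directly against an arbitrary $\I\in\mathcal{C}$ rather than naming the minimal value $m^*$.
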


\begin{corollary}
Let the full setting $(\Nn;\In)$, for $n\in\mathbb{N}$, satisfy $(\dagger)$, and let $\Lambda\in\mathcal{A}_n$. Then, for each $j=1,2,\dots, n(\Lambda)$, the set $K^{*\left(j\right)}\left(\Lambda[\cdot,1:j]\right)$, as defined in \eqref{eq:K_star_recursion_1} and \eqref{eq:K_star_recursion_n}, is convex. In particular, the set $K^*\left(\Lambda\right)$, as defined in \eqref{eq:function_K_star_n}, is convex.
\end{corollary}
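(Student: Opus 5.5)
The plan is an induction on the stage index $j$, driven by Proposition~\ref{n_K_lambda_c_convex}. Fix $\Lambda\in\mathcal{A}_n$ and write $\bm\lambda^{(j)}=\Lambda[\cdot,j]$, for $j=1,2,\dots,n(\Lambda)$. Condition (a) in the definition of $\mathcal{A}_n$ gives $\bm\lambda^{(j)}\in\Delta_n[0,1]$ for each such $j$, so Proposition~\ref{n_K_lambda_c_convex} applies at every stage. The only extra input is that the initial feasible set $\mathcal{I}_n$ is convex, which is part of $(\dagger)$.

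For the base case $j=1$, by \eqref{eq:K_star_recursion_1} we have $K^{*(1)}(\Lambda[\cdot,1])=K(\bm\lambda^{(1)};\mathcal{I}_n)$. Since $\mathcal{I}_n$ is convex and $\bm\lambda^{(1)}\in\Delta_n[0,1]$, Proposition~\ref{n_K_lambda_c_convex} with $\mathcal{C}=\mathcal{I}_n$ shows this set is convex.

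For the inductive step, assume $2\leq j\leq n(\Lambda)$ and that $\mathcal{C}_{j-1}:=K^{*(j-1)}(\Lambda[\cdot,1:j-1])$ is convex. The set $\mathcal{C}_{j-1}$ is a subset of $\mathcal{I}_n$, because each $K(\cdot;\mathcal{C})$ is by definition a subset of $\mathcal{C}$, and inductively every stage stays inside $\mathcal{I}_n$. So $\mathcal{C}_{j-1}$ is a convex feasible subset. By \eqref{eq:K_star_recursion_n}, $K^{*(j)}(\Lambda[\cdot,1:j])=K(\bm\lambda^{(j)};\mathcal{C}_{j-1})$, and Proposition~\ref{n_K_lambda_c_convex} again gives convexity. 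This completes the induction over $j=1,\dots,n(\Lambda)$.

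Taking $j=n(\Lambda)$ and using \eqref{eq:function_K_star_n} shows that $K^*(\Lambda)$ is convex.

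There is no substantive obstacle; the one point that needs care is the empty set. If some stage's argmin is empty, then $K^{*(j)}$ and all later stages are empty. The empty set is convex, and Proposition~\ref{n_K_lambda_c_convex} applied to $\mathcal{C}=\emptyset$ trivially returns $\emptyset$, so the induction goes through unchanged.
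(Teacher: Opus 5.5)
Your proposal is correct and is the same argument as the paper's. The paper also applies Proposition~\ref{n_K_lambda_c_convex} repeatedly, starting from the convexity of $\mathcal{I}_n$ and using $\Lambda[\cdot,j]\in\Delta_n[0,1]$ at each stage; your explicit induction and the remark on the empty set simply write out that repeated application in full.
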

\begin{proof}
Since $\mathcal{I}_n$ is convex and, for each $j=1,2,\dots,n(\Lambda)$, we have $\Lambda[\cdot,j]\in\Delta_n[0,1]$, repeated application of Proposition \ref{n_K_lambda_c_convex} yields the result.
\end{proof}

\begin{example}\label{ex:A2}
We illustrate Definition \ref{def:K*_n} for $n=2$, corresponding to the centralized setting with two policyholders and one insurer. First, observe that the matrices in $\mathcal{A}_2$ fall into 13 classes; that is, we may write $\mathcal{A}_2= \mathcal{A}_2^{(1)}\cup \mathcal{A}_2^{(2)}\cup \cdots \cup \mathcal{A}_2^{(13)} $, where 
\[
\mathcal{A}_2^{(1)} = \left\{ \begin{pmatrix}
\lambda_1 & 0 & 0\\
\lambda_2 & 0 & 0\\
\lambda_3 & 0 & 0\\
\end{pmatrix}:\, \begin{array}{l}  \lambda_1,\lambda_2,\lambda_3\in(0,1)\\ \lambda_1+\lambda_2+\lambda_3=1\end{array}\right\},
\]
\[\mathcal{A}_2^{(2)} = \left\{ \begin{pmatrix}
\lambda_1 & 0 & 0\\
\lambda_2 & 0 & 0\\
0 & 1 & 0\\ 
\end{pmatrix}  :\, \begin{array}{l}  \lambda_1, \lambda_2\in (0,1) \\\lambda_1+\lambda_2=1
 \end{array} \right\},\,  
\mathcal{A}_2^{(3)} = \left\{   \begin{pmatrix}
\lambda_1 & 0 & 0\\
0 & 1 & 0\\
\lambda_3 & 0 & 0\\ 
\end{pmatrix} :\,\begin{array}{l} \lambda_1, \lambda_3\in (0,1)  \\\lambda_1+\lambda_3=1
 \end{array} \right\}\, ,\]
\[\mathcal{A}_2^{(4)} = \left\{ \begin{pmatrix}
0 & 1 & 0\\
\lambda_2 & 0 & 0\\
\lambda_3 & 0 & 0\\ 
\end{pmatrix}  :\, \begin{array}{l}  \lambda_2, \lambda_3\in (0,1) \\\lambda_2+\lambda_3=1
 \end{array} \right\},\,  
\mathcal{A}_2^{(5)} = \left\{   \begin{pmatrix}
1 & 0 & 0\\
0 & \lambda_2 & 0\\
0 & \lambda_3 & 0\\ 
\end{pmatrix} :\,\begin{array}{l}  \lambda_2, \lambda_3\in (0,1) \\\lambda_2+\lambda_3=1
 \end{array} \right\}\, ,\]
\[\mathcal{A}_2^{(6)} = \left\{ \begin{pmatrix}
0 & \lambda_1 & 0\\
1 & 0 & 0\\
0 & \lambda_3 & 0\\ 
\end{pmatrix}  :\, \begin{array}{l}  \lambda_1, \lambda_3\in (0,1) \\\lambda_1+\lambda_3=1
 \end{array} \right\},\,  
\mathcal{A}_2^{(7)} = \left\{   \begin{pmatrix}
0 & \lambda_1 & 0\\
0 & \lambda_2 & 0\\
1 & 0 & 0\\ 
\end{pmatrix} :\,\begin{array}{l}  \lambda_1, \lambda_2\in (0,1)\\\lambda_1+\lambda_2=1 
 \end{array} \right\}\, , \]
\[\mathcal{A}_2^{(8)} = \left\{ \begin{pmatrix}
1 & 0 & 0\\
0 & 1 & 0\\
0 & 0 & 1\\
\end{pmatrix}\right\},\,  
\mathcal{A}_2^{(9)} =  \left\{\begin{pmatrix}
1 & 0 & 0\\
0 & 0 & 1\\
0 & 1 & 0\\
\end{pmatrix}\right\},\, 
\mathcal{A}_2^{(10)} =  \left\{\begin{pmatrix}
0 & 1 & 0\\
1 & 0 & 0\\
0 & 0 & 1\\
\end{pmatrix}\right\}\, ,
\]
\[\mathcal{A}_2^{(11)} = \left\{ \begin{pmatrix}
0 & 0 & 1\\
1 & 0 & 0\\
0 & 1 & 0\\
\end{pmatrix}\right\},\,  
\mathcal{A}_2^{(12)} =  \left\{\begin{pmatrix}
0 & 1 & 0\\
0 & 0 & 1\\
1 & 0 & 0\\
\end{pmatrix}\right\},\, 
\mathcal{A}_2^{(13)} =  \left\{\begin{pmatrix}
0 & 0 & 1\\
0 & 1 & 0\\
1 & 0 & 0\\
\end{pmatrix}\right\}\, .
\]
The sequential optimizations follow from \eqref{eq:function_K_star_n} using a representative matrix from each class 
in $\mathcal{A}_2$, as we show in Appendix~\ref{ap:ex_2}.\qed
\end{example}
\noindent
As illustrated by Example \ref{ex:A2}, the conditions defining the set $\mathcal{A}_n$ in \eqref{A_n} encode all possible orderings of the sequential optimization steps involving the $n+1$ agents, with each agent's risk measure appearing exactly once in the sequence. 

Recall that, for $n=1$, $\mathcal{A}_1$ can be partitioned into three distinct classes of sequential matrices. As shown in Example \ref{ex:A2}, for $n=2$,  $\mathcal{A}_2$ can be partitioned into 13 classes. Each class specifies a unique sequential ordering of the optimization problems defined through the function $K^*$. More generally, for $n\in\mathbb{N}$, let $\mathrm{d}(\mathcal{A}_n)$ denote the number of distinct classes of sequential matrices in $\mathcal{A}_n$; we call it the dimension of $\mathcal{A}_n$. These numbers can be computed recursively; the corresponding formula is relegated to Appendix~\ref{ap:dimension_An}. For instance, $\mathrm{d}(\mathcal{A}_1)=3$, $\mathrm{d}(\mathcal{A}_2)=13$, $\mathrm{d}(\mathcal{A}_3)=75$.

With the generalized notation and definitions for sequential optimization in place, we now explore the extension of the results on Pareto optimality obtained for the bilateral risk sharing problem ($n=1$), namely Proposition~\ref{2_min_PO}, Theorem~\ref{2_PO_min}, Theorem~\ref{2_K^*_PO}, Theorem~\ref{2_PO_K^*}, and, ultimately, Corollary~\ref{2_result}, to the multilateral centralized risk sharing problem ($n\in\mathbb{N}$). First, the following proposition extends Proposition~\ref{2_min_PO} to the case $n\in\mathbb{N}$, which is a standard result.
\begin{proposition}\label{n_min_PO}
Consider the full setting $(\mathcal{N}_n;\In)$, for $n\in\mathbb{N}$. Let a group of agents $\mathcal{G}\subseteq\mathcal{N}_n$, and let a feasible subset $\mathcal{C}\subseteq \mathcal{I}_n$. If $\mathbf{I}^*= \left(I^*_1,I^*_2,\dots, I_n^*\right)\in K\left({\bm\lambda};\mathcal{C}\right)$, as defined in \eqref{eq:K_lambda_c}, for some ${\bm\lambda}\in \Delta_n[0,1]$ with $\nzp(\bm{\lambda})=\mathcal{G}$, then $\mathbf{I}^*\in\mathcal{C}$ is Pareto optimal among 
$\mathcal{G}$ within 
$\mathcal{C}$. In particular, if $\mathbf{I}^*\in K\left({\bm\lambda}\right)$ for some ${\bm\lambda}\in \Delta_n(0,1)$, then $\mathbf{I}^*\in\In$ is Pareto optimal under the full setting.
\end{proposition}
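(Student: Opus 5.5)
The argument is a direct contradiction, the standard weighted-sum argument restricted to the agents in $\mathcal{G}$. Fix $\bm\lambda\in\Delta_n[0,1]$ with $\nzp(\bm\lambda)=\mathcal{G}$ and $\mathbf{I}^*\in K(\bm\lambda;\mathcal{C})$. Suppose, for contradiction, that $\mathbf{I}^*$ is not Pareto optimal among $\mathcal{G}$ within $\mathcal{C}$. By Definition~\ref{def:PO}, there is then some $\mathbf{I}\in\mathcal{C}$ with
\[
\rho_i(Y_i)\leq\rho_i(Y_i^*) \text{ for all } i\in\mathcal{G},
\qquad
\rho_{i_0}(Y_{i_0})<\rho_{i_0}(Y_{i_0}^*) \text{ for some } i_0\in\mathcal{G}.
\]
Here $Y_i$ denotes $Y_i(I_i)$ for $i\le n$ and $Y_{n+1}(\mathbf{I})$ for $i=n+1$, as in \eqref{Yi} and \eqref{Yn+1}.

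The key step is to compare the weighted objectives in \eqref{eq:K_lambda_c} at $\mathbf{I}$ and at $\mathbf{I}^*$. Since $\lambda_i=0$ for every $i\notin\mathcal{G}$, both sums involve only indices in $\mathcal{G}$:
\[
\sum_{i\in\mathcal{N}_n}\lambda_i\rho_i(Y_i)=\sum_{i\in\mathcal{G}}\lambda_i\rho_i(Y_i),
\]
and likewise for $\mathbf{I}^*$. Because $\nzp(\bm\lambda)=\mathcal{G}$ and the weights lie in $[0,1]$, we have $\lambda_i>0$ for every $i\in\mathcal{G}$. Multiplying each inequality by $\lambda_i$ and summing gives
\[
\sum_{i\in\mathcal{G}}\lambda_i\rho_i(Y_i)<\sum_{i\in\mathcal{G}}\lambda_i\rho_i(Y_i^*).
\]
The inequality is strict because of the index $i_0$, where $\lambda_{i_0}>0$. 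All values are finite since the $\rho_i$ are real-valued. This contradicts $\mathbf{I}^*$ being a minimizer over $\mathcal{C}$, so $\mathbf{I}^*$ is Pareto optimal among $\mathcal{G}$ within $\mathcal{C}$.

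The particular case is immediate. If $\bm\lambda\in\Delta_n(0,1)$, then $\nzp(\bm\lambda)=\mathcal{N}_n$, and we take $\mathcal{C}=\mathcal{I}_n$ and $\mathcal{G}=\mathcal{N}_n$, which is exactly Pareto optimality under the full setting.

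There is no real obstacle. The one point needing care is that strictness survives only because every agent in $\mathcal{G}$ carries a strictly positive weight; this is exactly what $\nzp(\bm\lambda)=\mathcal{G}$ guarantees. No convexity or assumption $(\dagger)$ is needed.
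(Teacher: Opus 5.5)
Your proposal is correct and matches the paper's proof: assume a dominating $\mathbf{I}\in\mathcal{C}$, use $\lambda_i>0$ exactly on $\mathcal{G}$ to obtain a strict decrease of the weighted objective, and contradict $\mathbf{I}^*\in K(\bm\lambda;\mathcal{C})$. Your extra remarks, that the sums run only over $\mathcal{G}$ and that no convexity or $(\dagger)$ is needed, are accurate elaborations of the same argument.
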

\begin{proof}
Suppose that the decision $\mathbf{I}^*\in\mathcal{C}$ is not Pareto optimal among the agents in the group $\mathcal{G}$ within the feasible subset $\mathcal{C}$. Then there exists $\mathbf{I}\in\mathcal{C}$ such that $\rho_i(Y_i) \leq \rho_i(Y_i^*) $,  for all $i\in\mathcal{G}$, with at least one strict inequality. It follows that $\sum_{i\in\mathcal{G}} \lambda_i\rho_i\left(Y_i \right) < \sum_{i\in\mathcal{G}} \lambda_i\rho_i\left(Y_i^*\right)$, for the given ${\bm\lambda}\in \Delta_n[0,1]$ with $\nzp(\bm{\lambda})=\mathcal{G}$, which implies that $\mathbf{I}^*\notin K\left({\bm\lambda};\mathcal{C}\right)$. This is a contradiction.
\end{proof}

We next extend Theorem~\ref{2_PO_min}, showing that, under the assumptions $(\dagger)$, any Pareto optimal decision $\mathbf{I}^*$, among the agents in a group, minimizes a 
convex combination of the risks for a sub-group of these agents; that is, a Pareto optimal decision does not necessarily minimize a strict convex combination of the risks for all agents in 
consideration. Note that, for each feasible decision $\mathbf{I}\in \mathcal{I}_n$, the mapped vector $(\rho_i(Y_i))_{i\in\mathcal{G}}$, for some $\mathcal{G}\subseteq\mathcal{N}_n$, is an element of $\mathbb{R}^{\vert\mathcal{G}\vert}$.\footnote{We denote by $\vert E\vert $ the cardinality of a set $E$.} Pareto optimality is applied to the image of this mapping, and thus the concept of Pareto optimality in Euclidean spaces is useful; Appendix~\ref{ap:POsets} reviews this concept.
\begin{theorem}\label{n_PO_min}
Let the full setting $(\Nn;\In)$, for $n\in\mathbb{N}$, satisfy $(\dagger)$. Let a group of agents $\mathcal{G}\subseteq\mathcal{N}_n$, and let a feasible subset $\mathcal{C}\subseteq \mathcal{I}_n$. If $\mathbf{I}^*= \left(I^*_1,I^*_2,\dots, I_n^*\right)\in\mathcal{C}$ is Pareto optimal 
among $\mathcal{G}$ within 
$\mathcal{C}$, then $\mathbf{I}^*\in K\left({\bm\lambda};\mathcal{C}\right)$, as defined in \eqref{eq:K_lambda_c}, for some ${\bm\lambda}\in \Delta_n[0,1]$ with $\nzp(\bm{\lambda})\subseteq\mathcal{G}$. In particular, if $\mathbf{I}^*\in\In$ is Pareto optimal under the full setting, then $\mathbf{I}^*\in K\left({\bm\lambda}\right)$ for some ${\bm\lambda}\in \Delta_n[0,1]$.
\end{theorem}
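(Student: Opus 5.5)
The plan is the classical supporting-hyperplane (weighted-sum scalarization) argument, carried out in the objective space $\mathbb{R}^{\vert\mathcal{G}\vert}$ rather than in the decision space. We assume $\mathcal{G}\neq\emptyset$, since otherwise there is nothing to prove. We also need $\mathcal{C}$ to be convex. This holds for $\mathcal{C}=\mathcal{I}_n$ under $(\dagger)$, and for every set $K^{*(j)}(\Lambda[\cdot,1:j])$ by Proposition~\ref{n_K_lambda_c_convex} and its corollary, so we read the statement with $\mathcal{C}$ a convex feasible subset; for a non-convex $\mathcal{C}$ the scalarization can fail. Let $\mathbf{y}^*=(\rho_i(Y_i^*))_{i\in\mathcal{G}}$. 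Define the image set
$$\mathcal{Y}=\left\{(\rho_i(Y_i(\mathbf{I})))_{i\in\mathcal{G}}:\mathbf{I}\in\mathcal{C}\right\}\subseteq\mathbb{R}^{\vert\mathcal{G}\vert},$$
writing $Y_i(\mathbf{I})$ for $Y_i(I_i)$ when $i\le n$. Its upper extension is $\mathcal{Y}^+=\mathcal{Y}+\mathbb{R}_+^{\vert\mathcal{G}\vert}$.

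\textbf{Step 1 (convexity of $\mathcal{Y}^+$).} Take $\mathbf{y}^{(1)}+\mathbf{d}^{(1)}$ and $\mathbf{y}^{(2)}+\mathbf{d}^{(2)}$ in $\mathcal{Y}^+$, generated by $\mathbf{I}^{(1)},\mathbf{I}^{(2)}\in\mathcal{C}$, and fix $\theta\in[0,1]$. Since $\mathcal{C}$ is convex, $\theta\mathbf{I}^{(1)}+(1-\theta)\mathbf{I}^{(2)}\in\mathcal{C}$. By Lemma~\ref{rhoconvexI} (with $m=2$), its image is componentwise bounded above by $\theta\mathbf{y}^{(1)}+(1-\theta)\mathbf{y}^{(2)}$. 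Hence $\theta(\mathbf{y}^{(1)}+\mathbf{d}^{(1)})+(1-\theta)(\mathbf{y}^{(2)}+\mathbf{d}^{(2)})$ equals that image plus a vector in $\mathbb{R}_+^{\vert\mathcal{G}\vert}$, so it lies in $\mathcal{Y}^+$.

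\textbf{Step 2 (disjointness).} Let $\mathcal{D}=\mathbf{y}^*-\operatorname{int}\mathbb{R}_+^{\vert\mathcal{G}\vert}$, which is convex, open and nonempty. Suppose some $\mathbf{y}+\mathbf{d}\in\mathcal{Y}^+\cap\mathcal{D}$, with $\mathbf{y}$ generated by $\mathbf{I}\in\mathcal{C}$. Then $\mathbf{y}\le\mathbf{y}+\mathbf{d}<\mathbf{y}^*$ componentwise, with strict inequality in every component. This contradicts Pareto optimality of $\mathbf{I}^*$ among $\mathcal{G}$ within $\mathcal{C}$ (Definition~\ref{def:PO}). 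So $\mathcal{Y}^+\cap\mathcal{D}=\emptyset$.

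\textbf{Step 3 (separation and normalization).} By the finite-dimensional separation theorem for two disjoint convex sets, one of them open, there is a nonzero $\mathbf{w}\in\mathbb{R}^{\vert\mathcal{G}\vert}$ with $\mathbf{w}\cdot\mathbf{z}\ge\mathbf{w}\cdot\mathbf{u}$ for all $\mathbf{z}\in\mathcal{Y}^+$ and $\mathbf{u}\in\mathcal{D}$. We then extract the weights in three moves.
\begin{itemize}
\item Since $\mathcal{Y}^+$ is unbounded in every positive coordinate direction, each component of $\mathbf{w}$ must be nonnegative: letting $\mathbf{z}=\mathbf{y}^*+t\mathbf{e}_i$ with $t\to-\infty$ is impossible, so instead one argues with $\mathbf{u}=\mathbf{y}^*-\varepsilon\mathbf{1}-t\mathbf{e}_i$, $t\to\infty$, which forces $w_i\ge0$.
\item Letting $\mathbf{u}\to\mathbf{y}^*$ within $\mathcal{D}$ gives $\mathbf{w}\cdot\mathbf{y}\ge\mathbf{w}\cdot\mathbf{y}^*$ for every $\mathbf{y}\in\mathcal{Y}$.
\item Normalize $\mathbf{w}$ to sum to one, set $\lambda_i=w_i$ for $i\in\mathcal{G}$ and $\lambda_i=0$ for $i\notin\mathcal{G}$.
\end{itemize}
This gives $\bm\lambda\in\Delta_n[0,1]$ with $\nzp(\bm\lambda)\subseteq\mathcal{G}$, and $\mathbf{I}^*\in K(\bm\lambda;\mathcal{C})$. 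The special case $\mathcal{G}=\mathcal{N}_n$, $\mathcal{C}=\mathcal{I}_n$ is immediate.

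\textbf{Main obstacle.} The delicate point is Step 3. We must obtain $\mathbf{w}\ge\mathbf{0}$ and $\mathbf{w}\neq\mathbf{0}$, and pass from the separation inequality on $\mathcal{D}$ to the inequality at $\mathbf{y}^*$ itself, which lies on the boundary of $\mathcal{D}$. The inclusion of the cone $\mathbb{R}_+^{\vert\mathcal{G}\vert}$ in $\mathcal{Y}^+$, together with the openness of $\mathcal{D}$, is exactly what makes these limits legitimate. Nothing guarantees that all $w_i$ are positive, which is why only $\nzp(\bm\lambda)\subseteq\mathcal{G}$, and not equality, can be asserted.
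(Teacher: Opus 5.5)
Your proof is correct and is essentially the paper's argument. The paper convexifies the image set through its convex hull $\overline{S}$ and shows, using Lemma~\ref{rhoconvexI} and convexity of $\mathcal{C}$, that every point of $\overline{S}$ lies componentwise above some point of $S$; this is exactly your inclusion $\overline{S}\subseteq\mathcal{Y}^+$. It then separates $\overline{S}$ from the closed orthant $\{\mathbf{y}\leq\mathbf{y}^{*,\mathcal{G}}\}$ via relative interiors and extracts a nonnegative normal vector as you do, so your upper-extension and open-orthant packaging is only a streamlining of the same supporting-hyperplane proof.

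You are right that convexity of $\mathcal{C}$ is an essential hidden hypothesis: the paper's proof invokes it explicitly even though the statement omits it. One small correction: $\mathcal{G}\neq\emptyset$ is a genuine implicit hypothesis, not a case with "nothing to prove". For $\mathcal{G}=\emptyset$ every decision is vacuously Pareto optimal, while no $\bm\lambda\in\Delta_n[0,1]$ has empty support, so the conclusion would fail.
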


\begin{proof}
Define the set
\begin{equation*}
S=\{\mathbf{y} = (y_i)_{i\in\mathcal{G}}\in\mathbb{R}^{\vert\mathcal{G}\vert}: y_i= \rho_i\left(Y_i\right), \,i\in\mathcal{G}, \,   \mathbf{I} \in \mathcal{C}\}.
\end{equation*}
For the Pareto optimal $\mathbf{I}^*\in\mathcal{C}$, among the agents in the group $\mathcal{G}$ within the feasible subset $\mathcal{C}$, define the point $\mathbf{y}^{\ast,\mathcal{G}} = (y^\ast_i)_{i\in\mathcal{G}}\in S$, where $y^*_i= \rho_i\left(Y^*_i\right)$, for $i\in\mathcal{G}$. By definition, $\mathbf{y}^{\ast,\mathcal{G}}\in S$ is a Pareto optimal point in $S$. Denote by $\overline{S}=\mathrm{conv}(S)$ the convex hull of $S$; see Appendix~\ref{ap:POsets} to recall its definition.


The remaining steps of the proof are outlined as follows. (a) We verify that, for any $\overline{\mathbf{y} }\in \overline{S}$, there exists $\mathbf{y} \in S$ such that $\mathbf{y} \leq \overline{\mathbf{y}} $.\footnote{Unless otherwise specified, (in)equalities between vectors are component-wise.} (b) We prove that the Pareto optimal $\mathbf{y}^{\ast,\mathcal{G}}\in S$ is also Pareto optimal in $\overline{S}$. (c) We show that there exists $\bm{\lambda} = (\lambda_1,\lambda_2,\dots, \lambda_{n+1})^T\in \Delta_n[0,1]$ such that $\nzp(\bm{\lambda})\subseteq\mathcal{G}$ and
\begin{equation}
\mathbf{y}^{\ast}\in\argmin\left\{\bm{\lambda}^T\mathbf{y}:\mathbf{y}=\left(\rho_i\left(Y_i\right)\right)_{i\in\mathcal{N}_n}\in\mathbb{R}^{n+1},\mathbf{I} \in \mathcal{C}\right\},
\label{eq:y^*_equation}
\end{equation}
where $\mathbf{y}^{\ast}=\left(\rho_i\left(Y^*_i\right)\right)_{i\in\mathcal{N}_n}$ for the $\mathbf{I}^*\in\mathcal{C}$, being Pareto optimal among the agents in the group $\mathcal{G}$ within the feasible subset $\mathcal{C}$; this concludes that $\mathbf{I}^*\in K\left({\bm\lambda};\mathcal{C}\right)$.

(a) Let $\overline{\mathbf{y}}\in \overline{S}$. By definition, there exist $m\in\mathbb{N}$, $\mathbf{y}^{(j)}\in S$, $\theta_j \geq 0$, for $j=1,2,\dots,m$, such that $\sum_{j=1}^m\theta_j = 1$ and $\overline{\mathbf{y}}=\sum_{j=1}^m \theta_j\mathbf{y}^{(j)}$; since $\mathbf{y}^{(j)}\in S$, there exists $\mathbf{I}^{(j)}=(I^{(j)}_1,I^{(j)}_2,\dots, I^{(j)}_n)\in\mathcal{C}$ such that
\begin{equation*}
\mathbf{y}^{(j)}=
\begin{cases}
\left(\left(\rho_i\left(Y_i(I^{(j)}_{i})\right)\right)_{i\in\mathcal{G}\backslash\{n+1\}},\rho_{n+1}\left(Y_{n+1}(\mathbf{I}^{(j)})\right)\right)^T,&\text{ if }(n+1)\in\mathcal{G},\\
\left(\rho_i\left(Y_i(I^{(j)}_{i})\right)\right)_{i\in\mathcal{G}},&\text{ if }(n+1)\notin\mathcal{G}.
\end{cases}
\end{equation*}
Due to the convexity of $\mathcal{C}$, $\sum_{j=1}^m\theta_j\mathbf{I}^{(j)}=(\sum_{j=1}^m\theta_jI^{(j)}_1,\sum_{j=1}^m\theta_jI^{(j)}_2,\dots, \sum_{j=1}^m\theta_jI^{(j)}_n)\in\mathcal{C}$, and thus a point $\mathbf{y}\in S$ is defined as:
\begin{equation*}
\mathbf{y}=
\begin{cases}
\left(\left(\rho_i\left(Y_i(\sum_{j=1}^m\theta_jI^{(j)}_i)\right)\right)_{i\in\mathcal{G}\backslash\{n+1\}},\rho_{n+1}\left(Y_{n+1}(\sum_{j=1}^m\theta_j\mathbf{I}^{(j)})\right)\right)^T,&\text{ if }(n+1)\in\mathcal{G},\\
\left(\rho_i\left(Y_i(\sum_{j=1}^m\theta_jI^{(j)}_i)\right)\right)_{i\in\mathcal{G}},&\text{ if }(n+1)\notin\mathcal{G}.
\end{cases}
\end{equation*}
By Lemma~\ref{rhoconvexI}, $\mathbf{y} \leq \overline{\mathbf{y}}$.
(b) Note that $\mathbf{y}^{\ast,\mathcal{G}}\in S \subseteq\overline{S}$. Assume, on the contrary, that $\mathbf{y}^{\ast,\mathcal{G}}$ is not Pareto optimal in $\overline{S}$, and thus there exists $\overline{\mathbf{y}}\in\overline{S}$ such that $\overline{\mathbf{y}} \leq \mathbf{y}^{\ast,\mathcal{G}}$ with $\overline{\mathbf{y}} \neq \mathbf{y}^{\ast,\mathcal{G}}$. By (a), there exists $\tilde{\mathbf{y}}\in S$ such that $\tilde{\mathbf{y}} \leq \overline{\mathbf{y}}$, and consequently, $\tilde{\mathbf{y}} \leq \overline{\mathbf{y}} \leq \mathbf{y}^{\ast,\mathcal{G}}$ with $\tilde{\mathbf{y}} \neq \mathbf{y}^{\ast,\mathcal{G}}$, which contradicts that $\mathbf{y}^{\ast,\mathcal{G}}$ is Pareto optimal in $S$. Therefore, $\mathbf{y}^{\ast,\mathcal{G}}$ is Pareto optimal in $\overline{S}$.

(c) Define the set 
$
T_{\mathbf{y}^{\ast,\mathcal{G}} } = \lbrace  \mathbf{y}= (y_i)_{i\in\mathcal{G}}\in\mathbb{R}^{\vert\mathcal{G}\vert} : \,  \mathbf{y}\leq \mathbf{y}^{\ast,\mathcal{G}} 
\rbrace
$
for the Pareto optimal point $\mathbf{y}^{\ast,\mathcal{G}} $ in $S$ and $\overline{S}$. Clearly, $T_{\mathbf{y}^{\ast,\mathcal{G}}}$ and $\overline{S}$ are both convex sets. By Appendix~\ref{ap:POsets}, with $C=\overline{S}$ and in $\mathbb{R}^{\vert\mathcal{G}\vert}$, there exists $\bm{a}=(a_i)_{i\in\mathcal{G}}\in \mathbb{R}^{\vert\mathcal{G}\vert}$, $\bm{a}\neq \mathbf{0} $, 
such that $\sup_{\mathbf{y}\in T_{\mathbf{y}^{\ast,\mathcal{G}}}}\{\bm{a}^T \mathbf{y}\}\leq \inf_{\mathbf{y}\in \overline{S}}\{\bm{a}^T \mathbf{y}\}$. We argue that $\bm{a}\geq \mathbf{0}$. Assume, on the contrary that, $a_{i_0}<0$ for some $i_0\in\mathcal{G}$. Then, take the point 
$\tilde{\mathbf{y}} = (\tilde{y}_i)_{i\in \mathcal{G}}\in T_{\mathbf{y}^{\ast,\mathcal{G}}}$, with $\tilde{y}_i=y_i^*$ for $i\neq i_0$, and $\tilde{y}_{i_0} = y_{i_0^*} -1$.
Since $a_{i_0}<0$, $a_{i_0}y_{i_0}^*<a_{i_0}(y_{i_0}^*-1)$, and thus a contradiction is induced from $\bm{a}^T\mathbf{y}^{\ast,\mathcal{G}}  <  \bm{a}^T\tilde{\mathbf{y}}\leq\sup_{\mathbf{y}\in T_{\mathbf{y}^{\ast,\mathcal{G}}}}\{\bm{a}^T\mathbf{y}\}\leq \inf_{\mathbf{y}\in \overline{S}}\{\bm{a}^T \mathbf{y}\}\leq\bm{a}^T\mathbf{y}^{\ast,\mathcal{G}}$, where the last inequality is due to that $\mathbf{y}^{\ast,\mathcal{G}}\in \overline{S}$. Therefore, $\bm{a}\geq \mathbf{0}$.


Since $\bm{a}\geq \mathbf{0}$, $\bm{a}^T \mathbf{y}^{\ast,\mathcal{G}}=\sup_{\mathbf{y}\in T_{\mathbf{y}^{\ast,\mathcal{G}}}}\{\bm{a}^T \mathbf{y}\}\leq \inf_{\mathbf{y}\in \overline{S}}\{\bm{a}^T \mathbf{y}\}$. Since $\mathbf{y}^{\ast,\mathcal{G}}\in \overline{S}$, $\mathbf{y}^{\ast,\mathcal{G}}$ attains the minimum for $\inf_{\mathbf{y}\in \overline{S}}\{\bm{a}^T \mathbf{y}\}$; that is, $\inf_{\mathbf{y}\in \overline{S}}\{\bm{a}^T \mathbf{y}\}=\min_{\mathbf{y}\in \overline{S}}\{\bm{a}^T \mathbf{y}\}=\bm{a}^T \mathbf{y}^{\ast,\mathcal{G}}$. Again, by the fact that $\mathbf{y}^{\ast,\mathcal{G}}\in \overline{S}$, and by (a), there exists $\mathbf{y}\in S$ such that $\mathbf{y}\leq \mathbf{y}^{\ast,\mathcal{G}}$; since $\bm{a}\geq \mathbf{0}$, $\bm{a}^T\mathbf{y} \leq \bm{a}^T\mathbf{y}^{\ast,\mathcal{G}}$, which implies that $\inf_{\mathbf{y}\in S}\{\bm{a}^T\mathbf{y}\} \leq \bm{a}^T\mathbf{y}^{\ast,\mathcal{G}}$. Moreover, $S\subseteq \overline{S}$ implies that $\inf_{\mathbf{y}\in \overline{S}}\{\bm{a}^T\mathbf{y}\} \leq \inf_{\mathbf{y}\in S}\{\bm{a}^T\mathbf{y}\}$. These together lead to that
\begin{equation*}
\inf_{\mathbf{y}\in S}\{\bm{a}^T\mathbf{y}\} \leq \bm{a}^T\mathbf{y}^{\ast,\mathcal{G}}=\min_{\mathbf{y}\in \overline{S}}\{\bm{a}^T \mathbf{y}\}=\inf_{\mathbf{y}\in \overline{S}}\{\bm{a}^T \mathbf{y}\}\leq \inf_{\mathbf{y}\in S}\{\bm{a}^T\mathbf{y}\},
\end{equation*}
showing that $\mathbf{y}^{\ast,\mathcal{G}}\in S$ attains the minimum for $\inf_{\mathbf{y}\in S}\{\bm{a}^T\mathbf{y}\}$.

Finally, define $\bm\lambda=\left(\lambda_1,\lambda_2,\dots, \lambda_{n+1}\right)^T $, with $\lambda_i= a_i/(\sum_{j\in\mathcal{G}}a_j)$ for $i\in \mathcal{G}$, and $\lambda_i = 0$ for $i\in \Nn\setminus\mathcal{G}$.
Clearly, ${\bm\lambda}=\left(\lambda_1,\lambda_2,\dots, \lambda_{n+1}\right)^T\in \Delta_n[0,1]$ and is such that $\nzp(\bm{\lambda})\subseteq\mathcal{G}$. Note that \eqref{eq:y^*_equation} indeed holds. For any $\mathbf{I} \in \mathcal{C}$, with $\mathbf{y}^{\mathcal{G}}=\left(\rho_i\left(Y_i\right)\right)_{i\in\mathcal{G}}\in S$ and $\mathbf{y}=\left(\rho_i\left(Y_i\right)\right)_{i\in\mathcal{N}_n}\in\mathbb{R}^{n+1}$,
\begin{align*}
\bm{\lambda}^T\mathbf{y}^{\ast}=&\;\sum_{i\in\mathcal{N}_n}\lambda_i\rho_i\left(Y^*_i\right)=\sum_{i\in\mathcal{G}}\lambda_i\rho_i\left(Y^*_i\right)+\sum_{i\in\mathcal{N}_n\backslash\mathcal{G}}\lambda_i\rho_i\left(Y^*_i\right)=\frac{1}{\sum_{j\in\mathcal{G}}a_j}\sum_{i\in\mathcal{G}}a_i\rho_i\left(Y^*_i\right)\\=&\;\frac{1}{\sum_{j\in\mathcal{G}}a_j}\bm{a}^T\mathbf{y}^{\ast,\mathcal{G}}\leq\frac{1}{\sum_{j\in\mathcal{G}}a_j}\bm{a}^T\mathbf{y}^{\mathcal{G}}=\frac{1}{\sum_{j\in\mathcal{G}}a_j}\sum_{i\in\mathcal{G}}a_i\rho_i\left(Y_i\right)\\=&\;\sum_{i\in\mathcal{G}}\lambda_i\rho_i\left(Y_i\right)+\sum_{i\in\mathcal{N}_n\backslash\mathcal{G}}\lambda_i\rho_i\left(Y_i\right)=\sum_{i\in\mathcal{N}_n}\lambda_i\rho_i\left(Y_i\right)=\bm{\lambda}^T\mathbf{y},
\end{align*}
in which the inequality is due to that $\mathbf{y}^{\ast,\mathcal{G}}\in S$ attains the minimum for $\inf_{\mathbf{y}\in S}\{\bm{a}^T\mathbf{y}\}$, and $\mathbf{y}^{\mathcal{G}}\in S$.
\end{proof}


Denote by $\mathrm{PO}$ the set of Pareto optimal decisions under the full setting. Figure~\ref{fig:n_results12} illustrates Proposition \ref{n_min_PO} and Theorem \ref{n_PO_min},\footnote{Again, under the assumptions $(\dagger)$, the smaller and the larger sets of decisions than the PO set are respectively known as Geoffrion-proper Pareto optimality and weak Pareto optimality.} thereby extending Proposition~\ref{2_min_PO}, Theorem~\ref{2_PO_min}, and Figure~\ref{fig:2_results12} from the bilateral case. In analogy with the bilateral setting, by extending the function $K$ in \eqref{eq:K_lambda_c} to the function $K^*$ in \eqref{eq:function_K_star_n}, we aim to study the relationship between Pareto optimality and sequential optimizations under the full setting. The following result extends Theorem \ref{2_K^*_PO}.
\begin{figure}[h!]
\centering
\begin{subfigure}[C]{0.45\linewidth}
\centering
\begin{tikzpicture}[thick,fill,scale=0.9]
\draw (0,0) circle (1.5cm);
\node at (0,-0.15) {$ \bigcup\limits_{\bm{\lambda}\in \Delta_n(0,1)} K(\bm\lambda)$};
\draw (0,0) ellipse (2.52 and 2.22);
\node at (1.6,1.1) {$\mathrm{PO}$};

\begin{scope} \clip (0,0) circle (1.5cm); 
\foreach \x in {-1.2,-0.8,-0.4,0,0.4,0.8,1.2} { \draw[gray, dotted] (\x,-1.6) -- (\x,1.6); } 
\end{scope}
\end{tikzpicture}
\caption{\scriptsize{Proposition~\ref{n_min_PO}. }}\label{fig:n_min_PO}
\end{subfigure}%
~
\begin{subfigure}[c]{0.45\textwidth}
\centering
\begin{tikzpicture}[thick,fill,scale=0.9]
\draw[rounded corners=15pt] (0,0) rectangle (6,4.4); 
\draw (3,2) ellipse (1.34 and 1.16);
\node at (3,2) {$\mathrm{PO}$};
\node at (4.5,3.6) {$\bigcup\limits_{\bm\lambda\in \Delta_n[0,1]} K(\bm\lambda)$};
\end{tikzpicture}
\caption{\scriptsize{Theorem~\ref{n_PO_min} under $(\dagger)$.}}\label{fig:n_PO_min}
\end{subfigure}
\caption{\footnotesize{Illustration of Proposition~\ref{n_min_PO} and Theorem~\ref{n_PO_min}.}}\label{fig:n_results12}
\end{figure}

\begin{theorem}\label{n_K^*_PO}
Let the full setting $(\mathcal{N}_n;\In)$, for $n\in\mathbb{N}$. If $\mathbf{I}^*= \left(I^*_1,I^*_2,\dots, I_n^*\right)\in K^*(\Lambda)$, as defined in \eqref{eq:function_K_star_n}, for some $\Lambda\in \mathcal{A}_n$, then $\mathbf{I}^*\in\In$ is Pareto optimal under the full setting.
\end{theorem}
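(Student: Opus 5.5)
Fix $\Lambda=[\bm\lambda^{(1)};\dots;\bm\lambda^{(n(\Lambda))};\mathbf{0};\dots;\mathbf{0}]\in\mathcal{A}_n$ and write $\mathcal{G}_j=\nzp(\bm\lambda^{(j)})$ and $\mathcal{C}_j=K^{*(j)}(\Lambda[\cdot,1:j])$, with $\mathcal{C}_0=\mathcal{I}_n$. By Remark~\ref{rem:Anpartition}, the sets $\mathcal{G}_1,\dots,\mathcal{G}_{n(\Lambda)}$ partition $\mathcal{N}_n$. The recursion gives the nested chain $\mathcal{C}_{n(\Lambda)}\subseteq\cdots\subseteq\mathcal{C}_1\subseteq\mathcal{C}_0$, where $\mathcal{C}_j=K(\bm\lambda^{(j)};\mathcal{C}_{j-1})$. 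The proof is by contradiction and needs no convexity, consistent with the statement not assuming $(\dagger)$. Suppose $\mathbf{I}^*\in K^*(\Lambda)=\mathcal{C}_{n(\Lambda)}$ is not Pareto optimal under the full setting. Then there is $\mathbf{I}\in\mathcal{I}_n$ with $\rho_i(Y_i)\le\rho_i(Y_i^*)$ for all $i\in\mathcal{N}_n$, and strict inequality for some $i_0$. Since the $\mathcal{G}_j$ partition $\mathcal{N}_n$, we have $i_0\in\mathcal{G}_{j_0}$ for a unique $j_0$.

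The key step is to show by induction on $j$ that $\mathbf{I}\in\mathcal{C}_j$ for every $j<j_0$. The base case $\mathbf{I}\in\mathcal{C}_0$ holds trivially. For the inductive step, assume $\mathbf{I}\in\mathcal{C}_{j-1}$ with $j<j_0$. Because $\mathbf{I}^*\in\mathcal{C}_j$, the point $\mathbf{I}^*$ minimizes $\sum_{i\in\mathcal{G}_j}\lambda^{(j)}_i\rho_i(\cdot)$ over $\mathcal{C}_{j-1}$. On the other hand, $\mathbf{I}$ is feasible in $\mathcal{C}_{j-1}$ and satisfies $\rho_i(Y_i)\le\rho_i(Y_i^*)$ for $i\in\mathcal{G}_j$, with positive weights. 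Hence $\sum_{i\in\mathcal{G}_j}\lambda^{(j)}_i\rho_i(Y_i)\le\sum_{i\in\mathcal{G}_j}\lambda^{(j)}_i\rho_i(Y^*_i)$, and the reverse inequality holds by minimality. So $\mathbf{I}$ attains the minimum, which gives $\mathbf{I}\in\mathcal{C}_j$.

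At stage $j_0$ we then have $\mathbf{I}\in\mathcal{C}_{j_0-1}$. For $i\in\mathcal{G}_{j_0}$ the inequalities $\rho_i(Y_i)\le\rho_i(Y_i^*)$ hold, with strict inequality at $i_0$, and $\lambda^{(j_0)}_{i_0}>0$. Therefore the weighted sum at $\mathbf{I}$ is strictly smaller than at $\mathbf{I}^*$. This contradicts $\mathbf{I}^*\in\mathcal{C}_{j_0}=K(\bm\lambda^{(j_0)};\mathcal{C}_{j_0-1})$. Equivalently, the argument invokes Proposition~\ref{n_min_PO} with $\mathcal{G}=\mathcal{G}_{j_0}$ and $\mathcal{C}=\mathcal{C}_{j_0-1}$: $\mathbf{I}^*$ is Pareto optimal among $\mathcal{G}_{j_0}$ within $\mathcal{C}_{j_0-1}$, and $\mathbf{I}$ dominates it there.

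The main point needing care is the inductive step, which uses $\mathbf{I}^*\in\mathcal{C}_j$ for every $j$. This follows from the nestedness of the $\mathcal{C}_j$, since $\mathbf{I}^*$ lies in the last one. The step also uses that every weight in $\mathcal{G}_j$ is positive, and that each agent belongs to exactly one group, both guaranteed by condition (b) of $\mathcal{A}_n$. The remaining steps are routine.
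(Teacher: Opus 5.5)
Your proof is correct and is essentially the paper's argument: assume a dominating $\mathbf{I}$, push it down the nested chain $K^{*(1)}\supseteq\cdots\supseteq K^{*(j_0-1)}$ by induction, then reach a contradiction at the stage $j_0$ containing a strictly improved agent. The only cosmetic difference is that the paper takes $j_0$ to be the \emph{earliest} stage with a strictly improved agent, so that individual risks are equal at earlier stages. You take any strictly improved agent and get equality of the weighted sums at earlier stages from weak dominance plus minimality; this also works and shows the earliest-stage choice is not needed.
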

\begin{proof} 
Recall the notation and discussions in Remark~\ref{rem:Anpartition}. By definition,
\begin{equation}\label{eq:I_Kstar_(1)}
\I^*\in K^{*(1)}(\Lambda[\cdot, 1]) = K(\bm{\lambda}^{(1)}; \mathcal{I}_n) = \argmin_{\I\in \mathcal{I}_n} \sum_{i\in \mathcal{G}_1} \lambda_i^{(1)} \rho_i(Y_i)\, ,
\end{equation}
and, for $j=2,3,\dots, n(\Lambda)$,
\begin{equation}\label{eq:I_Kstar_(j)}
\I^*\in K^{*(j)}(\Lambda[\cdot, 1:j]) = K(\bm{\lambda}^{(j)}; K^{*(j-1)}(\Lambda[\cdot, 1:j-1])) = \argmin_{\I \in  K^{*(j-1)}(\Lambda[\cdot, 1:j-1]) } \sum_{i\in \mathcal{G}_j} \lambda_i^{(j)} \rho_i(Y_i) \, .
\end{equation}

Assume, on the contrary, that $\mathbf{I}^*$ is not Pareto optimal under the full setting; that is, there exists $\mathbf{I} = \left(I_1,I_2,\dots,I_n\right)\in\mathcal{I}_n$ such that, $\rho_i\left(Y_i\right)\leq\rho_i\left(Y^*_i\right)$, for any $i\in\mathcal{N}_n$, with at least one inequality being strict. Let $D=\{i\in\mathcal{N}_n:\rho_i(Y_i)<\rho_i(Y_i^*)\}$ be the set of strictly improved agents. Under this assumption, $D\neq\emptyset$. Since $\Lambda\in\mathcal{A}_{n}$, the sets $\mathcal{G}_j$, for $j=1,2,\dots,n(\Lambda)$, form a partition of $\mathcal{N}_n$, in particular $\mathcal{N}_n=\cup_{j=1}^{n(\Lambda)}\mathcal{G}_j$. Therefore, there exists at least one stage $j\in\{1,2,\dots,n(\Lambda)\}$ such that $\mathcal{G}_j\cap D\neq\emptyset$. Let $j_0=\min\{j\in\{1,2,\dots,n(\Lambda)\}:\mathcal{G}_j\cap D\neq\emptyset\}$ be the earliest stage at which a strictly improved agent appears, and choose any $i_0\in \mathcal{G}_{j_0}\cap D$.

Suppose that $j_0=1$. Since $\rho_{i_0}\left(Y_{i_0}\right)<\rho_{i_0}\left(Y^*_{i_0}\right)$, and $\rho_i\left(Y_i\right)\leq\rho_i\left(Y^*_i\right)$, for $i\in\mathcal{G}_1\backslash\left\{i_0\right\}$, as well as $\lambda_i^{(1)}>0$, for $i\in\mathcal{G}_1$, then $\sum_{i\in \mathcal{G}_1} \lambda_i^{(1)} \rho_i(Y_i)<\sum_{i\in \mathcal{G}_1} \lambda_i^{(1)} \rho_i(Y^*_i)$, contradicting \eqref{eq:I_Kstar_(1)} as $\mathbf{I}\in\mathcal{I}_n$.

Suppose that $j_0=2,3,\dots,n(\Lambda)$. For $j=1,2,\dots,j_0-1$, since $\mathcal{G}_j\cap D=\emptyset$, $\rho_i\left(Y_i\right)=\rho_i\left(Y^*_i\right)$ for all $i\in\mathcal{G}_j$, and hence $\sum_{i\in\mathcal{G}_j} \lambda_i^{(j)}\rho_i(Y_i)=\sum_{i\in \mathcal{G}_j}\lambda_i^{(j)} \rho_i(Y^*_i)$. Since $\I\in\mathcal{I}_n$ and $\I^*\in K^{*(1)}(\Lambda[\cdot,1])$, we obtain $\I\in K^{*(1)}(\Lambda[\cdot,1])$. In general, suppose that $\I\in  K^{*(j-1)}(\Lambda[\cdot, 1:j-1])$ for some $j=2,3,\dots,j_0-1$. Since $\I\in  K^{*(j-1)}(\Lambda[\cdot, 1:j-1])$ and $\I^*\in K^{*(j)}(\Lambda[\cdot,1:j])$, we obtain $\I\in K^{*(j)}(\Lambda[\cdot,1:j])$. By induction,
$\I \in  K^{*(j_0-1)}(\Lambda[\cdot, 1:j_0-1])\subseteq K^{*(j_0-2)}(\Lambda[\cdot, 1:j_0-2])\subseteq\dots\subseteq K^{*(1)}(\Lambda[\cdot,1])\subseteq\mathcal{I}_n$. Since $\rho_{i_0}\left(Y_{i_0}\right)<\rho_{i_0}\left(Y^*_{i_0}\right)$, and $\rho_i\left(Y_i\right)\leq\rho_i\left(Y^*_i\right)$, for $i\in\mathcal{G}_{j_0}\backslash\left\{i_0\right\}$, as well as $\lambda_i^{(j_0)}>0$, for $i\in\mathcal{G}_{j_0}$, we have $\sum_{i\in \mathcal{G}_{j_0}} \lambda_i^{(j_0)} \rho_i(Y_i)<\sum_{i\in \mathcal{G}_{j_0}} \lambda_i^{(j_0)} \rho_i(Y^*_i)$, contradicting \eqref{eq:I_Kstar_(j)} as $\I \in  K^{*(j_0-1)}(\Lambda[\cdot, 1:j_0-1])$.
Hence, $\mathbf{I}^*$ is Pareto optimal under the full setting.

\end{proof}
Just as Theorem~\ref{2_K^*_PO} generalizes Proposition~\ref{2_min_PO} in the bilateral case, Theorem~\ref{n_K^*_PO} generalizes Proposition~\ref{n_min_PO} in the multilateral setting. Under the full setting, decisions that minimize a strict convex combination of all agents' risks are Pareto optimal, as are decisions obtained through sequential optimization according to the sequential matrices $\Lambda \in \mathcal{A}_n$. Importantly, none of these results require the assumptions $(\dagger)$.

\subsubsection{Convex Combinations Obscure Individual Risks}\label{sec:convex_combination}
This subsection addresses the following question: under the assumptions $(\dagger)$, does every Pareto optimal decision vector in the full setting solve a sequential optimization problem corresponding to some sequential matrix $\Lambda\in \mathcal{A}_n$? An affirmative answer would provide a converse to Theorem~\ref{n_K^*_PO}. Together with that theorem, it would yield a characterization of Pareto optimality in the full setting in terms of sequential optimizations under $(\dagger)$, thereby extending Theorem~\ref{2_PO_K^*} and Corollary~\ref{2_result} from the bilateral case. We first establish the following proposition.
\begin{proposition}\label{lemma:subset_PO}
Let the full setting $(\Nn;\In)$, for $n\in\mathbb{N}$, satisfy $(\dagger)$. Let a group of agents $\mathcal{G}\subseteq\mathcal{N}_n$ with $\vert\mathcal{G}\vert\in\{2,3,\dots,n+1\}$, and a feasible subset $\mathcal{C}\subseteq \mathcal{I}_n$ be convex. If $\mathbf{I}^*= \left(I^*_1,I^*_2,\dots, I_n^*\right)\in\mathcal{C}$ is Pareto optimal among 
$\mathcal{G}$ within 
$\mathcal{C}$, and if there exists $i_0\in\mathcal{G}$ such that $\mathbf{I}^*\in\argmin_{\mathbf{I}\in\mathcal{C}}\rho_{i_0}\left(Y_{i_0}\right)$, then $\mathbf{I}^*$ is Pareto optimal among the agents in the sub-group $\mathcal{G}\backslash\{i_0\}$ within the convex feasible subset $\argmin_{\mathbf{I}\in\mathcal{C}}\rho_{i_0}\left(Y_{i_0}\right)$.
\end{proposition}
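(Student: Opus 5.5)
The plan is to prove the two claims separately: first that $\mathcal{C}' := \argmin_{\mathbf{I}\in\mathcal{C}}\rho_{i_0}(Y_{i_0})$ is a convex feasible subset containing $\mathbf{I}^*$, and then that $\mathbf{I}^*$ is Pareto optimal among $\mathcal{G}\backslash\{i_0\}$ within $\mathcal{C}'$. The second claim will go by contradiction, lifting a dominating decision in $\mathcal{C}'$ to a dominating decision among $\mathcal{G}$ in $\mathcal{C}$.

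\textbf{Convexity.} Note that $\mathcal{C}' = K(\mathbf{e}_{i_0};\mathcal{C})$, where $\mathbf{e}_{i_0}\in\Delta_n[0,1]$ is the unit vector with a one in position $i_0$. Since $\mathcal{C}$ is convex and $(\dagger)$ holds, Proposition~\ref{n_K_lambda_c_convex} gives that $\mathcal{C}'$ is convex. It is nonempty because $\mathbf{I}^*\in\mathcal{C}'$ by hypothesis. Hence $\mathcal{C}'$ is a legitimate feasible subset in the sense of Definition~\ref{def:PO}, and the statement ``Pareto optimal among $\mathcal{G}\backslash\{i_0\}$ within $\mathcal{C}'$'' makes sense. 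The assumption $|\mathcal{G}|\geq 2$ guarantees that $\mathcal{G}\backslash\{i_0\}$ is a nonempty group.

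\textbf{Pareto optimality.} Suppose, on the contrary, that some $\mathbf{I}\in\mathcal{C}'$ satisfies $\rho_i(Y_i)\le\rho_i(Y_i^*)$ for all $i\in\mathcal{G}\backslash\{i_0\}$, with at least one strict inequality. Both $\mathbf{I}$ and $\mathbf{I}^*$ lie in $\mathcal{C}'$, so both attain the minimum of $\rho_{i_0}(Y_{i_0})$ over $\mathcal{C}$. Therefore $\rho_{i_0}(Y_{i_0})=\rho_{i_0}(Y^*_{i_0})$. Combining this equality with the assumed inequalities, $\mathbf{I}\in\mathcal{C}$ weakly improves every agent in $\mathcal{G}$ and strictly improves at least one. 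This contradicts the Pareto optimality of $\mathbf{I}^*$ among $\mathcal{G}$ within $\mathcal{C}$.

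There is no real obstacle here. The only point needing care is that the $i_0$-th objective is tied exactly, not merely bounded, across $\mathcal{C}'$. Convexity of $\mathcal{C}'$ is not needed for the optimality argument. It is part of the conclusion because later recursive uses of this result require it, and it comes directly from the earlier proposition.
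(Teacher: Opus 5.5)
Your proof is correct and matches the paper's: convexity of $K(\mathbf{e}_{i_0};\mathcal{C})$ comes from Proposition~\ref{n_K_lambda_c_convex}, and Pareto optimality follows by contradiction, using that $\rho_{i_0}(Y_{i_0})=\rho_{i_0}(Y^*_{i_0})$ on the argmin set. The paper also opens by invoking Theorem~\ref{n_PO_min}, but that step is never used, so you lose nothing by omitting it.
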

\begin{proof}
By Theorem \ref{n_PO_min}, $\mathbf{I}^*\in K\left({\bm\lambda};\mathcal{C}\right)$ for some ${\bm\lambda}\in \Delta_n[0,1]$ with $\nzp(\bm{\lambda})\subseteq\mathcal{G}$. As there exists $i_0\in\mathcal{G}$ such that $\mathbf{I}^*\in\argmin_{\mathbf{I}\in\mathcal{C}}\rho_{i_0}\left(Y_{i_0}\right)$, we may represent this minimization by the weight vector $\mathbf e_{i_0}\in\Delta_n[0,1]$, whose $i_0$-th component equals one and whose other components equal zero. Thus, $\mathbf I^*\in K(\mathbf e_{i_0};\mathcal C)
=
\argmin_{\mathbf I\in\mathcal C}\rho_{i_0}(Y_{i_0})$.

Assume, on the contrary, that 
$\I^*\in K(\mathbf e_{i_0};\mathcal C)$
is not Pareto optimal among 
$\mathcal{G}\backslash\{i_0\}$ within 
the convex feasible subset $K(\mathbf e_{i_0};\mathcal C)$; where the convexity of
$K(\mathbf e_{i_0};\mathcal C)$ follows from Proposition \ref{n_K_lambda_c_convex}. There 
exists 
$\I\in K(\mathbf e_{i_0};\mathcal C)$
such that $\rho_{i}\left(Y_{i}\right)\leq\rho_{i}\left(Y^*_{i}\right)$, for $i\in\mathcal{G}\backslash\{i_0\}$, with at least one inequality being strict. Since 
$\I^*, \I \in K(\mathbf e_{i_0};\mathcal C)$,
we have $\rho_{i_0}\left(Y_{i_0}\right)=\rho_{i_0}\left(Y^*_{i_0}\right)$. Therefore, $\rho_{i}\left(Y_{i}\right)\leq\rho_{i}\left(Y^*_{i}\right)$, for $i\in\mathcal{G}$, with at least one inequality being strict. This contradicts that $\mathbf{I}^*\in\mathcal{C}$ is Pareto optimal among 
$\mathcal{G}$ within 
$\mathcal{C}$, as $\mathbf{I}\in\mathcal{C}$.
\end{proof}
\noindent
Recall Theorem~\ref{n_PO_min} states that a Pareto optimal decision minimizes the convex combination of some agents. Proposition~\ref{lemma:subset_PO} is presented under the assumptions of Theorem~\ref{n_PO_min} and specifies the case when the combination involves only one agent. However, we can prove this result without $(\dagger)$, in which case we do not imply the new feasible set is convex.

Due to Proposition~\ref{lemma:subset_PO}, a Pareto optimal decision for a group of agents that also minimizes the risk of one agent in that group, remains Pareto optimal for the remaining agents once the feasible set is restricted to the minimizers of that agent's risk. 

Proposition~\ref{lemma:subset_PO} also clarifies the proof of Theorem~\ref{2_PO_K^*} in the bilateral setting. Indeed, by Theorem~\ref{2_PO_min} (or Theorem~\ref{n_PO_min}, with $n=1$), any Pareto optimal decision $I^*\in\mathcal{I}_1$ 
among the agents in $\mathcal{N}_1=\{1,2\}$ within the convex feasible set $\mathcal{I}_1$ belongs to $K(\bm{\lambda})$ for some ${\bm\lambda}=\left(\lambda_1,\lambda_2\right)^T\in\left[0,1\right]^{2}$ with $\lambda_1+\lambda_2=1$. More precisely:
\begin{itemize}
\item If $\lambda_1,\lambda_2>0$, then $I^*\in K^*(\Lambda)$ for $\Lambda\in \mathcal{A}_1^{(1)}\subseteq\mathcal{A}_1$;
\item If either $\lambda_1=0$ or $\lambda_2=0$, 
then there exists a unique $i_0\in\mathcal{N}_1=\left\{1,2\right\}$ such that $I^*\in\argmin_{I\in\mathcal{I}_1}\rho_{i_0}\left(Y_{i_0}\right)$. By Proposition~\ref{lemma:subset_PO}, 
$\I^*$
is Pareto optimal among the remaining agent in $\mathcal{N}_1\backslash\{i_0\}$ within the convex feasible subset $\argmin_{I\in\mathcal{I}_1}\rho_{i_0}\left(Y_{i_0}\right)$. Since $\mathcal{N}_1\backslash\{i_0\}$ contains only one agent, Pareto optimality 
is equivalent to minimizing that remaining agent's risk. Therefore, $I^*\in K^*(\Lambda)$ for some $\Lambda\in \mathcal{A}_1^{(2)}\cup\mathcal{A}_1^{(3)}\subseteq\mathcal{A}_1$.
\end{itemize}


In the multilateral setting when $n\in\mathbb{N}\backslash\{1\}$, we would like an analogue result to Proposition~\ref{lemma:subset_PO} that covers all possible cases, so that if a Pareto decision $\I^*\in \mathcal{I}_n$ under assumptions $(\dagger)$ minimizes any possible convex combination of the agent's objectives, then $\I^*$ remains Pareto optimal for the remaining agents within a smaller feasible set, and we would recursively apply Theorem~\ref{n_PO_min}, aiming to prove that $\I^*$ solves a sequential optimization given by some $\Lambda\in \mathcal{A}_n$. In which case, we would obtain a generalization of Theorem~\ref{2_PO_K^*}. In fact, when the Pareto decision $\I^*$ minimizes the convex combination of the risks of all policyholders, we can conclude as desired.

\begin{proposition}\label{prop:PO_minpolicyholders_K^*}
Let the full setting $(\Nn;\In)$, for $n\in \mathbb{N}$, satisfy $(\dagger)$. If $\I^*=(I_1^*,I_2^*, \dots, I_n^*)\in \In$ is Pareto optimal under the full setting, and $\I^*\in \mathcal{C}= \argmin_{\I\in \In} \sum_{i=1}^n \lambda_i \rho_i(Y_i)$, for some $\left(\lambda_1,\lambda_2,\dots,\lambda_n\right)\in\left(0,1\right)^n$ such that $\lambda_1+\lambda_2+\dots+\lambda_n=1$, then $\I^*\in \argmin_{\I\in \mathcal{C}} \rho_{n+1}(Y_{n+1})$.
\end{proposition}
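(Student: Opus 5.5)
The plan is to argue by contradiction against the Pareto optimality of $\I^*$. Suppose $\I^*\notin\argmin_{\I\in\mathcal{C}}\rho_{n+1}(Y_{n+1})$. Then there is $\I\in\mathcal{C}$ with $\rho_{n+1}(Y_{n+1}(\I))<\rho_{n+1}(Y_{n+1}(\I^*))$. Since both $\I$ and $\I^*$ lie in $\mathcal{C}$, they give the same value of the aggregate $\sum_{i=1}^n\lambda_i\rho_i(Y_i)$. If every policyholder had the same individual value, $\rho_i(Y_i(I_i))=\rho_i(Y_i(I_i^*))$ for $i=1,\dots,n$, then $\I$ would weakly improve all agents and strictly improve the insurer. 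That contradicts Pareto optimality under the full setting. So the whole proof reduces to one claim: \emph{on $\mathcal{C}$ the individual policyholder values are constant}, not merely their weighted sum.

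To prove the claim I would exploit separability. By \eqref{rhoi}, each $\rho_i(Y_i)$ depends only on the $i$-th component $I_i$. The objective $\sum_{i=1}^n\lambda_i\rho_i(Y_i(I_i))$ is therefore a sum of functions of distinct coordinates with strictly positive weights. Suppose some $\I\in\mathcal{C}$ had $\rho_{k}(Y_k(I_k))<\rho_k(Y_k(I^*_k))$ for some $k\le n$. Since the weighted sums agree, some other $l$ must then have $\rho_l(Y_l(I_l))>\rho_l(Y_l(I^*_l))$. I would form the ``spliced'' decision $\mathbf{J}$ that takes component $I_k$ from $\I$ and all other components from $\I^*$. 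If $\mathbf{J}$ is feasible, its weighted policyholder sum is strictly below the minimum over $\In$, contradicting $\I^*\in\mathcal{C}$. Hence each policyholder's value is individually minimized, and is therefore constant on $\mathcal{C}$.

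The main obstacle is the feasibility of the spliced decision $\mathbf{J}$. This is immediate when $\In$ is a product of componentwise sets, as for $\mathcal{I}_0^n$ or for the policyholder-side constraints in Example~\ref{eg:rational}. However, it is not automatic for a general convex $\In$ satisfying $(\dagger)$, in particular when the coupled insurer constraint is present. I would first try to recover feasibility through convexity alone. By Proposition~\ref{n_K_lambda_c_convex}, $\mathcal{C}$ is convex, and by Lemma~\ref{rhoconvexI} the decision $\theta\I+(1-\theta)\I^*$ stays in $\mathcal{C}$ with convex individual values. I would combine this with Theorem~\ref{n_PO_min}, applied to $\I^*$ within $\mathcal{C}$ among all agents. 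This gives a supporting weight $\bm\mu$, and I would check whether $\mu_{n+1}>0$ can be forced.

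If this convexity route cannot rule out a genuine trade-off among policyholders inside $\mathcal{C}$, I expect a one-dimensional counterexample is possible. In it, $\mathcal{C}$ is a segment along which $\rho_1$ increases and $\rho_2$ decreases linearly, and $\rho_{n+1}$ is minimized in the interior. In that case I would state the product-structure (or splice-closedness) hypothesis on $\In$ explicitly, since it is exactly what makes the splicing step valid.
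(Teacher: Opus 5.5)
Your reduction and splicing step are exactly the paper's proof. The paper asserts in one line that $\I^*\in\mathcal{C}$ implies $\I^*\in\argmin_{\I\in\In}\rho_i(Y_i)$ for every $i\le n$. That is your splice, using that $\rho_i(Y_i)$ depends only on $I_i$. The paper then observes that any $\I\in\mathcal{C}$ with smaller $\rho_{n+1}(Y_{n+1})$ would leave all policyholders unchanged and dominate $\I^*$.

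The paper never checks that the spliced vector is feasible, and your suspicion is correct. This is not a technicality but the whole content of the argument. It cannot be recovered from convexity, because under $(\dagger)$ alone the proposition is false. Forcing $\mu_{n+1}>0$ via Theorem~\ref{n_PO_min} would not help either: minimality of $\sum_{i}\mu_i\rho_i(Y_i)$ over $\mathcal{C}$ gives minimality of $\rho_{n+1}(Y_{n+1})$ over $\mathcal{C}$ only if the policyholder part is constant on $\mathcal{C}$, which is the very claim at issue.

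Your one-dimensional counterexample can be realized within the model:
\begin{itemize}
\item Take $n=2$, and let $X_1,X_2$ be independent with $P(X_i=0)=P(X_i=1)=\frac12$.
\item Take $\Pi_i(Z)=(1+\theta)E[Z]$ with $\theta>0$, $\rho_1=\rho_2=\mathrm{ES}_0=E$, and $\rho_3=\mathrm{ES}_{1/2}$.
\item Take $\In=\{(I_1,I_2):I_1(x)=tx,\ I_2(x)=(1-t)x,\ t\in[0,1]\}$, a segment in $\mathcal{I}_0^2$. Every part of $(\dagger)$ holds.
\end{itemize}
Then $\rho_1(Y_1)=\frac12+\frac{\theta t}{2}$ and $\rho_2(Y_2)=\frac12+\frac{\theta(1-t)}{2}$. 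With $\lambda_1=\lambda_2=\frac12$ the weighted sum is constant, so $\mathcal{C}=\In$. Every point is Pareto optimal, since agents $1$ and $2$ strictly trade off.

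However, $\rho_3(Y_3)=\mathrm{ES}_{1/2}(tX_1+(1-t)X_2)-\frac{1+\theta}{2}$. This equals $1-\frac{1+\theta}{2}$ at $t=0$ and $\frac34-\frac{1+\theta}{2}$ at $t=\frac12$. Thus the decision $\I^*$ with $t=0$ (i.e.\ $I_1^*=0$ and $I_2^*(x)=x$) satisfies every hypothesis, yet $\I^*\notin\argmin_{\I\in\mathcal{C}}\rho_3(Y_3)$.

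So drop the convexity detour and commit to the extra hypothesis you identified. If $\In$ is a Cartesian product of componentwise sets (which is what splice-closedness amounts to), such as $\mathcal{I}_0^n$ used in Section~\ref{sec:example_section}, your argument is a complete proof. It does not even need $(\dagger)$. Coupled feasible sets, such as the individually rational set of Example~\ref{eg:rational} with its joint insurer constraint, are exactly where both your splice and the paper's one-line claim are unjustified.
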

\begin{proof}
Since $\I^*\in \mathcal{C} = K(\bm\lambda)$, for $\bm\lambda = (\lambda_1,\lambda_2, \dots, \lambda_n,0)\in \Delta_n[0,1]$ with the positive weights $\lambda_i$ corresponding to all policyholders, then $\I^*\in \argmin_{\I\in \mathcal{I}_n} \rho_i(Y_i)$, for all $i=1,2,\dots, n$. Assume there exists $\I \in \mathcal{C}$ such that $\rho_{n+1}(Y_{n+1})<\rho_{n+1}(Y_{n+1}^*) $, then, $\I$ would satisfy $\rho_i(Y_i)=\rho_i(Y_i^*)$ for all $i=1,2,\dots, n$ contradicting Pareto optimality of $\I^*$. 
\end{proof}
However, when the central agent is included in the first level of optimization, we cannot show the desired result under $(\dagger)$.
To illustrate this, fix $n=2$, and let $\I^*=(I^*_1,I^*_2)\in \mathcal{I}_2$ be Pareto optimal under the full setting. By the assumptions ($\dagger$) and Theorem \ref{n_PO_min},
\begin{equation*}
\mathbf{I}^* \in K(\bm\lambda)=\argmin_{\I=(I_1,I_2)\in \mathcal{I}_2}\{\lambda_1 \rho_1(Y_1) + \lambda_2\rho_2(Y_2) + \lambda_3\rho_3(Y_3)\},
\end{equation*}
for some ${\bm\lambda}=(\lambda_1, \lambda_2, \lambda_3)^T\in \Delta_2[0,1]$. Imitating the proof of Theorem~\ref{2_PO_K^*}, we consider all possible forms of ${\bm\lambda}\in \Delta_2[0,1]$. Suppose that $\lambda_1,\lambda_3\in(0,1)$ but $\lambda_2=0$. If the converse of Theorem~\ref{n_K^*_PO} were to hold, we would expect that $\mathbf{I}^*\in K^*(\Lambda)$ for some $\Lambda\in \mathcal{A}_2^{(3)}$, as defined in Example~\ref{ex:A2}. Given that
\begin{equation}
\mathbf{I}^*\in K((\lambda_1, 0, \lambda_3)^T)=K((\lambda_1, 0, \lambda_3)^T;\mathcal{I}_2)=K^{\ast(1)} ( (\lambda_1, 0, \lambda_3)^T )=K^{*\left(1\right)}\left(\Lambda[\cdot,1]\right),
\label{eq:3_K_levelone}
\end{equation}
the desired conclusion would be equivalent to
\begin{equation*}
\mathbf{I}^*\in K^{\ast (2)} (\Lambda[\cdot, 1:2])= K((0, 1, 0)^T; K^{*\left(1\right)}\left(\Lambda[\cdot,1]\right)) = \argmin_{\I=(I_1, I_2)\in K^{*\left(1\right)}\left(\Lambda[\cdot,1]\right)}   \rho_2(Y_2);
\end{equation*}
in other words, $\mathbf{I}^*\in K^{*\left(1\right)}\left(\Lambda[\cdot,1]\right)$ is Pareto optimal among the remaining agent in $\{2\}$ within the convex feasible subset $K^{*\left(1\right)}\left(\Lambda[\cdot,1]\right)$.

Assume that $\mathbf{I}^*\notin K^{\ast (2)} (\Lambda[\cdot, 1:2])$, with the aim of deriving a contradiction from the Pareto optimality in the full setting when $n=2$; necessarily, $K^{*\left(1\right)}\left(\Lambda[\cdot,1]\right)$ is not a singleton. There exists a decision vector $\mathbf{I}\in K^{*\left(1\right)}\left(\Lambda[\cdot,1]\right)\backslash\{\mathbf{I}^*\}$ such that $\rho_2(Y_2)<\rho_2(Y^*_2)$. Since $\mathbf{I},\mathbf{I}^*\in K^{*\left(1\right)}\left(\Lambda[\cdot,1]\right)$,
$\lambda_1 \rho_1(Y_1)+\lambda_3\rho_3(Y_3)=\lambda_1 \rho_1(Y^*_1)+\lambda_3\rho_3(Y^*_3)$. Because $\mathbf{I}\neq\mathbf{I}^*$ and $\lambda_1,\lambda_3>0$, as long as $\rho_1(Y_1)\neq\rho_1(Y_1^*)$ and $\rho_3(Y_3)\neq\rho_3(Y_3^*)$, $(\rho_1(Y_1)-\rho_1(Y_1^*))(\rho_3(Y_3)-\rho_3(Y_3^*))<0$ can possibly hold. In this scenario, one of the following two possible cases can happen:
\[\begin{aligned}
 &\rho_1(Y_1) < \rho_1(Y_1^*) ,\,   \rho_2(Y_2) <  \rho_2(Y^*_2),\, \rho_3(Y_3) > \rho_3(Y^*_3) , \text{ or }   \\
 &\rho_1(Y_1) > \rho_1(Y_1^*) ,\,   \rho_2(Y_2) <  \rho_2(Y^*_2),\, \rho_3(Y_3) < \rho_3(Y^*_3) \,.
\end{aligned}\]
Yet, neither of them contradicts Pareto optimality in the full setting when $n=2$. The underlying issue is that the optimization at the first level in \eqref{eq:3_K_levelone} determines only a convex combination of the risks of agents $1$ and $3$, rather than their individual risk values. Consequently, improvements in one agent's risk may be offset by deteriorations in another's without violating optimality with respect to the convex combination.

To further illustrate this issue generally in a Euclidean space, let $S\subseteq\mathbb{R}^3$ be a cube set, and consider the plane set:
\begin{equation*}
P=\left\{\mathbf{y}=(y_1,y_2,y_3)^T\in\mathbb{R}^3: \frac{2}{3} y_1 + \frac{1}{3}y_3 = 2\right\}.
\end{equation*}
In Figure~\ref{fig:counterexample}, this plane supports the cube $S$ along the bottom face ABCD. From the figure, it is clear that Pareto optimal points of the set $S$ are those $\mathbf{y}^*$ lying on the left-bottom edge AB and the bottom edge BC, which belong to the bottom face subset ABCD by definition; an example is $\mathbf{y}^*= (2, 1.41,2)^T$, the mid-point along the bottom edge BC. Let
\begin{equation*}
\mathcal{C} =  \argmin_{\mathbf{y}=(y_1,y_2,y_3)^T\in S} \left\lbrace\frac{2}{3} y_1 + \frac{1}{3} y_3\right\rbrace \, ,
\end{equation*}
which is in fact the bottom face subset ABCD. Therefore, the Pareto optimal $\mathbf{y}^* = (2, 1.41,2)^T\in\mathcal{C}$, but $\mathbf{y}^*\notin\argmin_{\mathbf{y}\in \mathcal{C}} y_2$. From Figure~\ref{fig:counterexample}, there clearly exist other Pareto optimal points $\mathbf{y}\in\mathcal{C}$ such that $y_2<y_2^*$ moving along the bottom edge BC towards the point B.

\begin{figure}[h!]
\centering
\includegraphics[width=0.5\linewidth]{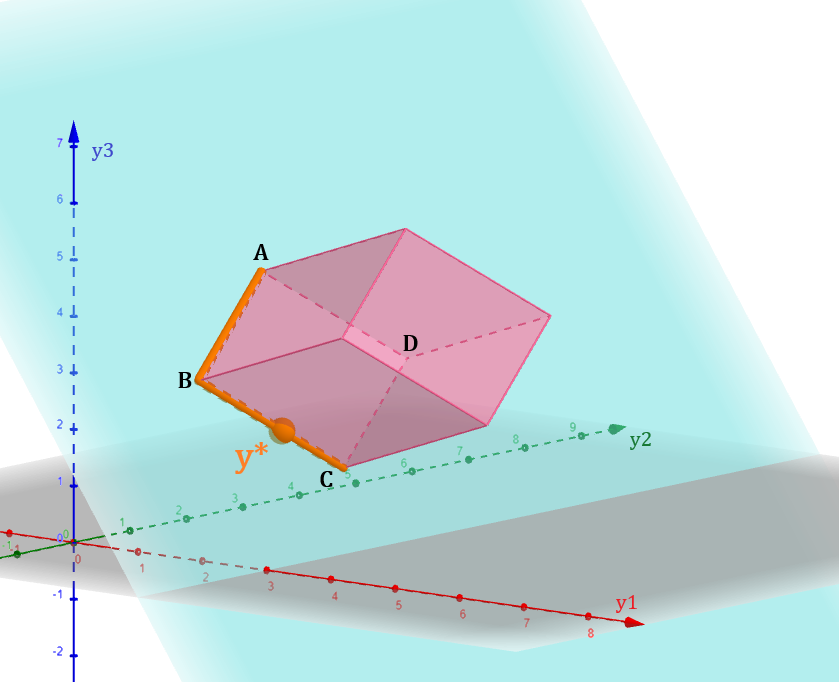}
\caption{\scriptsize{Illustration when the set $S$ is the cube (red), supported by the plane $P$ given by $y_3 = 6-2y_1$ (blue), the bottom face of the cube delimited by the points A, B, C, D, is the set $\mathcal{C}$, and the horizontal plane (grey) is given by $y_3=0$. The thick edges of the cube (orange) denote all Pareto optimal points of $S$, and the bullet point (orange) is a particular Pareto optimal point $\mathbf{y}^*$ of $S$. }}
\label{fig:counterexample}
\end{figure}

To sum up, this subsection shows that the proof of Theorem~\ref{2_PO_K^*} cannot be extended to establish the converse of Theorem~\ref{n_K^*_PO} in the multilateral setting under the assumptions $(\dagger)$. Although the central agent plays a distinct role, in general the core issue is that an aggregation of risks does not determine individual risk components, while Pareto optimality is defined in terms of the individual risks of agents.

\subsubsection{Exclusive and Unfair Sequential Optimization}\label{sec:exclusive_unfair}
Despite not having the desired conclusion that Pareto optimality holds for remaining agents, we investigate the implications of Pareto optimality over a relaxed set of matrices.
To this end, we extend the set of sequential matrices over which we apply the function $K^*$.

For $m\in\mathbb{N}$, let $\mathcal{M}_{n+1,m}$ denote the set of 
matrices of order $(n+1)\times m$; define
\begin{equation*}
\tilde{\mathcal{A}}_{n,m}=\left\{\Lambda\in\mathcal{M}_{n+1,m}:\Lambda[\cdot,j]\in\Delta_{n}[0,1],\text{ for }j=1,2,\dots,m\right\},
\end{equation*}
and let $\tilde{\mathcal{A}}_{n}=\cup_{m=1}^{\infty}\tilde{\mathcal{A}}_{n,m}$. For any $\Lambda\in\tilde{\mathcal{A}}_{n}$, define $K^{*(1)}(\Lambda[\cdot,1])\subseteq\mathcal{I}_n$ as in \eqref{eq:K_star_recursion_1}, and recursively define $K^{*(j)}(\Lambda[\cdot,1:j])\subseteq\mathcal{I}_n$ as in \eqref{eq:K_star_recursion_n}, for $j=2,3,\dots$. Then define the set-valued function $K^*:\tilde{\mathcal{A}}_n \to 2^{\mathcal I_n}$ by $K^*(\Lambda)=K^{*(m)}(\Lambda[\cdot,1:m])$, for $\Lambda\in \tilde{\mathcal{A}}_n$, where $m$ is the number of columns of $\Lambda$.

\begin{proposition}\label{n_PO_new_result}
Let the full setting $(\Nn;\In)$, for $n\in\mathbb{N}$, satisfy $(\dagger)$. If $\mathbf{I}^*=\left(I^*_1,I^*_2,\dots,I_n^*\right)\in \In$ is Pareto optimal under the full setting, then 
there exists $\Lambda\in\tilde{A}_{n}$
such that $\mathbf{I}^*\in K^*(\Lambda)$.

\end{proposition}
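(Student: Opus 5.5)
The plan is to obtain the statement directly from Theorem~\ref{n_PO_min}, using the fact that $\tilde{\mathcal{A}}_n$ allows any number of columns, including a single one. No row-wise condition like (b) in the definition of $\mathcal{A}_n$ is imposed on $\tilde{\mathcal{A}}_n$. Agents may therefore receive zero weight in every column, or positive weight in several columns. So the multi-stage argument, which fails in Section~\ref{sec:convex_combination}, is not needed.

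\textbf{Step 1.} Since $(\dagger)$ holds and $\mathbf{I}^*$ is Pareto optimal under the full setting, apply Theorem~\ref{n_PO_min} with $\mathcal{G}=\mathcal{N}_n$ and $\mathcal{C}=\mathcal{I}_n$. This gives some $\bm\lambda\in\Delta_n[0,1]$ with $\mathbf{I}^*\in K(\bm\lambda)$.

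\textbf{Step 2.} Let $\Lambda=\bm\lambda\in\mathcal{M}_{n+1,1}$, the $(n+1)\times 1$ matrix whose only column is $\bm\lambda$. Its single column lies in $\Delta_n[0,1]$, so $\Lambda\in\tilde{\mathcal{A}}_{n,1}\subseteq\tilde{\mathcal{A}}_n$.

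\textbf{Step 3.} By the definition of $K^*$ on $\tilde{\mathcal{A}}_n$ with $m=1$, together with \eqref{eq:K_star_recursion_1},
\[
K^*(\Lambda)=K^{*(1)}(\Lambda[\cdot,1])=K(\bm\lambda;\mathcal{I}_n)=K(\bm\lambda)\ni\mathbf{I}^*,
\]
which is the claim.

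The main point of care is the bookkeeping: checking that the one-column matrix really belongs to $\tilde{\mathcal{A}}_n$, and that the recursive definition of $K^*$ reduces to a single application of $K$. There is no real obstacle.

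If a multi-stage representation is preferred, the same argument iterates. Set $\mathcal{C}_1=K(\bm\lambda^{(1)})$, which is convex by Proposition~\ref{n_K_lambda_c_convex}. Since $\mathbf{I}^*$ is Pareto optimal within $\mathcal{I}_n$ and $\mathcal{C}_1\subseteq\mathcal{I}_n$, it stays Pareto optimal among $\mathcal{N}_n$ within $\mathcal{C}_1$. Theorem~\ref{n_PO_min} then yields $\bm\lambda^{(2)}$ with $\mathbf{I}^*\in K(\bm\lambda^{(2)};\mathcal{C}_1)$, and so on. Any finite number of such steps gives a matrix in $\tilde{\mathcal{A}}_{n,m}$ containing $\mathbf{I}^*$ in its $K^*$-image.

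The point of the proposition is that, over $\tilde{\mathcal{A}}_n$, such representations may exclude agents or repeat them. This contrasts with the balanced class $\mathcal{A}_n$.
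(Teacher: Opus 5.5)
Your proof is correct and is essentially the paper's argument. You invoke Theorem~\ref{n_PO_min} to obtain $\bm\lambda\in\Delta_n[0,1]$ with $\mathbf{I}^*\in K(\bm\lambda)$, then take the one-column matrix $[\bm\lambda]\in\tilde{\mathcal{A}}_{n}$, for which $K^*([\bm\lambda])=K(\bm\lambda)$; your optional multi-stage iteration likewise mirrors the remark the paper gives right after the proposition.
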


\begin{proof}
By Theorem~\ref{n_PO_min}, there exists $\bm\lambda \in \Delta_n[0,1]$ such that $\mathbf{I}^* \in K(\bm\lambda)$. Set $\Lambda=[\bm\lambda]\in\tilde{\mathcal{A}}_{n}$, a one-column matrix. 
Then, by the definition of $K^*$ on $\tilde{\mathcal{A}}_{n}$, $K^*(\Lambda)=K(\bm\lambda)$. Hence, $\mathbf{I}^*\in K^*(\Lambda)$.
\end{proof}

\begin{remark}
To prove Proposition~\ref{n_PO_new_result}, we selected a one-column matrix as a direct consequence of Theorem~\ref{n_PO_min}. Moreover, using Theorem~\ref{n_PO_min} and Proposition~\ref{n_K_lambda_c_convex}, we can construct, for any $m\in\mathbb{N}$,
$\bm\lambda^{(1)},\dots,\bm\lambda^{(m)}\in\Delta_n[0,1]$ recursively such that $\mathbf{I}^*\in K^{*(m)}([\bm\lambda^{(1)};\dots;\bm\lambda^{(m)}])$, where clearly $[\bm\lambda^{(1)};\dots;\bm\lambda^{(m)}]\in\tilde{A}_n $.

Indeed, by Theorem~\ref{n_PO_min}, $\mathbf{I}^* \in K(\bm\lambda^{(1)})$ for some $\bm\lambda^{(1)} \in \Delta_n[0,1]$. Since $\mathbf{I}^*\in\mathcal{I}_n$ is Pareto optimal 
under the full setting $(\Nn, \In)$,
and $\I^*\in K(\bm\lambda^{(1)}) \subseteq \mathcal{I}_n$, Pareto optimality is preserved within the smaller subset, i.e., $\mathbf{I}^* \in K(\bm\lambda^{(1)})$ remains Pareto optimal among 
$\mathcal{N}_n$ within 
$K(\bm\lambda^{(1)})$, which is convex by Proposition~\ref{n_K_lambda_c_convex}.
Applying Theorem~\ref{n_PO_min} again, we have $\mathbf{I}^* \in K(\bm\lambda^{(2)};K(\bm\lambda^{(1)}))$ for some $\bm\lambda^{(2)} \in \Delta_n[0,1]$. Since $K(\bm\lambda^{(2)};K(\bm\lambda^{(1)}))\subseteq K(\bm\lambda^{(1)})$, Pareto optimality of $\mathbf{I}^*$ is preserved within the subset $K(\bm\lambda^{(2)};K(\bm\lambda^{(1)}))$, which is convex by Proposition~\ref{n_K_lambda_c_convex}.
Repeating this argument recursively yields the desired result, where $\Lambda = [\bm\lambda^{(1)};\bm\lambda^{(2)};\dots;\bm\lambda^{(m)}]\in \tilde{A}_n$ for some $m\in\mathbb{N}$.\qed
\end{remark}

Comparing the sequential matrices in $\tilde{\mathcal{A}}_n$ with those in $\mathcal{A}_n$ (see \eqref{A_n}), we observe that, for $\Lambda\in\tilde{\mathcal{A}}_n$, there may exist $i\in\mathcal{N}_n$ such that $\left|\nzp(\Lambda[i,\cdot]^T)\right|\neq 1$, which could be $0$ or strictly larger than $1$. Consequently, the sequential optimizations encoded by the matrices in $\tilde{\mathcal{A}}_n$ may be exclusive, in the sense that some agents' risks are never optimized (similar to weak Pareto optimality), or unfair, in the sense that some agents' risks are optimized multiple times, while others are ignored or optimized fewer times.

As a matter of fact, such unfair sequential optimizations can lead to Pareto optimal decisions in the full setting. For $m\in\mathbb{N}$, define
\begin{align*}
\mathcal{A}'_{n,m}   = \biggr\{  &  \nonumber  \Lambda\in \mathcal{M}_{n+1,m}:\\ 
&(\text{a})\;\Lambda[\cdot, j]\in \Delta_{n}[0,1],   \text{ for } j=1,2,\dots, m , \text{ and,} \\
&(\text{b})\;\text{for all }i\in\mathcal{N}_n,\text{ there exists }j=1,2,\dots,m\text{ such that }j\in\nzp(\Lambda[i,\cdot]^T)\biggr\}\,;\nonumber
\end{align*}
let $\mathcal{A}'_{n}=\bigcup_{m=1}^{\infty}\mathcal{A}'_{n,m}$. Note that, for $\Lambda\in\mathcal{A}'_{n}$ and any $i\in\mathcal{N}_n$, we have $\left|\nzp(\Lambda[i,\cdot]^T)\right|\geq 1$, in contrast to $\Lambda\in\mathcal{A}_n$, where $\left|\nzp(\Lambda[i,\cdot]^T)\right|=1$.

\begin{proposition}\label{n_K^*_PO_new_result}
Let the full setting $(\mathcal{N}_n;\In)$, for $n\in\mathbb{N}$. If $\mathbf{I}^*= \left(I^*_1,I^*_2,\dots, I_n^*\right)\in K^*(\Lambda)$ for some $\Lambda\in \mathcal{A}'_n$, then $\mathbf{I}^*\in\In$ is Pareto optimal under the full setting.
\end{proposition}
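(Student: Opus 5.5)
The plan is to adapt the contradiction argument used for Theorem~\ref{n_K^*_PO} almost verbatim. The only change is that the stage groups $\mathcal{G}_j=\nzp(\Lambda[\cdot,j])$, for $j=1,2,\dots,m$, no longer form a partition of $\mathcal{N}_n$: they may overlap. Condition (b) in the definition of $\mathcal{A}'_{n,m}$ still guarantees that they cover $\mathcal{N}_n$, that is, $\cup_{j=1}^m\mathcal{G}_j=\mathcal{N}_n$. Covering is the only property the earlier proof actually uses, since it serves solely to find the first stage containing a strictly improved agent.

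First, write out the membership conditions. We have $\I^*\in K^{*(1)}(\Lambda[\cdot,1])=\argmin_{\I\in\In}\sum_{i\in\mathcal{G}_1}\lambda_i^{(1)}\rho_i(Y_i)$, and, for $j=2,\dots,m$, $\I^*\in K^{*(j)}(\Lambda[\cdot,1:j])=\argmin_{\I\in K^{*(j-1)}(\Lambda[\cdot,1:j-1])}\sum_{i\in\mathcal{G}_j}\lambda_i^{(j)}\rho_i(Y_i)$. These sets are nested, $K^{*(m)}\subseteq\cdots\subseteq K^{*(1)}\subseteq\In$.

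Next, argue by contradiction. Suppose some $\I\in\In$ weakly dominates $\I^*$ for all agents in $\mathcal{N}_n$, with strict improvement on a nonempty set $D$. By covering, the index $j_0=\min\{j:\mathcal{G}_j\cap D\neq\emptyset\}$ exists. For $j<j_0$, every $i\in\mathcal{G}_j$ satisfies $\rho_i(Y_i)=\rho_i(Y_i^*)$. The stage-$j$ objectives at $\I$ and $\I^*$ therefore coincide. Inducting on $j$ exactly as in Theorem~\ref{n_K^*_PO} gives $\I\in K^{*(j)}(\Lambda[\cdot,1:j])$ for $j=1,\dots,j_0-1$. Overlap causes no difficulty here: an agent appearing in several early stages is simply unchanged in each of them.

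Finally, at stage $j_0$, pick $i_0\in\mathcal{G}_{j_0}\cap D$. All weights $\lambda_i^{(j_0)}$ with $i\in\mathcal{G}_{j_0}$ are positive. Combining weak improvement on $\mathcal{G}_{j_0}$ with the strict improvement at $i_0$ gives a strictly smaller stage-$j_0$ objective at $\I$. Since $\I$ is feasible for that stage (it lies in $\In$ if $j_0=1$, and in $K^{*(j_0-1)}$ otherwise), this contradicts the minimality of $\I^*$.

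No step is a genuine obstacle. The point needing care is to check that the induction never uses disjointness of the $\mathcal{G}_j$. It does not: it only needs equality of the objectives at all stages before $j_0$, and that follows from $\mathcal{G}_j\cap D=\emptyset$ together with weak dominance. As in Theorem~\ref{n_K^*_PO}, none of the assumptions $(\dagger)$ are needed.
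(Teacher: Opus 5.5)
Your proposal is correct and matches the paper's proof: it also reruns the earliest-stage contradiction from Theorem~\ref{n_K^*_PO}, noting that this argument needs only that each column lies in $\Delta_n[0,1]$ and that the column supports cover $\mathcal{N}_n$, never their disjointness. Your explicit check that overlapping early stages leave the stage objectives unchanged fills in the one point the paper simply asserts.
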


\begin{proof}
The proof follows the same earliest-stage argument as in Theorem~\ref{n_K^*_PO}. Indeed, the proof of Theorem~\ref{n_K^*_PO} only uses the fact that each column belongs to $\Delta_n[0,1]$ and that the union of the columns' support covers $\mathcal N_n$; it does not use the disjointness of the supports. These two properties also hold for every $\Lambda\in\mathcal{A}'_n$. Hence the same proof applies.
\end{proof}
\noindent
Therefore, although the sequential optimizations encoded by matrices in $\mathcal{A}'_n$ are inclusive, in the sense that all agents' risks are optimized at least once, they may still be unfair, as some agents' risks are optimized multiple times while others are optimized fewer times. Most importantly, by Proposition~\ref{n_K^*_PO_new_result}, they lead to Pareto optimality in the full setting, extending Theorem~\ref{n_K^*_PO}. Figure~\ref{fig:PO_sequential} shows an illustration of these results, with some abuse of natation we denote by $K^*(\An)$ the set $\cup_{\Lambda\in \An} K^*(\Lambda)$, and similarly for the families $\mathcal{A}_n'$ and $\tilde{\mathcal{A}}_n$.
\begin{figure}[h!] 
\centering
\begin{tikzpicture}[thick,fill,scale=0.9]
\draw (-1.8,0) circle (1cm);
\node at (-1.8,0) {$  K^*(\An)$};
\draw (-1,0) ellipse (2 and 1.6); 
\draw (-0.55,0) ellipse (2.6 and 2.1); 
\draw (0.3,0) ellipse (3.6 and 2.6); 
\node at (0.1,0.2) {{${K^*(\mathcal{A}}_n')$}};      
\node at (1.45,0.4) {{$\mathrm{PO}$}};           
\node at (2.9,0.6) {{$K^*(\tilde{\mathcal{A}}_n)$}}; 
\begin{scope} \clip (-1.8,0) circle (1cm); 
\foreach \x in {-2.8,-2.4,-2,-1.6,-1.2,-0.8,-0.4,0,0.4,0.8} { \draw[gray, dotted] (\x,-1) -- (\x,1); } 
\end{scope}
\end{tikzpicture} \caption{\footnotesize{
Illustration of Theorem~\ref{n_K^*_PO}, Proposition~\ref{n_PO_new_result}, and Proposition~\ref{n_K^*_PO_new_result} altogether under $(\dagger)$.}}\label{fig:PO_sequential}
\end{figure}

\subsubsection{Motivation for New Equilibrium Concept}\label{motivation}
The preceding subsections clarify the relationship between Pareto optimality and sequential optimization in the multilateral setting. Section~\ref{sec:convex_combination} shows that the converse of Theorem~\ref{n_K^*_PO} cannot be obtained by directly extending the bilateral argument: optimizing a convex combination of several agents' risks does not determine the corresponding individual risk components. Section~\ref{sec:exclusive_unfair} further shows that, if one enlarges the family of sequential matrices, then Pareto optimality can be related to broader sequential procedures; however, these procedures need not possess a balanced account of the agent's objectives as encoded by the original class $\mathcal A_n$.

The class $\mathcal A_n$ has two structural features that make it especially natural for multilateral sequential optimization. First, it is {\it inclusive}: every agent's risk is incorporated into the sequential procedure. Second, it is {\it fair} in the sense of representation: each agent's risk appears at exactly one level of optimization. Thus, no agent is omitted, and no agent is repeatedly prioritized over others. Moreover, every matrix in $\mathcal A_n$ has at most $n+1$ non-zero columns, so the associated sequential optimization procedure consists of finitely many stages. This finite structure contrasts with the larger families $\tilde{\mathcal A}_n$ and $\mathcal A'_n$, and makes the sequential optimizations induced by $\mathcal A_n$ more transparent and computationally tractable.

These observations suggest that the sequential optimizations encoded by $\mathcal A_n$ should correspond to an equilibrium notion that reflects both group-level aggregation and balanced participation of all agents. Such a notion should also be consistent with the bilateral case, where Pareto optimality is particularly tractable and admits a complete characterization through sequential optimization. Classical Pareto optimality, as defined in Definition~\ref{def:PO}, is formulated in terms of individual risks. By contrast, a sequential optimization in $\mathcal A_n$ may optimize, at a given level, a convex combination of the risks of a group of agents. Therefore, to characterize the decisions generated by $\mathcal A_n$, the equilibrium concept should compare not only individual risks but also suitable aggregate risks of groups of agents, while preserving the property that each agent is included exactly once in the ordered sequence. This motivates the introduction of an equilibrium concept, called \emph{inclusive and fair Pareto optimality}, which we define and study in the next section.

\section{Inclusive and Fair Pareto Optimality}\label{sec:iPO}
To define inclusive and fair Pareto optimality, three preliminary concepts must be introduced first: a group risk measure, Pareto optimality between two groups of agents, and an ordered set partition. To simplify the notation, for any $\I\in\In$, denote 
\begin{equation*}
\Y =\mathbf{Y}(\I) = (Y_1(I_1), Y_2(I_2), \dots , Y_n(I_n), Y_{n+1}(\I)),
\end{equation*}
and similarly, for an optimal $\I^*\in\In$,
$
\Y^*=\mathbf{Y}(\I^*) = (Y_1(I_1^*), Y_2(I_2^*), \dots , Y_n(I_n^*), Y_{n+1}(\I^*))
$.

We begin by defining a group risk measure, which represents the collective risk of the agents within a group.
\begin{definition}[Convex-group risk measure]\label{def:grouprisk}
Let the full setting $(\Nn;\In)$, for $n\in\mathbb{N}$, and let a non-empty group of agents $\mathcal{G}\subseteq\Nn$. For any $\I\in\In$, and for any $\bm\lambda^{\mathcal{G}}\in\Delta_n[0,1]$ with $\nzp(\bm\lambda^\mathcal{G})=\mathcal{G}$, define the convex-group risk measure for the group of agents $\mathcal{G}$, as the strict convex combination of their risk measures:
\begin{equation}\label{eq:grouprisk}
\rho(\Y ;\bm\lambda^{\mathcal{G}})  =    \sum_{i\in \mathcal{G}} \lambda^{\mathcal{G}}_i \rho_i(Y_i)\, . 
\end{equation} \qed
\end{definition}
\noindent
We choose the group risk measure to be given by the convex-group risk measure. Consequently, the concepts and results presented in this paper depend on this choice. Here, the convexity refers to the strict convex combination of individual risk measures. The convex-group risk measure itself need not be a convex functional nor a convex risk measure; this depends on the properties of the individual risk measures of the agents involved.

Next, we define Pareto optimality between two groups of agents.
\begin{definition}[Pareto optimality between two groups with convex-group risks]
Consider the full setting $(\Nn;\In)$, for $n\in\mathbb{N}$. Let two groups of agents $\mathcal{G}_1,\mathcal{G}_2\subseteq\Nn$ such that they are non-empty and disjoint; i.e., $\mathcal{G}_1,\mathcal{G}_2\neq \emptyset$ and $\mathcal{G}_1\cap\mathcal{G}_2 = \emptyset$. Each of these groups is endowed with a convex-group risk measure given in \eqref{eq:grouprisk}, via $\bm\lambda^{\mathcal{G}_1},\bm\lambda^{\mathcal{G}_2}\in\Delta_n[0,1]$, with $\nzp(\bm\lambda^{\mathcal{G}_1})=\mathcal{G}_1$ and $\nzp(\bm\lambda^{\mathcal{G}_2})=\mathcal{G}_2$, respectively. Let a feasible subset $\mathcal{C} \subseteq \mathcal{I}_n$.
%
%

We say a decision $\I^* = \left(I^*_{1},I^*_{2},\dots,I^*_{n}\right)\in \mathcal{C}$ is Pareto optimal between the two groups, $\mathcal{G}_1$ and $\mathcal{G}_2$, within $\mathcal{C}$ if there does not exist another decision $\I = (I_1,I_2, \dots, I_n)\in\mathcal{C}$ such that, $\rho(\Y;\bm\lambda^{\mathcal{G}_1}) \leq \rho(\Y^*; \bm\lambda^{\mathcal{G}_1})$, and $\rho(\Y;\bm\lambda^{\mathcal{G}_2}) \leq \rho(\Y^*; \bm\lambda^{\mathcal{G}_2})$, with at least one inequality being strict.
\qed
\end{definition}

The last building block to introduce the concept of inclusive and fair Pareto optimality is called an ordered set partition.
\begin{definition}
An ordered set partition $\mathcal{P}$ of the set $\mathcal{N}_n$ for the $n+1$ agents is a partition of $\Nn$, given as a sequence, i.e., an ordered list, of non-empty subsets of $\mathcal{N}_n$; that is, $\mathcal{P}=(N_1,N_2, \dots, N_m)$, for some $m=1,2,\dots,n+1$, such that (i) $N_k\neq \emptyset$, for all $k=1,2,\dots, m$, $N_k \cap N_l = \emptyset$, for all $k,l=1,2,\dots, m$ with $k\neq l$, and $\mathcal{N}_n = \cup_{k=1}^m N_k$; (ii) two ordered set partitions $(N_1,N_2, \dots, N_m)$ and $(M_1,M_2, \dots, M_r)$, for some $m,r=1,2,\dots,m+1$, are equal if and only if $m=r$ and $N_k=M_k$ for all $k=1,2,\dots,m$.
\qed
\end{definition}
\noindent
The ordered set partition is a well-known concept in combinatorics. 
Denote by $\mathcal{P}(\mathcal{N}_n)$ the set of all ordered set partitions of $\mathcal{N}_n$. The following example illustrates these.

\begin{example}\label{ex:set_order}
When $n=1$, $\mathcal{N}_1=\{1,2\}$, and its set of all ordered set partitions is
\begin{equation}\label{eq:1_set_order}
\mathcal{P}(\mathcal{N}_1) = \{ \mathcal{P}^{(1)}_1=(\mathcal{N}_1), \mathcal{P}_1^{(2)}=( \{1\}, \{2\}), \mathcal{P}_1^{(3)}= (\{2\}, \{1\}) \}\, . 
\end{equation}
When $n=2$, $\mathcal{N}_2=\{1,2,3\}$, and its set of all ordered set partitions is
\begin{align}\label{eq:2_set_order}
\mathcal{P}(\mathcal{N}_2) = \{& \mathcal{P}_2^{(1)}=(\mathcal{N}_2),\,  \nonumber\mathcal{P}_2^{(2)}=( \{1,2\}, \{3\} ), \mathcal{P}_2^{(3)} = ( \{1,3\}, \{2\} ), \mathcal{P}_2^{(4)} = ( \{2,3\}, \{1\} ), \\ \nonumber
&\mathcal{P}_2^{(5)}=( \{1\}, \{2,3\}), \mathcal{P}_2^{(6)}=( \{2\}, \{1,3\}), \mathcal{P}_2^{(7)}=( \{3\}, \{1,2\} ),\\ 
&\mathcal{P}_2^{(8)}=(\{1\}, \{2\}, \{3\}), \mathcal{P}_2^{(9)}=(\{1\}, \{3\}, \{2\}), \mathcal{P}_2^{(10)}=(\{2\}, \{1\}, \{3\}),\\  \nonumber
& \mathcal{P}_2^{(11)}=(\{2\}, \{3\}, \{1\}), \mathcal{P}_2^{(12)}=(\{3\}, \{1\}, \{2\}), \mathcal{P}_2^{(13)}=(\{3\}, \{2\}, \{1\}) \}\, . 
\qed
\end{align}
\end{example}

Equipped with these three concepts, we now define inclusive and fair Pareto optimality.

\begin{definition}[Inclusive and fair Pareto optimality with convex-group risks]\label{def:ipo}
Consider the full setting $(\Nn;\In)$, for $n\in\mathbb{N}$. We say a decision $\mathbf{I}^* = (I_1^*, I_2^*, \dots, I_n^*)\in \mathcal{I}_n$ is inclusive and fair Pareto optimal, with convex-group risks given in \eqref{eq:grouprisk}, under the full setting if there exist an ordered set partition $\mathcal{P}=(N_1,N_2,\dots , N_m)\in \mathcal{P}(\mathcal{N}_n)$, for some $m=1,2,\dots,n+1$, and vectors $\bm\lambda^{N_k}\in \Delta_n[0,1]$ with $\nzp(\bm\lambda^{N_k})=N_k $, for $k=1,2,\dots,m$,  such that both of the following hold:
\begin{enumerate}
\item[(i)] 
$\I^* \in \mathcal{I}^{(1)} =\argmin_{\I \in \mathcal{I}_n} \rho(\Y; \bm\lambda^{N_1})$; and,
\item[(ii)] if $m=2,3,\dots,n+1$, 
$\I^*\in \mathcal{I}^{(l)} $, recursively for $l=2,3,\dots, m $, 
which is the set of all Pareto optimal decision vectors between two groups, $\mathcal{H}_{l-1} = \cup_{k=1}^{l-1} N_k$ and $N_l$, in $\mathcal{I}^{(l-1)}$, with convex-group risks $\rho(\Y;\bm\lambda^{\mathcal{H}_{l-1}})$ and $\rho(\Y;\bm\lambda^{N_l})$, where 
\begin{equation}
\bm\lambda^{\mathcal{H}_{l-1}} = \frac{1}{l-1}\sum_{k=1}^{l-1} \bm\lambda^{N_k} \in \Delta_n[0,1],
\label{eq:inclusive_group_lambda}
\end{equation}
with $\nzp(\bm\lambda^{\mathcal{H}_{l-1}})=\mathcal{H}_{l-1}=\cup_{k=1}^{l-1} N_k$.
\qed
\end{enumerate}
\end{definition}
\noindent
In the sequel, for $l=1,2,\dots, m$, the set $\mathcal{I}^{(l)}$ shall be denoted by $\mathcal{I}^{(l)}([\bm\lambda^{N_1}; \bm\lambda^{N_2}; \dots; \bm\lambda^{N_l}])$, when emphasizing its dependence on the ordered set partition and the vectors.

The definition of inclusive and fair Pareto optimality depends on the ordered set partition $\mathcal{P}=(N_1,N_2,\dots , N_m)\in \mathcal{P}(\mathcal{N}_n)$. From this, $m$ inclusive growing groups of agents are formed:
\begin{equation*}
\mathcal{H}_1=N_1,\;\mathcal{H}_2=N_1\cup N_2,\;\dots,\;\mathcal{H}_{m-1}=\cup_{k=1}^{m-1}N_k,\;N_m.
\end{equation*}
By the condition (i), the inclusive and fair Pareto optimal decision $\mathbf{I}^*$ minimizes directly the convex-group risk of the first group of agents $\mathcal{H}_1=N_1$. When more inclusive groups of agents are formed, so $m=2,3,\dots,n+1$, by condition (ii), for each level $l=2,3,\dots,m$, an inclusive and fair Pareto optimal decision $\mathbf{I}^*$ is Pareto optimal, between the current $l$-th group of agents $N_l$ and the previous inclusive group of agents $\mathcal{H}_{l-1} = \cup_{k=1}^{l-1} N_k$, among those (Pareto) optimal decisions at the previous level $l-1$. Meaning that, recursively, new groups that join the scheme must preserve Pareto optimality with the collective of agents who joined previously. We note that, the convex-group risk of an inclusive group $\mathcal{H}_{l-1}= \cup_{k=1}^{l-1} N_k$, for $l=2,3,\dots,m$, is defined via the vector $\bm\lambda^{\mathcal{H}_{l-1}}$ in \eqref{eq:inclusive_group_lambda}, which is the average of the vectors corresponding to each group of agents considered in that collective, i.e., we assign equal weights $1/(l-1)$ to the groups $N_1,N_2,\dots,N_{l-1}$. However, the choice of the vector $\bm\lambda^{\mathcal{H}_{l-1}}$ 
can be generalized by assigning any positive weights $w^{(l-1)}_k>0$ to the groups $N_k$, for $k=1,2,\dots,l-1$, so that $\bm\lambda^{\mathcal{H}_{l-1}} = \sum_{k=1}^{l-1} w^{(l-1)}_k\bm\lambda^{N_k}$, so long as $\bm{\lambda}^{\mathcal{H}_{l-1}}\in \Delta_n[0,1]$, with $\nzp(\bm{\lambda}^{\mathcal{H}_{l-1}}) = \mathcal{H}_{l-1}= \cup_{k=1}^{l-1} N_k$; such choice will not affect the results of this paper, and hence the equal weights are taken in the definition for simplicity. Following the recursion, the search for inclusive and fair Pareto optimal decisions shrinks the feasible set through $\In\supseteq\mathcal{I}^{(1)}\supseteq\dots\supseteq\mathcal{I}^{(m-1)}\supseteq\mathcal{I}^{(m)}.$

The following proposition shows that inclusive and fair Pareto optimality imply Pareto optimality for the multilateral centralized risk sharing problem.
\begin{proposition}\label{n_iPO_PO}
Let the full setting $(\mathcal{N}_n;\In)$, for $n\in\mathbb{N}$. If $\I^*=(I_1^*, I_2^*, \dots, I_n^*)\in\mathcal{I}_n$ is inclusive and fair Pareto optimal, with convex-group risks given in \eqref{eq:grouprisk}, under the full setting, then $\I^*$ is Pareto optimal under the full setting.
\end{proposition}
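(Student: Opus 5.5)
The plan is to argue by contradiction, using an earliest-stage argument in the spirit of the proof of Theorem~\ref{n_K^*_PO}. Fix the ordered set partition $\mathcal{P}=(N_1,\dots,N_m)$ and the vectors $\bm\lambda^{N_k}$ witnessing inclusive and fair Pareto optimality of $\I^*$, together with the nested sets $\In\supseteq\mathcal{I}^{(1)}\supseteq\dots\supseteq\mathcal{I}^{(m)}\ni\I^*$. Suppose some $\I\in\In$ dominates $\I^*$ in the classical sense: $\rho_i(Y_i)\le\rho_i(Y_i^*)$ for all $i\in\Nn$, with strict inequality on a nonempty set $D$. Since the $N_k$ partition $\Nn$, there is a smallest index $l_0$ with $N_{l_0}\cap D\neq\emptyset$. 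Because every weight in $\bm\lambda^{N_k}$ on $N_k$ is strictly positive, we have $\rho(\Y;\bm\lambda^{N_k})=\rho(\Y^*;\bm\lambda^{N_k})$ for $k<l_0$ and $\rho(\Y;\bm\lambda^{N_{l_0}})<\rho(\Y^*;\bm\lambda^{N_{l_0}})$. Since $\bm\lambda^{\mathcal{H}_{l-1}}$ is a positive average of the $\bm\lambda^{N_k}$, $k<l$, the same reasoning also gives $\rho(\Y;\bm\lambda^{\mathcal{H}_{l-1}})=\rho(\Y^*;\bm\lambda^{\mathcal{H}_{l-1}})$ whenever $l\le l_0$.

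Next, show by induction that $\I\in\mathcal{I}^{(l)}$ for every $l<l_0$.
\begin{itemize}
\item For $l=1$: $\rho(\Y;\bm\lambda^{N_1})=\rho(\Y^*;\bm\lambda^{N_1})$, and $\I^*$ minimizes this quantity over $\In$, so $\I\in\mathcal{I}^{(1)}$.
\item For $2\le l<l_0$: assume $\I\in\mathcal{I}^{(l-1)}$. Then $\I$ attains the same values as $\I^*$ for both group risks $\rho(\cdot;\bm\lambda^{\mathcal{H}_{l-1}})$ and $\rho(\cdot;\bm\lambda^{N_l})$. Any $\J\in\mathcal{I}^{(l-1)}$ dominating $\I$ between the two groups would therefore also dominate $\I^*$. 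That is impossible, since $\I^*\in\mathcal{I}^{(l)}$. Hence $\I$ is Pareto optimal between the two groups within $\mathcal{I}^{(l-1)}$, i.e.\ $\I\in\mathcal{I}^{(l)}$.
\end{itemize}

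This subtle point, that $\mathcal{I}^{(l)}$ is defined as the set of all Pareto optimal decisions and that membership transfers between decisions with equal group-risk values, is the main obstacle, and it needs careful statement.

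Finally, derive the contradiction at stage $l_0$.
\begin{itemize}
\item If $l_0=1$: $\I\in\In$ strictly improves $\rho(\cdot;\bm\lambda^{N_1})$ over $\I^*$, contradicting (i).
\item If $l_0\ge2$: we have $\I\in\mathcal{I}^{(l_0-1)}$, with $\rho(\Y;\bm\lambda^{\mathcal{H}_{l_0-1}})=\rho(\Y^*;\bm\lambda^{\mathcal{H}_{l_0-1}})$ and $\rho(\Y;\bm\lambda^{N_{l_0}})<\rho(\Y^*;\bm\lambda^{N_{l_0}})$. So $\I$ dominates $\I^*$ between $\mathcal{H}_{l_0-1}$ and $N_{l_0}$ within $\mathcal{I}^{(l_0-1)}$, contradicting $\I^*\in\mathcal{I}^{(l_0)}$.
\end{itemize}

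No assumption $(\dagger)$ is needed, consistent with the statement.
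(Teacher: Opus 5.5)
Your proof is correct and follows essentially the same route as the paper. Both argue by contradiction: they place the dominating decision $\I$ inductively in each $\mathcal{I}^{(l)}$ before the critical stage by transferring any between-group domination of $\I$ onto $\I^*$, and then derive the contradiction at that stage. The only cosmetic difference is that the paper fixes an arbitrary strictly improved agent $i_0$ and carries weak inequalities $\rho(\Y;\bm\lambda^{N_k})\le\rho(\Y^*;\bm\lambda^{N_k})$, which already suffice for the transfer, whereas you take the earliest stage containing a strictly improved agent and work with equalities, as in the proof of Theorem~\ref{n_K^*_PO}.
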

\begin{proof}
Assume, on the contrary, that $\mathbf{I}^*$ is not Pareto optimal under the full setting; that is, there exists $\mathbf{I} = \left(I_1,I_2,\dots,I_n\right)\in\mathcal{I}_n$ such that, $\rho_i\left(Y_i\right)\leq\rho_i\left(Y^*_i\right)$, for any $i\in\mathcal{N}_n$, with at least one inequality being strict. Without loss of generality, let $i_0\in\mathcal{N}_n$ such that $\rho_{i_0}\left(Y_{i_0}\right)<\rho_{i_0}\left(Y^*_{i_0}\right)$. By the definition of inclusive and fair Pareto optimality, there exists a unique $k_0=1,2,\dots,m$, such that $i_0\in N_{k_0}$ while $i_0\notin\cup_{k=1;k\neq k_0}^{m}N_{k}$.

Suppose that $k_0=1$. Since $\rho_{i_0}\left(Y_{i_0}\right)<\rho_{i_0}\left(Y^*_{i_0}\right)$, and $\rho_i\left(Y_i\right)\leq\rho_i\left(Y^*_i\right)$, for $i\in N_1\backslash\left\{i_0\right\}$, as well as $\lambda_i^{N_1}>0$, then $\rho(\Y; \bm\lambda^{N_1})=\sum_{i\in N_1} \lambda_i^{N_1} \rho_i(Y_i)<\sum_{i\in N_1} \lambda_i^{N_1} \rho_i(Y^*_i)=\rho(\Y^*; \bm\lambda^{N_1})$ for $i\in N_1$, and thus $\mathbf{I}^*\notin\mathcal{I}^{(1)}$ as $\mathbf{I}\in\mathcal{I}_n$, contradicting the condition (i) of Definition \ref{def:ipo}.

Suppose that $k_0=2,3,\dots,m$. For $k=1,2,\dots,k_0-1$, since $\rho_i\left(Y_i\right)\leq\rho_i\left(Y^*_i\right)$ and $\lambda_i^{N_k}>0$, then $\rho(\Y; \bm\lambda^{N_k})=\sum_{i\in N_k} \lambda_i^{N_k} \rho_i(Y_i)\leq\sum_{i\in N_k} \lambda_i^{N_k} \rho_i(Y^*_i)=\rho(\Y^*; \bm\lambda^{N_k})$ for $i\in N_k$. For $k=1$, since $\mathbf{I}\in\mathcal{I}_n$, by the condition (i) of Definition \ref{def:ipo}, $\rho(\Y^*; \bm\lambda^{N_1})\leq\rho(\Y; \bm\lambda^{N_1})$, and thus $\mathbf{I}^*,\mathbf{I}\in\mathcal{I}^{(1)}$. Recursively, for $k=2,3,\dots,k_0-1$, assume that, although $\mathbf{I}\in\mathcal{I}^{(k-1)}\subseteq\mathcal{I}^{(k-2)}\subseteq\dots\subseteq\mathcal{I}^{(1)}$, on the contrary, $\mathbf{I}\notin\mathcal{I}^{(k)}$; that is, $\mathbf{I}$ is not Pareto optimal between two groups, $\mathcal{H}_{k-1} = \cup_{r=1}^{k-1} N_r$ and $N_k$, in $\mathcal{I}^{(k-1)}$, with convex group risks $\rho(\Y;\bm\lambda^{\mathcal{H}_{k-1}})$ and $\rho(\Y;\bm\lambda^{N_k})$, and hence there exists $\tilde{\mathbf{I}}\in\mathcal{I}^{(k-1)}$ such that $\rho(\tilde{\Y};\bm\lambda^{\mathcal{H}_{k-1}})\leq\rho(\Y;\bm\lambda^{\mathcal{H}_{k-1}})$ and $\rho(\tilde{\Y};\bm\lambda^{N_k})\leq\rho(\Y;\bm\lambda^{N_k})$, with at least one inequality being strict, where $\tilde{\Y} =\mathbf{Y}(\tilde{\I}) = (Y_1(\tilde{I_1}), Y_2(\tilde{I_2}), \dots , Y_n(\tilde{I_n}), Y_{n+1}(\tilde{\I}))$. Since
\begin{equation}
\begin{aligned}
\rho(\Y;\bm\lambda^{\mathcal{H}_{k-1}})=&\;\sum_{i\in \mathcal{H}_{k-1}}\lambda_i^{\mathcal{H}_{k-1}} \rho_i(Y_i)=\sum_{r=1}^{k-1}\sum_{i\in N_r}\lambda_i^{\mathcal{H}_{k-1}}\rho_i(Y_i)=\sum_{r=1}^{k-1}\frac{1}{k-1}\sum_{i\in N_r}\lambda_i^{N_r}\rho_i(Y_i)\\=&\;\sum_{r=1}^{k-1}\frac{1}{k-1}\rho(\Y; \bm\lambda^{N_r})\leq\sum_{r=1}^{k-1}\frac{1}{k-1}\rho(\Y^*; \bm\lambda^{N_r})=\sum_{r=1}^{k-1}\frac{1}{k-1}\sum_{i\in N_r}\lambda_i^{N_r}\rho_i(Y^*_i)\\=&\;\rho(\Y^*;\bm\lambda^{\mathcal{H}_{k-1}}),
\end{aligned}
\label{eq:aggregate_inequality}
\end{equation}
$\rho(\tilde{\Y};\bm\lambda^{\mathcal{H}_{k-1}})\leq\rho(\Y^*;\bm\lambda^{\mathcal{H}_{k-1}})$ and $\rho(\tilde{\Y};\bm\lambda^{N_k})\leq\rho(\Y^*;\bm\lambda^{N_k})$, with at least one inequality being strict, and thus $\mathbf{I}^*\notin\mathcal{I}^{(k)}$ as $\tilde{\mathbf{I}}\in\mathcal{I}^{(k-1)}$, contradicting the condition (ii) of Definition \ref{def:ipo}. Hence, $\mathbf{I}\in\mathcal{I}^{(k)}\subseteq\mathcal{I}^{(k-1)}\subseteq\dots\subseteq\mathcal{I}^{(1)}$, for $k=2,3,\dots,k_0-1$, recursively, and consequently, $\mathbf{I}\in\mathcal{I}^{(k_0-1)}\subseteq\mathcal{I}^{(k_0-2)}\subseteq\dots\subseteq\mathcal{I}^{(1)}\subseteq\mathcal{I}_n$. Since $\rho_{i_0}\left(Y_{i_0}\right)<\rho_{i_0}\left(Y^*_{i_0}\right)$, and $\rho_i\left(Y_i\right)\leq\rho_i\left(Y^*_i\right)$, for $i\in N_{k_0}\backslash\left\{i_0\right\}$, as well as $\lambda_i^{N_{k_0}}>0$, then $\rho(\Y; \bm\lambda^{N_{k_0}})=\sum_{i\in N_{k_0}} \lambda_i^{N_{k_0}} \rho_i(Y_i)<\sum_{i\in N_{k_0}} \lambda_i^{N_{k_0}} \rho_i(Y^*_i)=\rho(\Y^*; \bm\lambda^{N_{k_0}})$ for $i\in N_{k_0}$. By similar arguments as in \eqref{eq:aggregate_inequality}, $\rho(\Y;\bm\lambda^{\mathcal{H}_{k_0-1}})\leq\rho(\Y^*;\bm\lambda^{\mathcal{H}_{k_0-1}})$. Therefore, $\mathbf{I}^*\notin\mathcal{I}^{(k_0)}$ as $\mathbf{I}\in\mathcal{I}^{(k_0-1)}$, contradicting the condition (ii) of Definition \ref{def:ipo}.

Hence, $\mathbf{I}^*$ is Pareto optimal under the full setting.
\end{proof}


Denote by $\mathrm{iFPO}$ the set of inclusive and fair Pareto optimal decisions for the multilateral centralized risk sharing problem. Proposition~\ref{n_iPO_PO} states that $\mathrm{iFPO}\subseteq \mathrm{PO}$. Recall that, Theorem~\ref{n_K^*_PO} also states that $\cup_{\Lambda\in \An} K^*(\Lambda) \subseteq \mathrm{PO}$. Observing that the proofs of Theorem~\ref{n_K^*_PO} and Proposition~\ref{n_iPO_PO} are similar, we shall show the main result of this paper that $\mathrm{iFPO}=\cup_{\Lambda\in \An} K^*(\Lambda)$. It means that, for the multilateral centralized risk sharing problem, the inclusive and fair Pareto optimality is equivalent to the sequential optimizations. This shall be shown by the following two results, namely, Theorem~\ref{n_iPO_K^*} and Theorem~\ref{n_K^*_iPO}. The first theorem states that an inclusive and fair Pareto optimal decision vector solves a sequential optimization, while the second theorem states the converse that, a decision vector solving a sequential optimization is inclusive and fair Pareto optimal. Our proofs shall reveal the pairing, between an ordered set partition $\mathcal{P}\in \mathcal{P}(\Nn)$, and an equivalence class of sequential matrices $\Lambda$ in $\mathcal{A}_n$.

\begin{theorem}\label{n_iPO_K^*}
Consider the full setting $(\mathcal{N}_n;\In)$, for $n\in\mathbb{N}$. If $\I^*=(I_1^*, I_2^*, \dots, I_n^*)\in\mathcal{I}_n$ is inclusive and fair Pareto optimal, with convex-group risks given in \eqref{eq:grouprisk}, under the full setting, then $\mathbf{I}^*\in K^*(\Lambda)$,
for some $\Lambda\in \mathcal{A}_n$, where $K^*$ is given in \eqref{eq:function_K_star_n}.
\end{theorem}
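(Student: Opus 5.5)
The plan is to take the ordered set partition $\mathcal{P}=(N_1,N_2,\dots,N_m)\in\mathcal{P}(\mathcal{N}_n)$ and the vectors $\bm\lambda^{N_1},\dots,\bm\lambda^{N_m}$ that witness the inclusive and fair Pareto optimality of $\I^*$ in Definition~\ref{def:ipo}. From them I form the matrix
\[
\Lambda=[\bm\lambda^{N_1};\bm\lambda^{N_2};\dots;\bm\lambda^{N_m};\mathbf{0};\dots;\mathbf{0}]\in\mathcal{M}_{n+1},
\]
with $n+1-m$ zero columns.

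\textbf{Step 1: $\Lambda\in\mathcal{A}_n$.} Each non-zero column lies in $\Delta_n[0,1]$, so condition (a) of \eqref{A_n} holds. Since $\nzp(\bm\lambda^{N_k})=N_k$ and the $N_k$ partition $\mathcal{N}_n$, each row $\Lambda[i,\cdot]$ has exactly one non-zero entry, namely in the column $k$ with $i\in N_k$. This gives condition (b), with $n(\Lambda)=m$.

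\textbf{Step 2: the key identity.} I will prove by induction on $l=1,\dots,m$ the stronger statement
\[
\mathcal{I}^{(l)}=K^{*(l)}(\Lambda[\cdot,1:l]),
\]
together with the fact that the convex-group risk $\rho(\Y;\bm\lambda^{\mathcal{H}_l})$ is constant on this set. Here $\mathcal{H}_l=\cup_{k\le l}N_k$ and $\bm\lambda^{\mathcal{H}_l}=\frac1l\sum_{k\le l}\bm\lambda^{N_k}$.

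\emph{Base case.} By condition (i) of Definition~\ref{def:ipo} and \eqref{eq:K_star_recursion_1}, $\mathcal{I}^{(1)}=\argmin_{\I\in\In}\rho(\Y;\bm\lambda^{N_1})=K(\bm\lambda^{N_1})=K^{*(1)}(\Lambda[\cdot,1])$. The objective $\rho(\Y;\bm\lambda^{N_1})$ is constant, equal to its minimum, on this argmin set.

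\emph{Inductive step.} Suppose the statement holds for $l-1$, so $\rho(\Y;\bm\lambda^{\mathcal{H}_{l-1}})\equiv c_{l-1}$ on $\mathcal{I}^{(l-1)}$. Then the Pareto comparison between $\mathcal{H}_{l-1}$ and $N_l$ within $\mathcal{I}^{(l-1)}$ degenerates. No decision in $\mathcal{I}^{(l-1)}$ can change the first coordinate, so $\I\in\mathcal{I}^{(l-1)}$ is Pareto optimal between the two groups if and only if no $\tilde{\I}\in\mathcal{I}^{(l-1)}$ has strictly smaller $\rho(\tilde{\Y};\bm\lambda^{N_l})$. That is,
\[
\mathcal{I}^{(l)}=\argmin_{\I\in\mathcal{I}^{(l-1)}}\rho(\Y;\bm\lambda^{N_l})=K\bigl(\bm\lambda^{N_l};K^{*(l-1)}(\Lambda[\cdot,1:l-1])\bigr)=K^{*(l)}(\Lambda[\cdot,1:l]),
\]
using \eqref{eq:K_star_recursion_n}. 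On this set $\rho(\Y;\bm\lambda^{N_l})$ equals its minimum value, which is constant. Linearity of the convex-group risk in the weight vector then gives
\[
\rho(\Y;\bm\lambda^{\mathcal{H}_l})=\tfrac{l-1}{l}\,\rho(\Y;\bm\lambda^{\mathcal{H}_{l-1}})+\tfrac1l\,\rho(\Y;\bm\lambda^{N_l}),
\]
so $\rho(\Y;\bm\lambda^{\mathcal{H}_l})$ is also constant on $\mathcal{I}^{(l)}$, which closes the induction.

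\textbf{Step 3: conclusion.} Applying the identity at $l=m$ gives $\I^*\in\mathcal{I}^{(m)}=K^{*(m)}(\Lambda[\cdot,1:m])=K^*(\Lambda)$ by \eqref{eq:function_K_star_n}. All the sets involved are non-empty because they contain $\I^*$.

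The main obstacle is the inductive step: recognising that Pareto optimality between two groups collapses to plain minimisation of the newcomer group's risk. This works only because the aggregate risk of the already-included agents is frozen on the previous-level set. That is why the induction hypothesis must carry the constancy of $\rho(\cdot;\bm\lambda^{\mathcal{H}_{l-1}})$, not merely the set equality. The argument uses no convexity, so the assumptions $(\dagger)$ are not needed.
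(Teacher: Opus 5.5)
Your proof is correct and follows the paper's route. You build the same matrix $\Lambda=[\bm\lambda^{N_1};\dots;\bm\lambda^{N_m};\mathbf{0};\dots;\mathbf{0}]$ and prove the same inductive identity $\mathcal{I}^{(l)}=K^{*(l)}(\Lambda[\cdot,1:l])$, resting on the same observation that the $\mathcal{H}_{l-1}$ group risk is constant on the previous-level set, so the two-group Pareto condition reduces to minimizing the $N_l$ group risk. The only cosmetic difference is that the paper proves both inclusions by contradiction and obtains the constancy from each $\rho(\cdot;\bm\lambda^{N_k})$, $k<l$, being constant on $K^{*(l-1)}$, whereas you carry the constancy of $\rho(\cdot;\bm\lambda^{\mathcal{H}_{l-1}})$ in the induction hypothesis through the averaging identity.
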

\begin{proof}
By the definition of inclusive and fair Pareto optimality, there exist, an ordered set partition $\mathcal{P}=(N_1,N_2,\dots , N_m)\in \mathcal{P}(\mathcal{N}_n)$, for some $m=1,2,\dots,n+1$, and vectors $\bm\lambda^{N_k}\in \Delta_n[0,1]$, for $k=1,2,\dots,m$, with $\nzp(\bm\lambda^{N_k})=N_k $, such that the conditions (i) and (ii) of Definition \ref{def:ipo} hold.

Define a square matrix, of dimension $n+1$, $\Lambda = [\bm\lambda^{N_1};\bm\lambda^{N_2}; \dots; \bm\lambda^{N_m}; \mathbf{0}; \dots ; \mathbf{0}]$. By construction, $\Lambda\in \mathcal{M}_{n+1}([0,1];0)$; this is because $\bm\lambda^{N_k}\in \Delta_n[0,1]$, and $\nzp(\bm\lambda^{N_k})=N_k\neq\emptyset$, for $k=1,2,\dots,m$. Therefore, $n(\Lambda)=m$. For $j=1,2,\dots,n(\Lambda)$, $\Lambda[\cdot, j]=\bm\lambda^{N_j}\in \Delta_n[0,1]$. Fix an $i\in\mathcal{N}_n$. On the one hand, if $\nzp(\Lambda[i,\cdot]^T)=\emptyset$, then $i\notin N_j$, for all $j=1,2,\dots,n(\Lambda)$, which contradicts that $i\in\Nn=\cup_{j=1}^{n(\Lambda)}N_j$; on the other hand, if $\nzp(\Lambda[i,\cdot]^T)$ contains two distinct indices $j$ and $l$, then $i\in N_j\cap N_l\neq\emptyset$, which contradicts that $N_j\cap N_l=\emptyset$. Hence, for any $i=1,2,\dots,n+1$, there exists a unique $j=1,2,\dots,n(\Lambda)$ such that $\nzp(\Lambda[i,\cdot]^T) =\{j\}$. This proves that $\Lambda\in\mathcal{A}_n$.

We first claim that, for all $l=1,2,\dots,m$,
\begin{equation}
\mathcal{I}^{(l)}(\Lambda[\cdot, 1:l])=K^{*(l)}(\Lambda[\cdot, 1:l]).
\label{eq:mathcal_I_K^*_equal}
\end{equation}
Since $\I^*$ is inclusive and fair Pareto optimal in the full setting $(\Nn; \In)$, $\I^*\in\mathcal{I}^{(m)}(\Lambda[\cdot, 1:m])\subseteq\mathcal{I}^{(m-1)}(\Lambda[\cdot, 1:m-1])\subseteq\dots\subseteq\mathcal{I}^{(1)}(\Lambda[\cdot,1])\subseteq\mathcal{I}_n$. By \eqref{eq:mathcal_I_K^*_equal}, $\I^*\in K^{*(m)}(\Lambda[\cdot, 1:m])\subseteq K^{*(m-1)}(\Lambda[\cdot, 1:m-1])\subseteq\dots\subseteq K^{*(1)}(\Lambda[\cdot,1])\subseteq\mathcal{I}_n$. Because $m=n(\Lambda)$, by \eqref{eq:function_K_star_n}, $\I^*\in K^{*\left(n\left(\Lambda\right)\right)}\left(\Lambda[\cdot,1:n\left(\Lambda\right)]\right)=K^*\left(\Lambda\right)$, as desired.

In the following, we shall prove that the equalities in \eqref{eq:mathcal_I_K^*_equal} hold, for all $l=1,2,\dots,m$. For $l=1$, the equality clearly holds:
\begin{equation*}
\mathcal{I}^{(1)}(\Lambda[\cdot,1])=\argmin_{\I \in \mathcal{I}_n} \rho(\Y; \bm\lambda^{N_1})=\argmin_{\I \in \mathcal{I}_n}\sum_{i\in N_1} \lambda^{N_1}_i \rho_i(Y_i)=K(\bm\lambda ^{N_1})=K^{*\left(1\right)}\left(\Lambda[\cdot,1]\right).
\end{equation*}

Fix a $j=2,3,\dots,m$. Suppose that \eqref{eq:mathcal_I_K^*_equal} hold(s), for all $l=1,2,\dots,j-1$. Assume, on the contrary, that there exists $\mathbf{I}\in \mathcal{I}^{(j)}(\Lambda[\cdot, 1:j])$ such that $\mathbf{I}\notin K^{*(j)}(\Lambda[\cdot, 1:j])$. Since $\mathbf{I}\in \mathcal{I}^{(j)}(\Lambda[\cdot, 1:j])$, the inclusions hold that $\mathbf{I}\in \mathcal{I}^{(j)}(\Lambda[\cdot, 1:j])\subseteq\mathcal{I}^{(j-1)}(\Lambda[\cdot, 1:j-1])=K^{*(j-1)}(\Lambda[\cdot, 1:j-1])\subseteq\dots\subseteq\mathcal{I}^{(1)}(\Lambda[\cdot,1])=K^{*(1)}(\Lambda[\cdot,1])\subseteq\mathcal{I}_n$. However, as $\mathbf{I}\notin K^{*(j)}(\Lambda[\cdot, 1:j])$, there exists $\J\in K^{*(j-1)}(\Lambda[\cdot, 1:j-1])$ such that $\rho(\Y(\mathbf{J}) ; \bm\lambda^{N_j})<\rho(\Y(\mathbf{I}) ; \bm\lambda^{N_j})$. Moreover, the inclusions hold that $\mathbf{J}\in K^{*(j-1)}(\Lambda[\cdot, 1:j-1])\subseteq K^{*(j-2)}(\Lambda[\cdot, 1:j-2])\subseteq\dots\subseteq K^{*(1)}(\Lambda[\cdot,1])\subseteq\mathcal{I}_n$. Therefore, for all $l=1,2,\dots,j-1$, $\rho(\Y(\mathbf{J}) ; \bm\lambda^{N_l})=\rho(\Y(\mathbf{I}) ; \bm\lambda^{N_l})$, which implies that
\begin{equation}
\rho(\Y(\J); \bm\lambda^{\mathcal{H}_{j-1}})=\frac{1}{j-1} \sum_{l=1}^{j-1} \rho(\Y(\J); \bm\lambda^{N_l})= \frac{1}{j-1} \sum_{l=1}^{j-1} \rho(\Y(\I); \bm\lambda^{N_l})=\rho(\Y(\I); \bm\lambda^{\mathcal{H}_{j-1}}).
\label{eq:I_J_G_{l-1}_risk_same}
\end{equation}
This contradicts the fact that $\mathbf{I}\in \mathcal{I}^{(j)}(\Lambda[\cdot, 1:j])$, is Pareto optimal between $\mathcal{H}_{j-1}$ and $N_j$, as $\J\in K^{*(j-1)}(\Lambda[\cdot, 1:j-1])=\mathcal{I}^{(j-1)}(\Lambda[\cdot, 1:j-1])$. Hence, $\mathcal{I}^{(j)}(\Lambda[\cdot, 1:j])\subseteq K^{*(j)}(\Lambda[\cdot, 1:j])$.

Assume, on the contrary, that there exists $\mathbf{I}\in K^{*(j)}(\Lambda[\cdot, 1:j])$ such that $\mathbf{I}\notin \mathcal{I}^{(j)}(\Lambda[\cdot, 1:j])$. Since $\mathbf{I}\in K^{*(j)}(\Lambda[\cdot, 1:j])$, the inclusions hold that $\mathbf{I}\in K^{*(j)}(\Lambda[\cdot, 1:j])\subseteq K^{*(j-1)}(\Lambda[\cdot, 1:j-1])=\mathcal{I}^{(j-1)}(\Lambda[\cdot, 1:j-1])\subseteq\dots\subseteq K^{*(1)}(\Lambda[\cdot,1])=\mathcal{I}^{(1)}(\Lambda[\cdot,1])\subseteq\mathcal{I}_n$. As $\mathbf{I}\notin \mathcal{I}^{(j)}(\Lambda[\cdot, 1:j])$, there exists $\J\in\mathcal{I}^{(j-1)}(\Lambda[\cdot, 1:j-1])$ such that $\rho(\Y(\J); \bm\lambda^{\mathcal{H}_{j-1}})\leq\rho(\Y(\I); \bm\lambda^{\mathcal{H}_{j-1}})$ and $\rho(\Y(\mathbf{J}) ; \bm\lambda^{N_j})\leq\rho(\Y(\mathbf{I}) ; \bm\lambda^{N_j})$, with at least one inequality being strict. Moreover, the inclusions hold that $\mathbf{J}\in\mathcal{I}^{(j-1)}(\Lambda[\cdot, 1:j-1])=K^{*(j-1)}(\Lambda[\cdot, 1:j-1])\subseteq K^{*(j-2)}(\Lambda[\cdot, 1:j-2])\subseteq\dots\subseteq K^{*(1)}(\Lambda[\cdot,1])\subseteq\mathcal{I}_n$. Therefore, for all $l=1,2,\dots,j-1$, $\rho(\Y(\mathbf{J}) ; \bm\lambda^{N_l})=\rho(\Y(\mathbf{I}) ; \bm\lambda^{N_l})$, implying that \eqref{eq:I_J_G_{l-1}_risk_same} holds, and thus, $\rho(\Y(\mathbf{J}) ; \bm\lambda^{N_j})<\rho(\Y(\mathbf{I}) ; \bm\lambda^{N_j})$, which contradicts that $\mathbf{I}\in K^{*(j)}(\Lambda[\cdot, 1:j])$, as $\mathbf{J}\in\mathcal{I}^{(j-1)}(\Lambda[\cdot, 1:j-1])=K^{*(j-1)}(\Lambda[\cdot, 1:j-1])$. Hence, $K^{*(j)}(\Lambda[\cdot, 1:j])\subseteq\mathcal{I}^{(j)}(\Lambda[\cdot, 1:j])$.

To conclude, for any $j=2,3,\dots,m$, if \eqref{eq:mathcal_I_K^*_equal} hold(s), for all $l=1,2,\dots,j-1$, then the equality in \eqref{eq:mathcal_I_K^*_equal} also holds for $l=j$. Therefore, given that the equality in \eqref{eq:mathcal_I_K^*_equal} holds for $l=1$, recursively, for $l=2,3,\dots,m$, \eqref{eq:mathcal_I_K^*_equal} also holds.
\end{proof}

\begin{theorem}\label{n_K^*_iPO}
Let the full setting $(\mathcal{N}_n;\In)$, for $n\in\mathbb{N}$. If $\I^*=(I_1^*, I_2^*, \dots, I_n^*)\in K^*(\Lambda)$, where 
$K^*$ is given in \eqref{eq:function_K_star_n}, for some $\Lambda\in \mathcal{A}_n$, then $\mathbf{I}^*\in\In$ is inclusive and fair Pareto optimal, with convex-group risks given in \eqref{eq:grouprisk}, under the full setting.
\end{theorem}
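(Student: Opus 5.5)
Given $\Lambda\in\mathcal{A}_n$ with $\mathbf{I}^*\in K^*(\Lambda)$, the plan is to read off an ordered set partition and weight vectors from $\Lambda$. We then show that the recursively defined sets $\mathcal{I}^{(l)}$ of Definition~\ref{def:ipo} coincide with the sets $K^{*(l)}$. In other words, we rerun the argument establishing \eqref{eq:mathcal_I_K^*_equal} in the proof of Theorem~\ref{n_iPO_K^*}, but this time starting from the matrix rather than from the partition.

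\textbf{Construction of the partition.} Write $\Lambda=[\bm\lambda^{(1)};\dots;\bm\lambda^{(m)};\mathbf{0};\dots;\mathbf{0}]$ with $m=n(\Lambda)$, and set $N_j=\nzp(\bm\lambda^{(j)})$ and $\bm\lambda^{N_j}=\bm\lambda^{(j)}$. By Remark~\ref{rem:Anpartition}, the sets $N_1,\dots,N_m$ are non-empty, pairwise disjoint, and cover $\mathcal{N}_n$. Hence $\mathcal{P}=(N_1,\dots,N_m)\in\mathcal{P}(\mathcal{N}_n)$. Each $\bm\lambda^{N_j}$ lies in $\Delta_n[0,1]$ and has support exactly $N_j$, so it is an admissible weight vector for a convex-group risk measure. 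Define $\bm\lambda^{\mathcal{H}_{l-1}}$ by \eqref{eq:inclusive_group_lambda}. Since the $N_k$ are disjoint, its support is exactly $\mathcal{H}_{l-1}$.

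\textbf{Induction on $l$.} We prove $\mathcal{I}^{(l)}=K^{*(l)}(\Lambda[\cdot,1:l])$ for $l=1,\dots,m$.

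For $l=1$ both sides equal $\argmin_{\mathbf{I}\in\mathcal{I}_n}\rho(\mathbf{Y};\bm\lambda^{N_1})$.

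For the inductive step, fix $j\ge 2$ and assume $\mathcal{I}^{(j-1)}=K^{*(j-1)}$. The key observation is that every element of $K^{*(j-1)}$ attains the same value of $\rho(\mathbf{Y};\bm\lambda^{N_l})$ for each $l<j$. Indeed, $K^{*(j-1)}\subseteq K^{*(l)}$, and $K^{*(l)}$ is an argmin set of $\rho(\cdot;\bm\lambda^{N_l})$ over a set containing $K^{*(j-1)}$. Averaging over $l<j$, the group risk $\rho(\mathbf{Y};\bm\lambda^{\mathcal{H}_{j-1}})$ is therefore constant on $\mathcal{I}^{(j-1)}$.

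With the first objective constant on the feasible set, Pareto optimality between $\mathcal{H}_{j-1}$ and $N_j$ within $\mathcal{I}^{(j-1)}$ reduces to minimizing $\rho(\mathbf{Y};\bm\lambda^{N_j})$ over $\mathcal{I}^{(j-1)}=K^{*(j-1)}$. That set of minimizers is exactly $K^{*(j)}$. We verify both inclusions by contradiction, exactly as in the proof of Theorem~\ref{n_iPO_K^*}.
\begin{itemize}
\item If $\mathbf{I}\in K^{*(j)}$ were dominated by some $\mathbf{J}\in\mathcal{I}^{(j-1)}$, then the $\mathcal{H}_{j-1}$-risks of $\mathbf{I}$ and $\mathbf{J}$ are equal, so the strict inequality must occur in the $N_j$-risk. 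This contradicts minimality of $\mathbf{I}$.
\item If $\mathbf{I}\in\mathcal{I}^{(j)}$ were not a minimizer of the $N_j$-risk over $K^{*(j-1)}$, the minimizing competitor would dominate $\mathbf{I}$, again because the $\mathcal{H}_{j-1}$-risks are equal.
\end{itemize}

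\textbf{Conclusion.} We obtain $\mathbf{I}^*\in K^*(\Lambda)=K^{*(m)}=\mathcal{I}^{(m)}$. Together with the nested inclusions $\mathcal{I}^{(m)}\subseteq\cdots\subseteq\mathcal{I}^{(1)}$, this gives conditions (i) and (ii) of Definition~\ref{def:ipo}.

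The only delicate point is the constancy of the aggregate $\mathcal{H}_{j-1}$-risk on $K^{*(j-1)}$. This must be justified by carrying the nested argmin structure through all previous stages, not just the last one. No assumption $(\dagger)$ is needed anywhere in the argument.
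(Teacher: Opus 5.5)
Your proposal is correct and follows the paper's proof: you read off the ordered set partition from the non-zero columns via Remark~\ref{rem:Anpartition}, then establish $\mathcal{I}^{(l)}=K^{*(l)}(\Lambda[\cdot,1:l])$ for all $l$ by the same induction and two-sided contradiction argument used for \eqref{eq:mathcal_I_K^*_equal} in Theorem~\ref{n_iPO_K^*}, with the constancy of the earlier group risks on $K^{*(j-1)}(\Lambda[\cdot,1:j-1])$ as the key step. One small wording fix: in your second bullet you only need some $\mathbf{J}\in K^{*(j-1)}(\Lambda[\cdot,1:j-1])$ with strictly smaller $N_j$-risk, not a ``minimizing competitor'', since a minimizer need not exist.
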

\begin{proof}
By definition, there exists $\Lambda = [\bm\lambda^{(1)}; \bm\lambda^{(2)};\dots ; \bm\lambda^{(n(\Lambda))}; \mathbf{0}; \dots; \mathbf{0} ]\in\mathcal{M}_{n+1} ([0,1]; 0)$, for some $n(\Lambda)=1,2,\dots,n+1$, such that, for $j=1,2,\dots, n(\Lambda)$, $\Lambda[\cdot, j]=\bm\lambda^{(j)} \in \Delta_{n}[0,1]$; for $i\in\mathcal{N}_n$, there exists a unique $j=1,2,\dots, n(\Lambda)$, such that $\nzp(\Lambda[i,\cdot]^T)  = \{j\}$; and, $\I^*\in K^*(\Lambda)=K^{*\left(n\left(\Lambda\right)\right)}\left(\Lambda[\cdot,1:n\left(\Lambda\right)]\right)\subseteq K^{*\left(n\left(\Lambda\right)-1\right)}\left(\Lambda[\cdot,1:n\left(\Lambda\right)-1]\right)\subseteq\dots\subseteq K^{*(1)}(\Lambda[\cdot,1])\subseteq\mathcal{I}_n$. Define subset(s), $N_j=\nzp(\Lambda[\cdot, j])=\nzp(\bm\lambda^{(j)})\subseteq\mathcal{N}_n$, for $j=1,2,\dots, n(\Lambda)$. By Remark~\ref{rem:Anpartition}, the subset(s) $N_j$, for $j=1,2,\dots, n(\Lambda)$, define a partition of $\Nn$. Following the same ordering as the columns of $\Lambda$, define an ordered set partition $\mathcal{P}=(N_1,N_2, \dots,N_{m})\in \mathcal{P}(\mathcal{N}_n)$, with $m=n(\Lambda)$.

By exactly the same arguments as in the proof of Theorem \ref{n_iPO_K^*}, for all $l=1,2,\dots,m$,
\begin{equation*}
K^{*(l)}(\Lambda[\cdot, 1:l])=\mathcal{I}^{(l)}(\Lambda[\cdot, 1:l]).
\end{equation*}
Since $\I^*\in K^{*\left(n\left(\Lambda\right)\right)}\left(\Lambda[\cdot,1:n\left(\Lambda\right)]\right)\subseteq K^{*\left(n\left(\Lambda\right)-1\right)}\left(\Lambda[\cdot,1:n\left(\Lambda\right)-1]\right)\subseteq\dots\subseteq K^{*(1)}(\Lambda[\cdot,1])\subseteq\mathcal{I}_n$, and $n(\Lambda)=m$, it follows that $\I^*\in\mathcal{I}^{(m)}(\Lambda[\cdot, 1:m])\subseteq\mathcal{I}^{(m-1)}(\Lambda[\cdot, 1:m-1])\subseteq\dots\subseteq\mathcal{I}^{(1)}(\Lambda[\cdot,1])\subseteq\mathcal{I}_n$, and hence $\I^*$ is inclusive and fair Pareto optimal, as desired.
\end{proof}

By Theorem~\ref{n_iPO_K^*} and Theorem~\ref{n_K^*_iPO}, the following corollary states the characterization of the inclusive and fair Pareto optimality and the sequential optimizations. We also highlight the equivalence of the two sets in \eqref{eq:mathcal_I_K^*_equal}, which is critical for the characterization.
\begin{corollary}\label{n_iPO=PO}
Consider the full setting $(\mathcal{N}_n;\In)$, for $n\in\mathbb{N}$. Then, $\I^*=(I_1^*, I_2^*, \dots, I_n^*)\in\mathcal{I}_n$ is inclusive and fair Pareto optimal, with convex-group risks given in \eqref{eq:grouprisk}, under the full setting, if and only if $\mathbf{I}^*\in K^*(\Lambda)$, for some $\Lambda\in \mathcal{A}_n$,
where $K^*$ is given as in \eqref{eq:function_K_star_n}. Moreover, for any such $\I^*$, we have that $\I^*\in\mathcal{I}^{(l)}(\Lambda[\cdot, 1:l])=K^{*(l)}(\Lambda[\cdot, 1:l])$, for all $l=1,2,\dots,m$, with $m=n(\Lambda)$.
\end{corollary}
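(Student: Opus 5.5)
The corollary follows by combining Theorem~\ref{n_iPO_K^*} and Theorem~\ref{n_K^*_iPO}. The plan is to read off both directions of the equivalence from these two results, and then obtain the ``moreover'' part from the set identity \eqref{eq:mathcal_I_K^*_equal} established in their proofs.

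For the forward implication, suppose $\I^*$ is inclusive and fair Pareto optimal, with witnesses $\mathcal{P}=(N_1,\dots,N_m)$ and $\bm\lambda^{N_1},\dots,\bm\lambda^{N_m}$. Following the proof of Theorem~\ref{n_iPO_K^*}, build $\Lambda=[\bm\lambda^{N_1};\dots;\bm\lambda^{N_m};\mathbf{0};\dots;\mathbf{0}]$. Check $\Lambda\in\mathcal{A}_n$ using the partition property: each agent lies in exactly one $N_k$, so each row has exactly one nonzero entry. Then invoke \eqref{eq:mathcal_I_K^*_equal} to get $\I^*\in\mathcal{I}^{(m)}=K^{*(m)}=K^*(\Lambda)$, since $n(\Lambda)=m$.

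For the reverse implication, let $\I^*\in K^*(\Lambda)$ with $\Lambda\in\mathcal{A}_n$. By Remark~\ref{rem:Anpartition}, the supports $N_j=\nzp(\Lambda[\cdot,j])$ form an ordered set partition, and the columns serve as the vectors $\bm\lambda^{N_j}$. Theorem~\ref{n_K^*_iPO} then yields inclusive and fair Pareto optimality.

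For the ``moreover'' part, take either direction with the associated pair $(\mathcal{P},\Lambda)$. The identity $\mathcal{I}^{(l)}(\Lambda[\cdot,1:l])=K^{*(l)}(\Lambda[\cdot,1:l])$ holds for every $l\le m=n(\Lambda)$; this is the claim proved inside Theorem~\ref{n_iPO_K^*} by induction on $l$, and reused verbatim in Theorem~\ref{n_K^*_iPO}. The nested inclusions $K^{*(m)}\subseteq\dots\subseteq K^{*(1)}$, combined with $\I^*\in K^{*(m)}$, then place $\I^*$ in every level set.

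The only point needing care is the bookkeeping: $\Lambda$ in the statement must be the matrix paired with the ordered partition, that is, with columns exactly $\bm\lambda^{N_k}$. The identity is a property of that pair, so it should not be applied to an arbitrary representation of $\I^*$. I would state this explicitly. The inductive step behind \eqref{eq:mathcal_I_K^*_equal} rests on one observation: on $K^{*(j-1)}$, all previous group risks are constant, so the aggregate risk $\rho(\cdot;\bm\lambda^{\mathcal{H}_{j-1}})$ is constant there. Consequently, Pareto optimality between $\mathcal{H}_{j-1}$ and $N_j$ reduces to minimizing $\rho(\cdot;\bm\lambda^{N_j})$, and this is already done in the cited proofs.
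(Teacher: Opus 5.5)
Your proposal is correct and follows the paper's route: the equivalence comes from combining Theorems~\ref{n_iPO_K^*} and~\ref{n_K^*_iPO}, and the ``moreover'' part comes from the level-wise identity \eqref{eq:mathcal_I_K^*_equal} together with the nested inclusions. One refinement to your bookkeeping remark: that identity is a set equality that does not involve $\I^*$, so it holds for every $\Lambda\in\mathcal{A}_n$; the ``moreover'' conclusion therefore holds for every $\Lambda$ with $\I^*\in K^*(\Lambda)$, i.e.\ for any representation of $\I^*$, and only the membership depends on the witness.
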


We note several remarks for Theorem~\ref{n_iPO_K^*}, Theorem~\ref{n_K^*_iPO}, and Corollary~\ref{n_iPO=PO}. First, 
Theorem~\ref{n_iPO_K^*} shows that an inclusive and fair Pareto optimal decision vector solves an associated sequential optimization problem, extending Theorem~\ref{2_PO_K^*} from the bilateral to the multilateral setting in an inclusive and fair framework.

A further difference is that, unlike Theorem~\ref{2_PO_K^*}, Theorem~\ref{n_iPO_K^*} does not rely on the assumptions ($\dagger$). This is a consequence of our choice of the (convex-)group risk measures in Definition~\ref{def:grouprisk},
%
which are adopted to define the inclusive and fair Pareto optimality 
in Definition~\ref{def:ipo}. This modelling choice does not require within-group Pareto equilibria. Thus, dropping the assumptions ($\dagger$) is achieved at the expense of restricting attention to the convex-group risk measures. Importantly, this restriction is also related to the scalarization result in Theorem~\ref{n_PO_min}: under $(\dagger)$, Pareto optimality 
solves the convex-group risk for the agents in a certain group,
and thus, our characterization of inclusive Pareto optimality does not require $(\dagger)$ since condition $(i)$ in Definition~\ref{def:ipo} directly implies this step.
Furthermore, requiring  $\nzp(\lambda_{\mathcal{G}})=\mathcal{G}$ in Definition~\ref{def:ipo} is 
an additional modelling choice, imposed to ensure that all agents in the group are explicitly represented. 
Extending the concept of inclusive and fair Pareto optimality to allow for within-group equilibria shall be an interesting direction for future research, such as within-group Geoffrion-proper Pareto optimality under $(\dagger)$.

Next, Theorem~\ref{n_K^*_iPO} strengthens Theorem~\ref{n_K^*_PO}; notably, neither result requires the assumptions~($\dagger$). Theorem~\ref{n_K^*_iPO} establishes that any decision vector that solves a sequential optimization problem is not just Pareto optimal (by Theorem~\ref{n_K^*_PO}) but, in fact, is inclusive and fair Pareto optimal, which in turn implies canonical Pareto optimality. Moreover, Theorem~\ref{n_K^*_iPO} combined with Proposition~\ref{n_iPO_PO} yields Theorem~\ref{n_K^*_PO}.

Finally, Corollary~\ref{n_iPO=PO} encapsulates the main results of this paper. The proofs of Theorems~\ref{n_iPO_K^*} and~\ref{n_K^*_iPO} establish a pairing between (i) an ordered set partition $\mathcal{P}\in\mathcal{P}(\Nn)$ and (ii) an equivalence class of sequential matrices $\Lambda$ in $\mathcal{A}_n$. Consequently, the number of the ordered set partitions of $\Nn$, denoted by $\lvert\mathcal{P}(\Nn)\rvert$, coincides with the number of the distinct equivalence classes in $\mathcal{A}_n$, denoted by $\mathrm{d}(\mathcal{A}_n)$ (recall that from Section~\ref{sec:Multilateral}); that is, $\vert\mathcal{P}(\mathcal{N}_n)\vert= \mathrm{d}(\mathcal{A}_n)$. Moreover, these pairings induce a finer matching between an ordered set partition together with the $m$ vectors in Definition~\ref{def:ipo} and a representative matrix $\Lambda\in\mathcal{A}_n$ within its class. This yields the 
key equivalence:  
$\mathcal{I}^{(l)}(\Lambda[\cdot, 1:l])=K^{*(l)}(\Lambda[\cdot, 1:l])$, for all $l=1,2,\dots,n(\Lambda)$, with $n(\Lambda)=m$.

To summarize, 
$\mathrm{iFPO} = \cup_{\Lambda\in \An} K^*(\Lambda)\subseteq\mathrm{PO}$, as depicted in Figure~\ref{fig:n_iPO=K^*_PO}. Let $\An^{(j)}$, for $j=1,2,\dots,\mathrm{d}(\An)$, denote the $j$-th distinct equivalence classes in $\mathcal{A}_n$. Then, one can write $\cup_{\Lambda\in \An} K^*(\Lambda) = K^*(\An^{(1)}) \cup K^*(\An^{(2)}) \cup \cdots \cup K^*(\An^{(\mathrm{d}(\An))})$, where $K^*(\An^{(j)}) = \cup_{\Lambda\in \An^{(j)}} K^*(\Lambda)$. By Theorem~\ref{n_K^*_PO}, these $\mathrm{d}(\An)$ classes of the sequentially optimal decision vectors are Pareto optimal; this is illustrated in Figure~\ref{fig:n_iPO_circles}. As noted for the bilateral case, the figure depicts each set separately, but they need not be disjoint. Comparing Figure~\ref{fig:n_iPO=K^*_PO} and Figure~\ref{fig:n_iPO_circles}, the introduction of the equilibrium concept, the inclusive and fair Pareto optimality, unifies all of these sequentially optimal decision vectors which are Pareto optimal.

\begin{figure}[h!] 
\begin{subfigure}[C]{0.45\linewidth}
\centering
\begin{tikzpicture}[thick,fill,scale=0.9]
\draw (-0.35,1.3) circle (0.95cm);
\draw (-1.75,-0.2) circle (0.95cm) ;
\draw (0.8,-0.9) circle (0.95cm)  ;
\draw (0,0) ellipse (3 and 2.5);
\node at (1.6,1.1) {$\mathrm{PO}$};
\node at (-0.25,0.05) {$\ddots$};
\node at (-0.35,1.3) {$K^*(\An^{(1)})$};
\node at (-1.75,-0.2) {$K^*(\An^{(2)})$};
\node at (0.8,-0.9) {$K^*(\An^{(\mathrm{d})})$};
\begin{scope}
  \clip (-0.35,1.3) circle (0.95cm);
  \foreach \t in {-3.0,-2.6,...,3.0} {
    \draw[gray, dotted, line width=0.6pt] (-3+\t,-3) -- (3+\t,3);
  }
\end{scope}
\begin{scope}
  \clip (-1.75,-0.2) circle (0.95cm);
  \foreach \t in {-3.0,-2.6,...,3.0} {
    \draw[gray, dotted, line width=0.6pt] (-3+\t,-3) -- (3+\t,3);
  }
\end{scope}
\begin{scope}
  \clip (0.8,-0.9) circle (0.95cm);
  \foreach \t in {-3.0,-2.6,...,3.0} {
    \draw[gray, dotted, line width=0.6pt] (-3+\t,-3) -- (3+\t,3);
  }
\end{scope}
\end{tikzpicture}
\caption{\scriptsize{Theorem~\ref{n_K^*_PO}.
}}\label{fig:n_iPO_circles}
\end{subfigure}%
~
\begin{subfigure}[c]{0.45\textwidth}
\centering
\begin{tikzpicture}[thick,fill,scale=0.9]
\draw (0,0) ellipse (3 and 2.5);
\draw (-0.74, 0) ellipse (2.1 and 1.81 );
\node at (1.6,1.1) {$\mathrm{PO}$};
\node at (-0.74,0.3) { \small{$\mathrm{iFPO} = \bigcup\limits_{\Lambda\in \An} K^*(\Lambda)$ =}};
\node at (-0.74,-0.3) { \scriptsize{$K^*(\An^{(1)})   \cup \cdots \cup K^*(\An^{(\mathrm{d})})  $ }};
\begin{scope}
  \clip (-0.75,0) ellipse (2 and 1.8);
  \foreach \t in {-4,-3.6,...,4} {
    \draw[gray, dotted, line width=0.6pt] (-4+\t,-4) -- (4+\t,4);
  }
  \end{scope}
\end{tikzpicture}
\caption{\scriptsize{Corollary~\ref{n_iPO_PO}.}}\label{fig:n_iPO=K^*_PO}
\end{subfigure}
\caption{\footnotesize{Illustrations of (a) Theorem~\ref{n_K^*_PO}, and (b) Corollary~\ref{n_iPO_PO}, by rewriting  $ \cup_{\Lambda\in \An}K^*(\Lambda) $ as the union of $K^{*}(\An^{(j)})$, for $j=1,2,\dots, \mathrm{d}$, where $\mathrm{d}=\mathrm{d}(\An)$.}}
\label{fig:n_iPO}
\end{figure}

Finally, we remark that if $\mathbf{I}^*\in K\left({\bm\lambda}\right)$ for some ${\bm\lambda}\in \Delta_n(0,1)$, then $\mathbf{I}^*\in\cup_{\Lambda\in \An} K^*(\Lambda)=\mathrm{iFPO}$. Together with Proposition~\ref{n_iPO_PO}, under the full setting, Geoffrion-proper Pareto optimality implies inclusive and fair Pareto optimality, which further implies classical Pareto optimality.



\subsection{Revisiting Bilateral Case}
Under the bilateral setting, Corollary~\ref{2_result} and Corollary~\ref{n_iPO=PO} lead to the following summary.
\begin{corollary}\label{2_iPO_K^*_PO}
Let the bilateral setting $(\mathcal{N}_1; \mathcal{I}_1)$ satisfy $(\dagger)$. The following statements are equivalent.
\begin{enumerate}
\item[(i)] $I^*\in\mathcal{I}_1$ is Pareto optimal under the bilateral setting.
\vspace{-3mm}
\item[(ii)] $I^*\in K^*(\Lambda)$, where the set-valued function $K^*$ is given as in \eqref{eq:function_K_star_n}, for some $\Lambda\in \mathcal{A}_1$ given in \eqref{eq:A1}.
\vspace{-3mm}
\item[(iii)] $I^*$ is inclusive and fair Pareto optimal, with convex-group risks given in \eqref{eq:grouprisk}, under the bilateral setting.
\end{enumerate}
\end{corollary}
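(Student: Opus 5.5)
The plan is to assemble the three-way equivalence from results already established, with no new computation. The equivalence (ii)$\Leftrightarrow$(iii) is Corollary~\ref{n_iPO=PO} specialized to $n=1$. That corollary needs no assumption $(\dagger)$, so it holds a fortiori under $(\dagger)$.

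It then remains to link (i) with (ii). I would invoke Corollary~\ref{2_result}, which under $(\dagger)$ states that $I^*$ is Pareto optimal if and only if $I^*\in K^*(\lambda)$ for some $\lambda\in[0,1]$, with $K^*$ as in \eqref{eq:function_K_star}. The only work is to identify $\cup_{\lambda\in[0,1]}K^*(\lambda)$, with $K^*$ from \eqref{eq:function_K_star}, with $\cup_{\Lambda\in\mathcal{A}_1}K^*(\Lambda)$, with $K^*$ from \eqref{eq:function_K_star_n}. This is the case-by-case check already carried out in the bullet list following \eqref{eq:function_K_star_1}:
\begin{itemize}
\item each $\Lambda\in\mathcal{A}_1^{(1)}$ with first column $(\lambda,1-\lambda)^T$ gives $K(\lambda)$ for $\lambda\in(0,1)$;
\item the unique matrix in $\mathcal{A}_1^{(2)}$ gives $K^*(1)$;
\item the unique matrix in $\mathcal{A}_1^{(3)}$ gives $K^*(0)$.
\end{itemize}

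Two small points need confirming along the way. First, the set $\mathcal{A}_1$ from \eqref{eq:A1} coincides with $\mathcal{A}_n$ from \eqref{A_n} at $n=1$. Indeed, conditions (a) and (b) of \eqref{A_n} force either a single column with both entries positive, or two columns forming a permutation matrix. Second, the recursive definitions \eqref{eq:K_star_recursion_1}--\eqref{eq:function_K_star_n} reduce to \eqref{eq:function_K_star_recursion_n_2}--\eqref{eq:function_K_star_1} when $n=1$.

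For the chain of implications, (i)$\Rightarrow$(ii) follows from Corollary~\ref{2_result} together with this identification. The step (ii)$\Rightarrow$(i) can be taken either from Theorem~\ref{n_K^*_PO} or from Corollary~\ref{2_result}. The step (ii)$\Rightarrow$(iii) is Theorem~\ref{n_K^*_iPO}, and (iii)$\Rightarrow$(ii) is Theorem~\ref{n_iPO_K^*}. Alternatively, (iii)$\Rightarrow$(i) follows directly from Proposition~\ref{n_iPO_PO}.

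I expect no genuine obstacle. The only care needed is bookkeeping of $\lambda$ versus $1-\lambda$ in the boundary classes: $K^*(1)$ corresponds to first optimizing $\rho_1$, which is the identity matrix, and $K^*(0)$ to first optimizing $\rho_2$, which is the anti-diagonal matrix.
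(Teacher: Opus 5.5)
Your proposal is correct and follows the paper's own route: the paper obtains this corollary by combining Corollary~\ref{2_result}, which gives (i)$\Leftrightarrow$(ii) via the case-by-case identification of $K^*(\lambda)$ with $K^*(\Lambda)$ over the three classes of $\mathcal{A}_1$, and Corollary~\ref{n_iPO=PO} at $n=1$, which gives (ii)$\Leftrightarrow$(iii). Your additional checks are accurate and make explicit what the paper leaves implicit, namely that $\mathcal{A}_1$ in \eqref{eq:A1} coincides with $\mathcal{A}_n$ at $n=1$ and that the boundary classes match as $K^*(1)\leftrightarrow\mathcal{A}_1^{(2)}$ and $K^*(0)\leftrightarrow\mathcal{A}_1^{(3)}$.
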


Figure~\ref{fig:2_iPO=K^*_PO} illustrates the set equalities, $\mathrm{PO}=\cup_{\Lambda\in\mathcal{A}_1} K^*(\Lambda) = \mathrm{iFPO}$, as stated in Corollary~\ref{2_iPO_K^*_PO}.
The ordered set partitions $\mathcal{P}\in\mathcal{P}(\mathcal{N}_1)$ are listed in \eqref{eq:1_set_order}. When $m=1$, i.e., $\mathcal{P}=\mathcal{P}_1^{(1)}=(\mathcal{N}_1)$, a decision vector is inclusive and fair Pareto optimal if and only if it minimizes a strict convex combination of the two agents' risk measures. Equivalently, it lies in $K^*(\mathcal{A}_1^{(1)})$, where $\mathcal{A}_1^{(1)}$ is given in \eqref{eq:A_1_1}. For $m=2$, consider $\mathcal{P}=\mathcal{P}_1^{(2)}=(\{1\},\{2\})$ (resp. $\mathcal{P}=\mathcal{P}_1^{(3)}=(\{2\},\{1\})$). Then, a decision vector is inclusive and fair Pareto optimal if and only if it is Pareto optimal between the two agents among decisions that are optimal for agent~$1$ (resp. agent~$2$). Equivalently, it lies in $K^*(\mathcal{A}_1^{(2)})$ (resp. $K^*(\mathcal{A}_1^{(3)})$), where $\mathcal{A}_1^{(2)}$ and $\mathcal{A}_1^{(3)}$ are given in \eqref{eq:A_1}. Although this case involves a bi-level optimization problem, each group actually consists of a single agent. Finally, Figure~\ref{fig:2_K*partition_A} re-interprets Figure~\ref{fig:2_partition*} in terms of the $\mathrm{d}(\mathcal{A}_1)=3$ classes of the sequential optimization problems.

\begin{figure}[h!] 
\begin{subfigure}[C]{0.45\linewidth}
\centering
\begin{tikzpicture}[thick,fill,scale=0.9]

\draw (0,-2.4) -- (0,-1.2);
\draw (0, 1.2) -- (0, 2.4);

\draw (0,0) circle (1.2cm);
\draw (0,0) ellipse (2.8cm and 2.4cm); 

\begin{scope}
  \clip (0,0) ellipse (2.8cm and 2.4cm);        
  \clip (-10,-10) rectangle (10,10) (0,0) circle (1.2cm); 
  \clip (-3,-2.4) rectangle (0,2.4);             
  \foreach \y in {-4,-3.5,...,8} {
      \draw[gray, dotted] (-4,\y-2.4) -- (1,\y+0.8);
  }
\end{scope}

\begin{scope}
  \clip (0,0) ellipse (2.8cm and 2.4cm);        
  \clip (-10,-10) rectangle (10,10) (0,0) circle (1.2cm); 
  \clip (0,-2.4) rectangle (3,2.4);              
  \foreach \y in {-4,-3.5,...,8} {
      \draw[gray, dotted] (4,\y-2.4) -- (-1,\y+0.8);
  }
\end{scope}
\begin{scope} \clip (0,0) circle (1.2cm); 
\foreach \x in {-0.8,-0.4,...,1.2} { \draw[gray, dotted] (\x,-1.2) -- (\x,1.4); } 
\end{scope}

\node  at (0,0) {\small{$K^*(\mathcal{A}_1^{(1)})$}};
\node  at (-2, 0) {\small{$K^*(\mathcal{A}_1^{(3)})$}};
\node  at (2, 0)  {\small{$K^*(\mathcal{A}_1^{(2)})$}};
\end{tikzpicture}
\caption{\scriptsize{$\cup_{\Lambda\in \mathcal{A}_1}K^*(\Lambda)$.
}}\label{fig:2_K*partition_A}
\end{subfigure}%
~
\begin{subfigure}[c]{0.45\textwidth}
\centering
\begin{tikzpicture}[thick,fill,scale=0.9]
\draw (0,0) ellipse (2.8cm and 2.4cm);
\node at (0,-0.1) {$\mathrm{iFPO} = \bigcup\limits_{\Lambda\in \mathcal{A}_1} K^*(\Lambda)=\mathrm{PO}$};
\begin{scope} \clip (0,0) ellipse (2.8cm and 2.4cm); 
\foreach \t in {-6,-5.4,...,6} { \draw[gray, dotted, line width=0.6pt] (-6+\t,-6) -- (6+\t,6); } 
\end{scope}
\end{tikzpicture}
\caption{\scriptsize{Corollary~\ref{2_iPO_K^*_PO} under $(\dagger)$.}}\label{fig:2_iPO=K^*_PO}
\end{subfigure}
\caption{\footnotesize{Illustrations of (a) $ \cup_{\Lambda\in \mathcal{A}_1}K^*(\Lambda) $ for $n=1$, and (b) Corollary~\ref{2_iPO_K^*_PO} under ($\dagger$) for $n=1$.}}
\label{fig:2_iPO}
\end{figure}


\section{Illustrative Example}\label{sec:example_section}
This section illustrates inclusive and fair Pareto optimal decisions, equivalently sequentially optimal decisions with respect to the class $\mathcal{A}_n$, in the multilateral centralized risk-sharing problem. We consider the case $n=2$, corresponding to $n+1=3$ agents: the first two agents are policyholders holding initial risks $X_1$ and $X_2$, respectively, whereas the third agent is the central agent (insurer) and holds no initial risk.

Let $\Omega=\{\omega_1,\omega_2,\omega_3,\omega_4\}$, and suppose that the risk vector $(X_1,X_2)$ is given by
\begin{equation*}
\left(X_1,X_2\right)\left(\omega_1\right)=\left(0,0\right),\quad\left(X_1,X_2\right)\left(\omega_2\right)=\left(a,0\right),
\end{equation*}
\begin{equation*}
\left(X_1,X_2\right)\left(\omega_3\right)=\left(0,b\right),\quad\left(X_1,X_2\right)\left(\omega_4\right)=\left(a,b\right),
\end{equation*}
where $a,b>0$. Without loss of generality, assume $a<b$. Under the 
probability measure $P$,
\begin{equation*}
P\left(\omega=\omega_1\right)=P\left(X_1=0,X_2=0\right)=p_{00},\quad P\left(\omega=\omega_2\right)=P\left(X_1=a,X_2=0\right)=p_{a0},
\end{equation*}
\begin{equation*}
P\left(\omega=\omega_3\right)=P\left(X_1=0,X_2=b\right)=p_{0b},\quad P\left(\omega=\omega_4\right)=P\left(X_1=a,X_2=b\right)=p_{ab},
\end{equation*}
where $p_{00},p_{a0},p_{0b},p_{ab}>0$ and $p_{00}+p_{a0}+p_{0b}+p_{ab}=1$. Thus, $X_1$ and $X_2$ are Bernoulli-type random variables taking values in $\{0,a\}$ and $\{0,b\}$, respectively, with success probabilities $p_a=P(X_1=a)=p_{a0}+p_{ab}$, and $p_b=P(X_2=b)=p_{0b}+p_{ab}$.

For the feasible set, consider $\mathcal{I}_2=\mathcal{I}_0^2$. Given a ceding decision vector $\I=(I_1,I_2)\in\mathcal{I}_2$, the ceded risk vector $(I_1(X_1),I_2(X_2))$ is given by
\begin{equation*}
\left(I_1(X_1),I_2(X_2)\right)\left(\omega_1\right)=\left(0,0\right),\quad\left(I_1(X_1),I_2(X_2)\right)\left(\omega_2\right)=\left(i_a,0\right),
\end{equation*}
\begin{equation*}
\left(I_1(X_1),I_2(X_2)\right)\left(\omega_3\right)=\left(0,i_b\right),\quad\left(I_1(X_1),I_2(X_2)\right)\left(\omega_4\right)=\left(i_a,i_b\right),
\end{equation*}
for some $(i_a,i_b)\in[0,a]\times[0,b]$, with the respective probabilities $p_{00},p_{a0},p_{0b},p_{ab}$ under the objective probability measure $P$. Therefore, $I_1(X_1)$ and $I_2(X_2)$ are also Bernoulli-type random variables, taking values in $\{0,i_a\}$ and $\{0,i_b\}$, respectively, with success probabilities $p_a$ and $p_b$. Due to the Bernoulli nature of the risks in this example, with a slight abuse of notation, denote $\mathcal{I}_2=[0,a]\times[0,b]$, which is convex, and a feasible ceded loss decision vector $(i_a,i_b)\in\mathcal{I}_2$.

The two policyholders and the insurer are all endowed with expected shortfall risk measures, $\rho_1(\cdot)$, $\rho_2(\cdot)$, and $\rho_3(\cdot)$, with their respective risk tolerance levels denoted as $\alpha_1,\alpha_2,\alpha\in[0,1]$.\footnote{For $Z\in\mathcal{X}$ and $\beta\in[0,1]$, the expected shortfall risk measure of the random variable $Z$ at the risk tolerance level $\beta$ is defined as
\begin{equation*}
\mathrm{ES}_\beta(Z)=
\begin{cases}
\frac{1}{1-\beta}\int_{\beta}^{1}\mathrm{VaR}_{\gamma}(Z)d\gamma,&\text{ if }\beta\in[0,1),\\
\mathrm{esssup}(Z),&\text{ if }\beta=1,
\end{cases}
\end{equation*}
where the Value-at-Risk of $Z$ at $\gamma$ is defined by $\mathrm{VaR}_{\gamma}(Z)=\inf\{z\in\mathbb{R}:F_Z(z)\geq\gamma\}$, in which $F_Z(z)$, $z\in\mathbb{R}$, is the distribution function of $Z$, and where $\mathrm{esssup}(Z)$ denotes the essential supremum of $Z$.} The pricing rules, $\Pi_1(\cdot)$ and $\Pi_2(\cdot)$, are given by the expected value premium principles, with their respective safety loadings $\theta_1,\theta_2\geq 0$. The risk measures are convex, and the pricing rules are semi-linear.

By translation invariance of the expected shortfall risk measures, the time-0 risk position of the policyholders are given by: for $(i_a,i_b)\in\mathcal{I}_2$,
\begin{align*}
\rho_1(Y_1)=&\;\mathrm{ES}_{\alpha_1}(Y_1)=\mathrm{ES}_{\alpha_1}(X_1 - I_1(X_1) + \Pi_1(I_1(X_1)))\\=&\;\mathrm{ES}_{\alpha_1}(X_1 - I_1(X_1)) + (1+\theta_1) E[I_1(X_1)]\\
=&\;\begin{cases}
 \frac{p_a}{1-\alpha_1} a   + \left((1+\theta_1)   - \frac{1}{1-\alpha_1}\right) p_a i_a, & \text{ if } \alpha_1\in [0,1-p_a],\\
a + ((1+\theta_1)p_a-1)i_a, & \text{ if } \alpha_1\in [1-p_a,1];\\
\end{cases}
\end{align*}
\begin{align*}
\rho_2(Y_2)=&\;\mathrm{ES}_{\alpha_2}(Y_2)=\mathrm{ES}_{\alpha_2}(X_2 - I_2(X_2) + \Pi_2(I_2(X_2)))\\
=&\;\begin{cases}
 \frac{p_b}{1-\alpha_2} b   + \left((1+\theta_2)   - \frac{1}{1-\alpha_2}\right) p_b i_b, & \text{ if } \alpha_2\in [0,1-p_b],\\
b + ((1+\theta_2)p_b-1)i_b, & \text{ if } \alpha_2\in [1-p_b,1].\\
\end{cases}
\end{align*}
Also, by the translation invariance, the time-0 risk position of the insurer is given by: for $(i_a,i_b)\in\mathcal{I}_2$,
\begin{align*}
\rho_3(Y_3)=&\;\mathrm{ES}_{\alpha}(Y_3)=\mathrm{ES}_{\alpha}(I_1(X_1) + I_2(X_2) - \Pi_1(I_1(X_1)) - \Pi_2(I_2(X_2)))\\
=&\;\mathrm{ES}_{\alpha}(I_1(X_1) + I_2(X_2))- \Pi_1(I_1(X_1)) - \Pi_2(I_2(X_2));
\end{align*}
if $i_a\leq i_b$, then
\begin{equation*}
\rho_3(Y_3)=
\begin{cases}
\left(\frac{1}{1-\alpha}-\left(1+\theta_1\right)\right)p_ai_a+\left(\frac{1}{1-\alpha}-\left(1+\theta_2\right)\right)p_bi_b, &\text{ if }\alpha\in[0,p_{00}],\\
\left(\frac{1-p_{0b}-\alpha}{1-\alpha}-\left(1+\theta_1\right)p_a\right)i_a+\left(\frac{1}{1-\alpha}-\left(1+\theta_2\right)\right)p_bi_b, &\text{ if }\alpha\in[p_{00},1-p_b],\\
\left(\frac{p_{ab}}{1-\alpha}-\left(1+\theta_1\right)p_a\right)i_a+\left(1-\left(1+\theta_2\right)p_b\right)i_b, &\text{ if }\alpha\in[1-p_b,1-p_{ab}],\\
\left(1-\left(1+\theta_1\right)p_a\right)i_a+\left(1-\left(1+\theta_2\right)p_b\right)i_b, &\text{ if }\alpha\in[1-p_{ab},1];\\
\end{cases}
\end{equation*}
if $i_a\geq i_b$, then
\begin{equation*}
\rho_3(Y_3)=
\begin{cases}
\left(\frac{1}{1-\alpha}-\left(1+\theta_1\right)\right)p_ai_a+\left(\frac{1}{1-\alpha}-\left(1+\theta_2\right)\right)p_bi_b, &\text{ if }\alpha\in[0,p_{00}],\\
\left(\frac{1}{1-\alpha}-\left(1+\theta_1\right)\right)p_ai_a+\left(\frac{1-p_{a0}-\alpha}{1-\alpha}-\left(1+\theta_2\right)p_b\right)i_b, &\text{ if }\alpha\in[p_{00},1-p_a],\\
\left(1-\left(1+\theta_1\right)p_a\right)i_a+\left(\frac{p_{ab}}{1-\alpha}-\left(1+\theta_2\right)p_b\right)i_b, &\text{ if }\alpha\in[1-p_a,1-p_{ab}],\\
\left(1-\left(1+\theta_1\right)p_a\right)i_a+\left(1-\left(1+\theta_2\right)p_b\right)i_b, &\text{ if }\alpha\in[1-p_{ab},1].\\
\end{cases}
\end{equation*}

Assume that $\alpha_1\in [1-p_a,1]$, $\alpha_2\in [1-p_b,1]$, and $\alpha\in[\max\{1-p_a,1-p_b\},1-p_{ab}]$. For any $\bm{\lambda} = (\lambda_1, \lambda_2,\lambda_3)^T\in\Delta_2[0,1]$, and for any feasible subset $\mathcal{C}\subseteq\mathcal{I}_2=[0,a]\times[0,b]$,
\begin{align*}
K\left({\bm\lambda};\mathcal{C}\right)=&\;\argmin_{(i_a,i_b)\in\mathcal{C}}\left\{\lambda_1\rho_1\left(Y_1\right)+\lambda_2\rho_2\left(Y_2\right)+\lambda_3\rho_3\left(Y_3\right)\right\}\\
=&\;\argmin_{(i_a,i_b)\in\mathcal{C}}\bigg\{\left(\left(\lambda_1-\lambda_3\right)\left(\left(1+\theta_1\right)p_a-1\right)+\lambda_3\left(\frac{p_{ab}}{1-\alpha}-1\right)\mathds{1}_{\{i_a\leq i_b\}}\right)i_a+\\&\;\quad\quad\quad\quad\;\left(\left(\lambda_2-\lambda_3\right)\left(\left(1+\theta_2\right)p_b-1\right)+\lambda_3\left(\frac{p_{ab}}{1-\alpha}-1\right)\mathds{1}_{\{i_a>i_b\}}\right)i_b\bigg\}.
\end{align*}

Moreover, assume that $\left(1+\theta_1\right)p_a<1$, $\frac{p_{ab}}{1-\alpha}<\left(1+\theta_2\right)p_b<1$, and $1-\frac{p_{ab}}{1-\alpha}<\left(1-\left(1+\theta_1\right)p_a\right)+\left(1-\left(1+\theta_2\right)p_b\right)$. For any $(\lambda_1,\lambda_3)^T\in\Delta_1(0,1)$, there exists $\tilde{\lambda}\in(0,1)$ such that
\begin{align*}
&\lambda_1\left(\left(1+\theta_1\right)p_a-1\right)+\lambda_3\left(\frac{p_{ab}}{1-\alpha}-\left(1+\theta_1\right)p_a-\left(1+\theta_2\right)p_b+1\right)\\&
\begin{cases}
>0, & \text{ if }\lambda_1<\tilde{\lambda}\text{ and }\lambda_3>1-\tilde{\lambda},\\
=0, & \text{ if }\lambda_1=\tilde{\lambda}\text{ and }\lambda_3=1-\tilde{\lambda},\\
<0, & \text{ if }\lambda_1>\tilde{\lambda}\text{ and }\lambda_3<1-\tilde{\lambda}.
\end{cases}
\end{align*}
For any $i_a\in[0,a]$, and for any $(\lambda_1,\lambda_3)^T\in\Delta_1(0,1)$,
\begin{equation*}
\left(\left(\lambda_1-\lambda_3\right)\left(\left(1+\theta_1\right)p_a-1\right)+\lambda_3\left(\frac{p_{ab}}{1-\alpha}-1\right)\right)i_a-\lambda_3\left(\left(1+\theta_2\right)p_b-1\right)i_b
\end{equation*}
is strictly increasing in $i_b\in(i_a,b]$, while 
\begin{equation*}
\left(\lambda_1-\lambda_3\right)\left(\left(1+\theta_1\right)p_a-1\right)i_a+\lambda_3\left(\frac{p_{ab}}{1-\alpha}-\left(1+\theta_2\right)p_b\right)i_b
\end{equation*}
is strictly decreasing in $i_b\in[0,i_a)$. For any $\Lambda\in\mathcal{A}_2^{(3)}$, with $n\left(\Lambda\right)=2$,
\begin{align*}
&\;K^{*\left(1\right)}\left(\Lambda[\cdot,1]\right)=K\left((\lambda_1,0,\lambda_3)^T;\mathcal{I}_2\right)\\
=&\;\argmin_{\substack{(i_a,i_b)\in[0,a]\times[0,b]:\\i_a=i_b}}\left(\lambda_1\left(\left(1+\theta_1\right)p_a-1\right)+\lambda_3\left(\frac{p_{ab}}{1-\alpha}-\left(1+\theta_1\right)p_a-\left(1+\theta_2\right)p_b+1\right)\right)i_a\\
=&
\begin{cases}
\{(0,0)\}, & \text{ if }\lambda_1<\tilde{\lambda}\text{ and }\lambda_3>1-\tilde{\lambda},\\
\{(i_a,i_b)\in[0,a]\times[0,b]:i_a=i_b\}, & \text{ if }\lambda_1=\tilde{\lambda}\text{ and }\lambda_3=1-\tilde{\lambda},\\
\{(a,a)\}, & \text{ if }\lambda_1>\tilde{\lambda}\text{ and }\lambda_3<1-\tilde{\lambda},
\end{cases}
\end{align*}
and thus,
\begin{align*}
K^*\left(\Lambda\right)=&\;K^{*\left(2\right)}\left(\Lambda[\cdot,1:2]\right)=K\left(\Lambda[\cdot,2];K^{*\left(1\right)}\left(\Lambda[\cdot,1]\right)\right)=K\left((0,1,0)^T;K^{*\left(1\right)}\left(\Lambda[\cdot,1]\right)\right)\\
=&\;\argmin_{(i_a,i_b)\in K^{*\left(1\right)}\left(\Lambda[\cdot,1]\right)}\left(\left(1+\theta_2\right)p_b-1\right)i_b\\
=&
\begin{cases}
\{(0,0)\}, & \text{ if }\lambda_1<\tilde{\lambda}\text{ and }\lambda_3>1-\tilde{\lambda},\\
\{(a,a)\}, & \text{ if }\lambda_1\geq\tilde{\lambda}\text{ and }\lambda_3\leq 1-\tilde{\lambda}.
\end{cases}
\end{align*}
Hence, $K^*(\mathcal{A}_2^{(3)})=\{(0,0),(a,a)\}$. By Corollary \ref{n_iPO=PO}, the ceded loss decision vectors $(0,0)$ and $(a,a)$ are inclusive and fair Pareto optimal with the ordered set partition $\mathcal{P}_2^{(3)} = ( \{1,3\}, \{2\} )$; at the first level of the optimizations, $(0,0)$ corresponds to the weight vectors $(\lambda_1,\lambda_3)^T\in\Delta_1(0,1)$ with $\lambda_1<\tilde{\lambda}$ and $\lambda_3>1-\tilde{\lambda}$, while $(a,a)$ corresponds to those with $\lambda_1\geq\tilde{\lambda}$ and $\lambda_3\leq 1-\tilde{\lambda}$.

We close this section by outlining a particular set of numerical parameters that satisfies all assumptions; for example, $p_{00}=0.1$, $p_{a0}=0.35$, $p_{0b}=0.4$, $p_{ab}=0.15$, $\alpha_1=0.95$, $\alpha_2=0.95$, $\alpha=0.75$, $\theta_1=0.5$, $\theta_2=0.2$.


\section{Concluding Remarks and Future Directions}\label{sec:conclusion}
This paper studies multilateral centralized risk sharing with endogenous prices. In contrast to models where premiums are exogenous decisions, premiums here are determined by pricing functionals applied to ceded risks. This makes the risk-sharing problem naturally multiobjective while preventing the usual deterministic-transfer argument for tracing the Pareto frontier. We show that classical Pareto optimality, although still the canonical efficiency criterion, does not by itself capture whether all agents are represented in a balanced sequential decision process.

The paper introduces inclusive and fair Pareto optimality as a representation-based refinement of Pareto optimality. The concept requires every agent to appear exactly once in a finite ordered sequence of optimizations, either individually or as part of a group. The main result proves that inclusive and fair Pareto optimality is exactly characterized by balanced sequential optimization. Thus, Geoffrion-proper Pareto optimality implies inclusive and fair Pareto optimality, which in turn implies classical Pareto optimality. The proposed concept therefore lies between proper efficiency and Pareto optimality; like proper efficiency, it refines Pareto optimality, but it does so by controlling representation of agents rather than the magnitude of trade-offs.

Several directions for future research follow naturally. First, the framework could be applied to decentralized insurance and peer-to-peer risk sharing (see, such as, \cite{denuit2022risksharing,abdikerimova2022peer,DENUIT20251,anthropelos2026}, and beyond. In such settings, risk transfers may occur among participants through mutual pools, coalitions, or platforms rather than through a central agent. The question of representation becomes particularly important; one may ask whether all participants are incorporated in the construction of the risk-sharing allocation, whether some coalitions are repeatedly prioritized, and how contribution rules or platform fees should be treated as endogenous counterparts of premiums. Inclusive and fair Pareto optimality may provide a useful way to distinguish balanced decentralized risk-sharing arrangements from efficient but representationally imbalanced ones.

Second, it would be valuable to study the relationship between inclusive and fair Pareto optimality and Stackelberg equilibrium. In centralized insurance, the insurer may act as a leader who chooses pricing rules, contract menus, or underwriting terms, while policyholders respond by choosing coverage or ceding functions. Stackelberg equilibrium captures strategic timing and incentive responses, whereas inclusive and fair Pareto optimality captures balanced representation in a sequential optimization procedure. Future work could ask when Stackelberg equilibria are inclusive and fair Pareto optimal, and conversely when inclusive and fair Pareto optimal decisions can be implemented through suitable leaders-followers mechanisms.

Third, future research could generalize the group-risk functional. This paper uses convex-group risk measures, which are analytically tractable and ensure that all agents in a group are explicitly represented. Other applications may call for non-linear group-risk measures, robust aggregation, bargaining-based group objectives, or coalition-specific welfare criteria. Studying which group-risk functionals preserve the sequential characterization would broaden the applicability of the framework. Finally, because the number of ordered set partitions grows quickly with the number of agents, computational methods for identifying inclusive and fair Pareto optimal decisions in large markets are an important direction for future work.





\bibliographystyle{apacite}
{\small{
\bibliography{ref.bib}{}

\begin{thebibliography}{}

\bibitem [\protect \citeauthoryear {%
Abdikerimova%
\ \BBA {} Feng%
}{%
Abdikerimova%
\ \BBA {} Feng%
}{%
{\protect \APACyear {2022}}%
}]{%
abdikerimova2022peer}
\APACinsertmetastar {%
abdikerimova2022peer}%
\begin{APACrefauthors}%
Abdikerimova, S.%
\BCBT {}\ \BBA {} Feng, R.%
\end{APACrefauthors}%
\unskip\
\newblock
\APACrefYearMonthDay{2022}{}{}.
\newblock
{\BBOQ}\APACrefatitle {Peer-to-peer multi-risk insurance and mutual aid} {Peer-to-peer multi-risk insurance and mutual aid}.{\BBCQ}
\newblock
\APACjournalVolNumPages{European Journal of Operational Research}{299}{2}{735--749}.
\PrintBackRefs{\CurrentBib}

\bibitem [\protect \citeauthoryear {%
Anthropelos%
, Feng%
\BCBL {}\ \BBA {} Kim%
}{%
Anthropelos%
\ \protect \BOthers {.}}{%
{\protect \APACyear {2026}}%
}]{%
anthropelos2026}
\APACinsertmetastar {%
anthropelos2026}%
\begin{APACrefauthors}%
Anthropelos, M.%
, Feng, R.%
\BCBL {}\ \BBA {} Kim, S.%
\end{APACrefauthors}%
\unskip\
\newblock
\APACrefYearMonthDay{2026}{}{}.
\newblock
{\BBOQ}\APACrefatitle {On the expansion of risk pooling} {On the expansion of risk pooling}.{\BBCQ}
\newblock
\APACjournalVolNumPages{Management Science}{}{}{}.
\PrintBackRefs{\CurrentBib}

\bibitem [\protect \citeauthoryear {%
Arrow%
}{%
Arrow%
}{%
{\protect \APACyear {1963}}%
}]{%
arrow1963uncertainty}
\APACinsertmetastar {%
arrow1963uncertainty}%
\begin{APACrefauthors}%
Arrow, K\BPBI J.%
\end{APACrefauthors}%
\unskip\
\newblock
\APACrefYearMonthDay{1963}{}{}.
\newblock
{\BBOQ}\APACrefatitle {Uncertainty and the Welfare Economics of Medical Care} {Uncertainty and the welfare economics of medical care}.{\BBCQ}
\newblock
\APACjournalVolNumPages{American Economic Review}{53}{5}{941--973}.
\PrintBackRefs{\CurrentBib}

\bibitem [\protect \citeauthoryear {%
Arrow%
}{%
Arrow%
}{%
{\protect \APACyear {1974}}%
}]{%
Arrow1974}
\APACinsertmetastar {%
Arrow1974}%
\begin{APACrefauthors}%
Arrow, K\BPBI J.%
\end{APACrefauthors}%
\unskip\
\newblock
\APACrefYearMonthDay{1974}{}{}.
\newblock
{\BBOQ}\APACrefatitle {Optimal insurance and generalized deductibles} {Optimal insurance and generalized deductibles}.{\BBCQ}
\newblock
\APACjournalVolNumPages{Scandinavian Actuarial Journal}{1974}{1}{1--42}.
\PrintBackRefs{\CurrentBib}

\bibitem [\protect \citeauthoryear {%
Asimit%
\ \BBA {} Boonen%
}{%
Asimit%
\ \BBA {} Boonen%
}{%
{\protect \APACyear {2018}}%
}]{%
ASIMIT2018}
\APACinsertmetastar {%
ASIMIT2018}%
\begin{APACrefauthors}%
Asimit, V.%
\BCBT {}\ \BBA {} Boonen, T\BPBI J.%
\end{APACrefauthors}%
\unskip\
\newblock
\APACrefYearMonthDay{2018}{}{}.
\newblock
{\BBOQ}\APACrefatitle {Insurance with multiple insurers: A game-theoretic approach} {Insurance with multiple insurers: A game-theoretic approach}.{\BBCQ}
\newblock
\APACjournalVolNumPages{European Journal of Operational Research}{267}{2}{778--790}.
\PrintBackRefs{\CurrentBib}

\bibitem [\protect \citeauthoryear {%
Asimit%
, Boonen%
, Chi%
\BCBL {}\ \BBA {} Chong%
}{%
Asimit%
\ \protect \BOthers {.}}{%
{\protect \APACyear {2021}}%
}]{%
ASIMIT2021587}
\APACinsertmetastar {%
ASIMIT2021587}%
\begin{APACrefauthors}%
Asimit, V.%
, Boonen, T\BPBI J.%
, Chi, Y.%
\BCBL {}\ \BBA {} Chong, W\BPBI F.%
\end{APACrefauthors}%
\unskip\
\newblock
\APACrefYearMonthDay{2021}{}{}.
\newblock
{\BBOQ}\APACrefatitle {Risk sharing with multiple indemnity environments} {Risk sharing with multiple indemnity environments}.{\BBCQ}
\newblock
\APACjournalVolNumPages{European Journal of Operational Research}{295}{2}{587--603}.
\PrintBackRefs{\CurrentBib}

\bibitem [\protect \citeauthoryear {%
Benson%
}{%
Benson%
}{%
{\protect \APACyear {1998}}%
}]{%
benson1998outer}
\APACinsertmetastar {%
benson1998outer}%
\begin{APACrefauthors}%
Benson, H\BPBI P.%
\end{APACrefauthors}%
\unskip\
\newblock
\APACrefYearMonthDay{1998}{}{}.
\newblock
{\BBOQ}\APACrefatitle {An outer approximation algorithm for generating all efficient extreme points in the outcome set of a multiple objective linear programming problem} {An outer approximation algorithm for generating all efficient extreme points in the outcome set of a multiple objective linear programming problem}.{\BBCQ}
\newblock
\APACjournalVolNumPages{Journal of Global Optimization}{13}{1}{1--24}.
\PrintBackRefs{\CurrentBib}

\bibitem [\protect \citeauthoryear {%
Bernard%
, He%
, Yan%
\BCBL {}\ \BBA {} Zhou%
}{%
Bernard%
\ \protect \BOthers {.}}{%
{\protect \APACyear {2015}}%
}]{%
Bernard2015}
\APACinsertmetastar {%
Bernard2015}%
\begin{APACrefauthors}%
Bernard, C.%
, He, X.%
, Yan, J\BHBI A.%
\BCBL {}\ \BBA {} Zhou, X\BPBI Y.%
\end{APACrefauthors}%
\unskip\
\newblock
\APACrefYearMonthDay{2015}{}{}.
\newblock
{\BBOQ}\APACrefatitle {Optimal insurance design under rank-dependent expected utility} {Optimal insurance design under rank-dependent expected utility}.{\BBCQ}
\newblock
\APACjournalVolNumPages{Mathematical Finance}{25}{1}{154--186}.
\PrintBackRefs{\CurrentBib}

\bibitem [\protect \citeauthoryear {%
Bernard%
, Liu%
\BCBL {}\ \BBA {} Vanduffel%
}{%
Bernard%
\ \protect \BOthers {.}}{%
{\protect \APACyear {2020}}%
}]{%
BERNARD2020}
\APACinsertmetastar {%
BERNARD2020}%
\begin{APACrefauthors}%
Bernard, C.%
, Liu, F.%
\BCBL {}\ \BBA {} Vanduffel, S.%
\end{APACrefauthors}%
\unskip\
\newblock
\APACrefYearMonthDay{2020}{}{}.
\newblock
{\BBOQ}\APACrefatitle {Optimal insurance in the presence of multiple policyholders} {Optimal insurance in the presence of multiple policyholders}.{\BBCQ}
\newblock
\APACjournalVolNumPages{Journal of Economic Behavior \& Organization}{180}{}{638-656}.
\PrintBackRefs{\CurrentBib}

\bibitem [\protect \citeauthoryear {%
B{\'e}rub{\'e}%
, Gendreau%
\BCBL {}\ \BBA {} Potvin%
}{%
B{\'e}rub{\'e}%
\ \protect \BOthers {.}}{%
{\protect \APACyear {2009}}%
}]{%
BERUBE2009}
\APACinsertmetastar {%
BERUBE2009}%
\begin{APACrefauthors}%
B{\'e}rub{\'e}, J\BHBI F.%
, Gendreau, M.%
\BCBL {}\ \BBA {} Potvin, J\BHBI Y.%
\end{APACrefauthors}%
\unskip\
\newblock
\APACrefYearMonthDay{2009}{}{}.
\newblock
{\BBOQ}\APACrefatitle {An exact $\varepsilon$-constraint method for bi-objective combinatorial optimization problems: Application to the Traveling Salesman Problem with Profits} {An exact $\varepsilon$-constraint method for bi-objective combinatorial optimization problems: Application to the traveling salesman problem with profits}.{\BBCQ}
\newblock
\APACjournalVolNumPages{European Journal of Operational Research}{194}{1}{39--50}.
\PrintBackRefs{\CurrentBib}

\bibitem [\protect \citeauthoryear {%
Boonen%
, Chong%
\BCBL {}\ \BBA {} Ghossoub%
}{%
Boonen%
\ \protect \BOthers {.}}{%
{\protect \APACyear {2024}}%
}]{%
boonenchongghossub2024}
\APACinsertmetastar {%
boonenchongghossub2024}%
\begin{APACrefauthors}%
Boonen, T\BPBI J.%
, Chong, W\BPBI F.%
\BCBL {}\ \BBA {} Ghossoub, M.%
\end{APACrefauthors}%
\unskip\
\newblock
\APACrefYearMonthDay{2024}{}{}.
\newblock
{\BBOQ}\APACrefatitle {Pareto-efficient risk sharing in centralized insurance markets with application to flood risk} {Pareto-efficient risk sharing in centralized insurance markets with application to flood risk}.{\BBCQ}
\newblock
\APACjournalVolNumPages{Journal of Risk and Insurance}{91}{2}{449--488}.
\PrintBackRefs{\CurrentBib}

\bibitem [\protect \citeauthoryear {%
Boonen%
\ \BBA {} Ghossoub%
}{%
Boonen%
\ \BBA {} Ghossoub%
}{%
{\protect \APACyear {2023}}%
}]{%
boonen2023}
\APACinsertmetastar {%
boonen2023}%
\begin{APACrefauthors}%
Boonen, T\BPBI J.%
\BCBT {}\ \BBA {} Ghossoub, M.%
\end{APACrefauthors}%
\unskip\
\newblock
\APACrefYearMonthDay{2023}{}{}.
\newblock
{\BBOQ}\APACrefatitle {Bowley vs. {P}areto optima in reinsurance contracting} {Bowley vs. {P}areto optima in reinsurance contracting}.{\BBCQ}
\newblock
\APACjournalVolNumPages{European Journal of Operational Research}{307}{1}{382--391}.
\PrintBackRefs{\CurrentBib}

\bibitem [\protect \citeauthoryear {%
Boonen%
, Han%
, Liu%
\BCBL {}\ \BBA {} Wang%
}{%
Boonen%
\ \protect \BOthers {.}}{%
{\protect \APACyear {2025}}%
}]{%
boonen2025pareto}
\APACinsertmetastar {%
boonen2025pareto}%
\begin{APACrefauthors}%
Boonen, T\BPBI J.%
, Han, X.%
, Liu, P.%
\BCBL {}\ \BBA {} Wang, J.%
\end{APACrefauthors}%
\unskip\
\newblock
\APACrefYearMonthDay{2025}{}{}.
\newblock
{\BBOQ}\APACrefatitle {Pareto-optimal reinsurance under dependence uncertainty} {Pareto-optimal reinsurance under dependence uncertainty}.{\BBCQ}
\newblock
\APACjournalVolNumPages{arXiv preprint arXiv:2512.11430}{}{}{}.
\PrintBackRefs{\CurrentBib}

\bibitem [\protect \citeauthoryear {%
Boonen%
\ \BBA {} Jiang%
}{%
Boonen%
\ \BBA {} Jiang%
}{%
{\protect \APACyear {2022}}%
}]{%
BOONEN2022}
\APACinsertmetastar {%
BOONEN2022}%
\begin{APACrefauthors}%
Boonen, T\BPBI J.%
\BCBT {}\ \BBA {} Jiang, W.%
\end{APACrefauthors}%
\unskip\
\newblock
\APACrefYearMonthDay{2022}{}{}.
\newblock
{\BBOQ}\APACrefatitle {A marginal indemnity function approach to optimal reinsurance under the {V}ajda condition} {A marginal indemnity function approach to optimal reinsurance under the {V}ajda condition}.{\BBCQ}
\newblock
\APACjournalVolNumPages{European Journal of Operational Research}{303}{2}{928--944}.
\PrintBackRefs{\CurrentBib}

\bibitem [\protect \citeauthoryear {%
Boonen%
, Ng%
, Ng%
\BCBL {}\ \BBA {} Nguyen%
}{%
Boonen%
\ \protect \BOthers {.}}{%
{\protect \APACyear {2026}}%
}]{%
boonen2026pareto}
\APACinsertmetastar {%
boonen2026pareto}%
\begin{APACrefauthors}%
Boonen, T\BPBI J.%
, Ng, K\BPBI T\BPBI H.%
, Ng, T\BPBI W.%
\BCBL {}\ \BBA {} Nguyen, T.%
\end{APACrefauthors}%
\unskip\
\newblock
\APACrefYearMonthDay{2026}{}{}.
\newblock
{\BBOQ}\APACrefatitle {Pareto and {B}owley Reinsurance Games in Peer-to-Peer Insurance} {Pareto and {B}owley reinsurance games in peer-to-peer insurance}.{\BBCQ}
\newblock
\APACjournalVolNumPages{arXiv preprint arXiv:2602.14223}{}{}{}.
\PrintBackRefs{\CurrentBib}

\bibitem [\protect \citeauthoryear {%
Boonen%
\ \BBA {} Zhang%
}{%
Boonen%
\ \BBA {} Zhang%
}{%
{\protect \APACyear {2021}}%
}]{%
BoonenZhang2021}
\APACinsertmetastar {%
BoonenZhang2021}%
\begin{APACrefauthors}%
Boonen, T\BPBI J.%
\BCBT {}\ \BBA {} Zhang, Y.%
\end{APACrefauthors}%
\unskip\
\newblock
\APACrefYearMonthDay{2021}{}{}.
\newblock
{\BBOQ}\APACrefatitle {OPTIMAL REINSURANCE DESIGN WITH DISTORTION RISK MEASURES AND ASYMMETRIC INFORMATION} {Optimal reinsurance design with distortion risk measures and asymmetric information}.{\BBCQ}
\newblock
\APACjournalVolNumPages{ASTIN Bulletin}{51}{2}{607--629}.
\PrintBackRefs{\CurrentBib}

\bibitem [\protect \citeauthoryear {%
Borch%
}{%
Borch%
}{%
{\protect \APACyear {1960}}%
}]{%
borch1960attempt}
\APACinsertmetastar {%
borch1960attempt}%
\begin{APACrefauthors}%
Borch, K.%
\end{APACrefauthors}%
\unskip\
\newblock
\APACrefYearMonthDay{1960}{}{}.
\newblock
{\BBOQ}\APACrefatitle {An attempt to determine the optimum amount of stop loss reinsurance} {An attempt to determine the optimum amount of stop loss reinsurance}.{\BBCQ}
\newblock
\APACjournalVolNumPages{Transactions of the 16th International Congress of Actuaries}{I}{3}{597--610}.
\PrintBackRefs{\CurrentBib}

\bibitem [\protect \citeauthoryear {%
Borwein%
\ \BBA {} Zhuang%
}{%
Borwein%
\ \BBA {} Zhuang%
}{%
{\protect \APACyear {1993}}%
}]{%
borwein1993super}
\APACinsertmetastar {%
borwein1993super}%
\begin{APACrefauthors}%
Borwein, J\BPBI M.%
\BCBT {}\ \BBA {} Zhuang, D.%
\end{APACrefauthors}%
\unskip\
\newblock
\APACrefYearMonthDay{1993}{}{}.
\newblock
{\BBOQ}\APACrefatitle {Super efficiency in vector optimization} {Super efficiency in vector optimization}.{\BBCQ}
\newblock
\APACjournalVolNumPages{Transactions of the American Mathematical Society}{338}{1}{105--122}.
\PrintBackRefs{\CurrentBib}

\bibitem [\protect \citeauthoryear {%
Cai%
, Liu%
\BCBL {}\ \BBA {} Wang%
}{%
Cai%
\ \protect \BOthers {.}}{%
{\protect \APACyear {2017}}%
}]{%
CAI2017}
\APACinsertmetastar {%
CAI2017}%
\begin{APACrefauthors}%
Cai, J.%
, Liu, H.%
\BCBL {}\ \BBA {} Wang, R.%
\end{APACrefauthors}%
\unskip\
\newblock
\APACrefYearMonthDay{2017}{}{}.
\newblock
{\BBOQ}\APACrefatitle {Pareto-optimal reinsurance arrangements under general model settings} {Pareto-optimal reinsurance arrangements under general model settings}.{\BBCQ}
\newblock
\APACjournalVolNumPages{Insurance: Mathematics and Economics}{77}{}{24--37}.
\PrintBackRefs{\CurrentBib}

\bibitem [\protect \citeauthoryear {%
Cai%
, Tan%
, Weng%
\BCBL {}\ \BBA {} Zhang%
}{%
Cai%
\ \protect \BOthers {.}}{%
{\protect \APACyear {2008}}%
}]{%
CAI2008}
\APACinsertmetastar {%
CAI2008}%
\begin{APACrefauthors}%
Cai, J.%
, Tan, K\BPBI S.%
, Weng, C.%
\BCBL {}\ \BBA {} Zhang, Y.%
\end{APACrefauthors}%
\unskip\
\newblock
\APACrefYearMonthDay{2008}{}{}.
\newblock
{\BBOQ}\APACrefatitle {Optimal reinsurance under {VaR} and {CTE} risk measures} {Optimal reinsurance under {VaR} and {CTE} risk measures}.{\BBCQ}
\newblock
\APACjournalVolNumPages{Insurance: Mathematics and Economics}{43}{1}{185--196}.
\PrintBackRefs{\CurrentBib}

\bibitem [\protect \citeauthoryear {%
Cheung%
}{%
Cheung%
}{%
{\protect \APACyear {2010}}%
}]{%
Cheung2010}
\APACinsertmetastar {%
Cheung2010}%
\begin{APACrefauthors}%
Cheung, K\BPBI C.%
\end{APACrefauthors}%
\unskip\
\newblock
\APACrefYearMonthDay{2010}{}{}.
\newblock
{\BBOQ}\APACrefatitle {Optimal Reinsurance Revisited – A Geometric Approach} {Optimal reinsurance revisited – a geometric approach}.{\BBCQ}
\newblock
\APACjournalVolNumPages{ASTIN Bulletin}{40}{1}{221--239}.
\PrintBackRefs{\CurrentBib}

\bibitem [\protect \citeauthoryear {%
Cheung%
, Chong%
\BCBL {}\ \BBA {} Lo%
}{%
Cheung%
\ \protect \BOthers {.}}{%
{\protect \APACyear {2019}}%
}]{%
Cheung2019}
\APACinsertmetastar {%
Cheung2019}%
\begin{APACrefauthors}%
Cheung, K\BPBI C.%
, Chong, W\BPBI F.%
\BCBL {}\ \BBA {} Lo, A.%
\end{APACrefauthors}%
\unskip\
\newblock
\APACrefYearMonthDay{2019}{}{}.
\newblock
{\BBOQ}\APACrefatitle {Budget-constrained optimal reinsurance design under coherent risk measures} {Budget-constrained optimal reinsurance design under coherent risk measures}.{\BBCQ}
\newblock
\APACjournalVolNumPages{Scandinavian Actuarial Journal}{2019}{9}{729--751}.
\PrintBackRefs{\CurrentBib}

\bibitem [\protect \citeauthoryear {%
Cheung%
, Chong%
\BCBL {}\ \BBA {} Yam%
}{%
Cheung%
\ \protect \BOthers {.}}{%
{\protect \APACyear {2015}}%
}]{%
Cheung2015}
\APACinsertmetastar {%
Cheung2015}%
\begin{APACrefauthors}%
Cheung, K\BPBI C.%
, Chong, W\BPBI F.%
\BCBL {}\ \BBA {} Yam, S\BPBI C\BPBI P.%
\end{APACrefauthors}%
\unskip\
\newblock
\APACrefYearMonthDay{2015}{}{}.
\newblock
{\BBOQ}\APACrefatitle {Convex ordering for insurance preferences} {Convex ordering for insurance preferences}.{\BBCQ}
\newblock
\APACjournalVolNumPages{Insurance: Mathematics and Economics}{64}{}{409--416}.
\PrintBackRefs{\CurrentBib}

\bibitem [\protect \citeauthoryear {%
Chi%
\ \BBA {} Tan%
}{%
Chi%
\ \BBA {} Tan%
}{%
{\protect \APACyear {2011}}%
}]{%
ChiTan2011}
\APACinsertmetastar {%
ChiTan2011}%
\begin{APACrefauthors}%
Chi, Y.%
\BCBT {}\ \BBA {} Tan, K\BPBI S.%
\end{APACrefauthors}%
\unskip\
\newblock
\APACrefYearMonthDay{2011}{}{}.
\newblock
{\BBOQ}\APACrefatitle {Optimal Reinsurance under {VaR} and {CVaR} Risk Measures: A Simplified Approach} {Optimal reinsurance under {VaR} and {CVaR} risk measures: A simplified approach}.{\BBCQ}
\newblock
\APACjournalVolNumPages{ASTIN Bulletin}{41}{2}{487--509}.
\PrintBackRefs{\CurrentBib}

\bibitem [\protect \citeauthoryear {%
Chi%
\ \BBA {} Wei%
}{%
Chi%
\ \BBA {} Wei%
}{%
{\protect \APACyear {2020}}%
}]{%
chi2020optimal}
\APACinsertmetastar {%
chi2020optimal}%
\begin{APACrefauthors}%
Chi, Y.%
\BCBT {}\ \BBA {} Wei, W.%
\end{APACrefauthors}%
\unskip\
\newblock
\APACrefYearMonthDay{2020}{}{}.
\newblock
{\BBOQ}\APACrefatitle {Optimal insurance with background risk: An analysis of general dependence structures} {Optimal insurance with background risk: An analysis of general dependence structures}.{\BBCQ}
\newblock
\APACjournalVolNumPages{Finance and Stochastics}{24}{4}{903--937}.
\PrintBackRefs{\CurrentBib}

\bibitem [\protect \citeauthoryear {%
Cooper%
, Hunter%
\BCBL {}\ \BBA {} Nagaraj%
}{%
Cooper%
\ \protect \BOthers {.}}{%
{\protect \APACyear {2020}}%
}]{%
Cooper2020}
\APACinsertmetastar {%
Cooper2020}%
\begin{APACrefauthors}%
Cooper, K.%
, Hunter, S\BPBI R.%
\BCBL {}\ \BBA {} Nagaraj, K.%
\end{APACrefauthors}%
\unskip\
\newblock
\APACrefYearMonthDay{2020}{}{}.
\newblock
{\BBOQ}\APACrefatitle {Biobjective Simulation Optimization on Integer Lattices Using the Epsilon-Constraint Method in a Retrospective Approximation Framework} {Biobjective simulation optimization on integer lattices using the epsilon-constraint method in a retrospective approximation framework}.{\BBCQ}
\newblock
\APACjournalVolNumPages{INFORMS Journal on Computing}{32}{4}{1080--1100}.
\PrintBackRefs{\CurrentBib}

\bibitem [\protect \citeauthoryear {%
Cui%
, Yang%
\BCBL {}\ \BBA {} Wu%
}{%
Cui%
\ \protect \BOthers {.}}{%
{\protect \APACyear {2013}}%
}]{%
CUI2013}
\APACinsertmetastar {%
CUI2013}%
\begin{APACrefauthors}%
Cui, W.%
, Yang, J.%
\BCBL {}\ \BBA {} Wu, L.%
\end{APACrefauthors}%
\unskip\
\newblock
\APACrefYearMonthDay{2013}{}{}.
\newblock
{\BBOQ}\APACrefatitle {Optimal reinsurance minimizing the distortion risk measure under general reinsurance premium principles} {Optimal reinsurance minimizing the distortion risk measure under general reinsurance premium principles}.{\BBCQ}
\newblock
\APACjournalVolNumPages{Insurance: Mathematics and Economics}{53}{1}{74--85}.
\PrintBackRefs{\CurrentBib}

\bibitem [\protect \citeauthoryear {%
Denuit%
, Dhaene%
, Ghossoub%
\BCBL {}\ \BBA {} Robert%
}{%
Denuit%
\ \protect \BOthers {.}}{%
{\protect \APACyear {2025}}%
}]{%
DENUIT20251}
\APACinsertmetastar {%
DENUIT20251}%
\begin{APACrefauthors}%
Denuit, M.%
, Dhaene, J.%
, Ghossoub, M.%
\BCBL {}\ \BBA {} Robert, C\BPBI Y.%
\end{APACrefauthors}%
\unskip\
\newblock
\APACrefYearMonthDay{2025}{}{}.
\newblock
{\BBOQ}\APACrefatitle {Comonotonicity and {P}areto optimality, with application to collaborative insurance} {Comonotonicity and {P}areto optimality, with application to collaborative insurance}.{\BBCQ}
\newblock
\APACjournalVolNumPages{Insurance: Mathematics and Economics}{120}{}{1-16}.
\PrintBackRefs{\CurrentBib}

\bibitem [\protect \citeauthoryear {%
Denuit%
, Dhaene%
\BCBL {}\ \BBA {} Robert%
}{%
Denuit%
\ \protect \BOthers {.}}{%
{\protect \APACyear {2022}}%
}]{%
denuit2022risksharing}
\APACinsertmetastar {%
denuit2022risksharing}%
\begin{APACrefauthors}%
Denuit, M.%
, Dhaene, J.%
\BCBL {}\ \BBA {} Robert, C\BPBI Y.%
\end{APACrefauthors}%
\unskip\
\newblock
\APACrefYearMonthDay{2022}{}{}.
\newblock
{\BBOQ}\APACrefatitle {Risk-sharing rules and their properties, with applications to peer-to-peer insurance} {Risk-sharing rules and their properties, with applications to peer-to-peer insurance}.{\BBCQ}
\newblock
\APACjournalVolNumPages{Journal of Risk and Insurance}{89}{3}{615-667}.
\PrintBackRefs{\CurrentBib}

\bibitem [\protect \citeauthoryear {%
Evans%
}{%
Evans%
}{%
{\protect \APACyear {1984}}%
}]{%
evans1984}
\APACinsertmetastar {%
evans1984}%
\begin{APACrefauthors}%
Evans, G\BPBI W.%
\end{APACrefauthors}%
\unskip\
\newblock
\APACrefYearMonthDay{1984}{}{}.
\newblock
{\BBOQ}\APACrefatitle {An Overview of Techniques for Solving Multiobjective Mathematical Programs} {An overview of techniques for solving multiobjective mathematical programs}.{\BBCQ}
\newblock
\APACjournalVolNumPages{Management Science}{30}{11}{1268--1282}.
\PrintBackRefs{\CurrentBib}

\bibitem [\protect \citeauthoryear {%
{Garc{\'\i}a-Casta\~no}%
, Melguizo-Padial%
\BCBL {}\ \BBA {} Parzanese%
}{%
{Garc{\'\i}a-Casta\~no}%
\ \protect \BOthers {.}}{%
{\protect \APACyear {2023}}%
}]{%
GarciaCastanoFernando2023}
\APACinsertmetastar {%
GarciaCastanoFernando2023}%
\begin{APACrefauthors}%
{Garc{\'\i}a-Casta\~no}, F.%
, Melguizo-Padial, M\BPBI {\'A}.%
\BCBL {}\ \BBA {} Parzanese, G.%
\end{APACrefauthors}%
\unskip\
\newblock
\APACrefYearMonthDay{2023}{}{}.
\newblock
{\BBOQ}\APACrefatitle {Sublinear scalarizations for proper and approximate proper efficient points in nonconvex vector optimization} {Sublinear scalarizations for proper and approximate proper efficient points in nonconvex vector optimization}.{\BBCQ}
\newblock
\APACjournalVolNumPages{Mathematical Methods of Operations Research}{97}{3}{367--382}.
\PrintBackRefs{\CurrentBib}

\bibitem [\protect \citeauthoryear {%
Geoffrion%
}{%
Geoffrion%
}{%
{\protect \APACyear {1968}}%
}]{%
GEOFFRION1968}
\APACinsertmetastar {%
GEOFFRION1968}%
\begin{APACrefauthors}%
Geoffrion, A\BPBI M.%
\end{APACrefauthors}%
\unskip\
\newblock
\APACrefYearMonthDay{1968}{}{}.
\newblock
{\BBOQ}\APACrefatitle {Proper efficiency and the theory of vector maximization} {Proper efficiency and the theory of vector maximization}.{\BBCQ}
\newblock
\APACjournalVolNumPages{Journal of Mathematical Analysis and Applications}{22}{3}{618--630}.
\PrintBackRefs{\CurrentBib}

\bibitem [\protect \citeauthoryear {%
Ghossoub%
\ \BBA {} Zhu%
}{%
Ghossoub%
\ \BBA {} Zhu%
}{%
{\protect \APACyear {2024}}%
}]{%
GHOSSOUBSTACK2024}
\APACinsertmetastar {%
GHOSSOUBSTACK2024}%
\begin{APACrefauthors}%
Ghossoub, M.%
\BCBT {}\ \BBA {} Zhu, M\BPBI B.%
\end{APACrefauthors}%
\unskip\
\newblock
\APACrefYearMonthDay{2024}{}{}.
\newblock
{\BBOQ}\APACrefatitle {Stackelberg equilibria with multiple policyholders} {Stackelberg equilibria with multiple policyholders}.{\BBCQ}
\newblock
\APACjournalVolNumPages{Insurance: Mathematics and Economics}{116}{}{189--201}.
\PrintBackRefs{\CurrentBib}

\bibitem [\protect \citeauthoryear {%
Haimes%
}{%
Haimes%
}{%
{\protect \APACyear {1971}}%
}]{%
haimes1971bicriterion}
\APACinsertmetastar {%
haimes1971bicriterion}%
\begin{APACrefauthors}%
Haimes, Y.%
\end{APACrefauthors}%
\unskip\
\newblock
\APACrefYearMonthDay{1971}{}{}.
\newblock
{\BBOQ}\APACrefatitle {On a bicriterion formulation of the problems of integrated system identification and system optimization} {On a bicriterion formulation of the problems of integrated system identification and system optimization}.{\BBCQ}
\newblock
\APACjournalVolNumPages{IEEE Transactions on Systems, Man, and Cybernetics}{}{3}{296--297}.
\PrintBackRefs{\CurrentBib}

\bibitem [\protect \citeauthoryear {%
Herzel%
, Ruzika%
\BCBL {}\ \BBA {} Thielen%
}{%
Herzel%
\ \protect \BOthers {.}}{%
{\protect \APACyear {2021}}%
}]{%
herzel2021}
\APACinsertmetastar {%
herzel2021}%
\begin{APACrefauthors}%
Herzel, A.%
, Ruzika, S.%
\BCBL {}\ \BBA {} Thielen, C.%
\end{APACrefauthors}%
\unskip\
\newblock
\APACrefYearMonthDay{2021}{}{}.
\newblock
{\BBOQ}\APACrefatitle {Approximation Methods for Multiobjective Optimization Problems: A Survey} {Approximation methods for multiobjective optimization problems: A survey}.{\BBCQ}
\newblock
\APACjournalVolNumPages{INFORMS Journal on Computing}{33}{4}{1284--1299}.
\PrintBackRefs{\CurrentBib}

\bibitem [\protect \citeauthoryear {%
Kaluszka%
\ \BBA {} Okolewski%
}{%
Kaluszka%
\ \BBA {} Okolewski%
}{%
{\protect \APACyear {2008}}%
}]{%
Kaluszka2008}
\APACinsertmetastar {%
Kaluszka2008}%
\begin{APACrefauthors}%
Kaluszka, M.%
\BCBT {}\ \BBA {} Okolewski, A.%
\end{APACrefauthors}%
\unskip\
\newblock
\APACrefYearMonthDay{2008}{}{}.
\newblock
{\BBOQ}\APACrefatitle {An extension of {A}rrow's result on optimal reinsurance contract} {An extension of {A}rrow's result on optimal reinsurance contract}.{\BBCQ}
\newblock
\APACjournalVolNumPages{The Journal of Risk and Insurance}{75}{2}{275--288}.
\PrintBackRefs{\CurrentBib}

\bibitem [\protect \citeauthoryear {%
Laumanns%
, Thiele%
\BCBL {}\ \BBA {} Zitzler%
}{%
Laumanns%
\ \protect \BOthers {.}}{%
{\protect \APACyear {2006}}%
}]{%
LAUMANNS2006}
\APACinsertmetastar {%
LAUMANNS2006}%
\begin{APACrefauthors}%
Laumanns, M.%
, Thiele, L.%
\BCBL {}\ \BBA {} Zitzler, E.%
\end{APACrefauthors}%
\unskip\
\newblock
\APACrefYearMonthDay{2006}{}{}.
\newblock
{\BBOQ}\APACrefatitle {An efficient, adaptive parameter variation scheme for metaheuristics based on the epsilon-constraint method} {An efficient, adaptive parameter variation scheme for metaheuristics based on the epsilon-constraint method}.{\BBCQ}
\newblock
\APACjournalVolNumPages{European Journal of Operational Research}{169}{3}{932--942}.
\PrintBackRefs{\CurrentBib}

\bibitem [\protect \citeauthoryear {%
Lo%
\ \BBA {} Tang%
}{%
Lo%
\ \BBA {} Tang%
}{%
{\protect \APACyear {2019}}%
}]{%
lo2019pareto}
\APACinsertmetastar {%
lo2019pareto}%
\begin{APACrefauthors}%
Lo, A.%
\BCBT {}\ \BBA {} Tang, Z.%
\end{APACrefauthors}%
\unskip\
\newblock
\APACrefYearMonthDay{2019}{}{}.
\newblock
{\BBOQ}\APACrefatitle {Pareto-optimal reinsurance policies in the presence of individual risk constraints} {Pareto-optimal reinsurance policies in the presence of individual risk constraints}.{\BBCQ}
\newblock
\APACjournalVolNumPages{Annals of Operations Research}{274}{1}{395--423}.
\PrintBackRefs{\CurrentBib}

\bibitem [\protect \citeauthoryear {%
Miller%
}{%
Miller%
}{%
{\protect \APACyear {1972}}%
}]{%
miller1972}
\APACinsertmetastar {%
miller1972}%
\begin{APACrefauthors}%
Miller, R\BPBI B.%
\end{APACrefauthors}%
\unskip\
\newblock
\APACrefYearMonthDay{1972}{}{}.
\newblock
{\BBOQ}\APACrefatitle {Insurance contracts, as two-person games} {Insurance contracts, as two-person games}.{\BBCQ}
\newblock
\APACjournalVolNumPages{Management Science}{18}{7}{444--447}.
\PrintBackRefs{\CurrentBib}

\bibitem [\protect \citeauthoryear {%
Pareto%
}{%
Pareto%
}{%
{\protect \APACyear {1906}}%
}]{%
pareto1906manuale}
\APACinsertmetastar {%
pareto1906manuale}%
\begin{APACrefauthors}%
Pareto, V.%
\end{APACrefauthors}%
\unskip\
\newblock
\APACrefYear{1906}.
\newblock
\APACrefbtitle {Manuale di {E}conomia {P}olitica} {Manuale di {E}conomia {P}olitica}.
\newblock
\APACaddressPublisher{Milan}{Societ\'a Editrice Libraria}.
\newblock
\APACrefnote{English translation by A. S. Schwier, \textit{Manual of Political Economy}, New York: Kelley Publishers, 1971}
\PrintBackRefs{\CurrentBib}

\bibitem [\protect \citeauthoryear {%
Pascoletti%
\ \BBA {} Serafini%
}{%
Pascoletti%
\ \BBA {} Serafini%
}{%
{\protect \APACyear {1984}}%
}]{%
PascolettiA1984}
\APACinsertmetastar {%
PascolettiA1984}%
\begin{APACrefauthors}%
Pascoletti, A.%
\BCBT {}\ \BBA {} Serafini, P.%
\end{APACrefauthors}%
\unskip\
\newblock
\APACrefYearMonthDay{1984}{}{}.
\newblock
{\BBOQ}\APACrefatitle {Scalarizing vector optimization problems} {Scalarizing vector optimization problems}.{\BBCQ}
\newblock
\APACjournalVolNumPages{Journal of {O}ptimization {T}heory and {A}pplications}{42}{4}{499--524}.
\PrintBackRefs{\CurrentBib}

\bibitem [\protect \citeauthoryear {%
Ravanelli%
\ \BBA {} Svindland%
}{%
Ravanelli%
\ \BBA {} Svindland%
}{%
{\protect \APACyear {2014}}%
}]{%
ravanelli2014comonotone}
\APACinsertmetastar {%
ravanelli2014comonotone}%
\begin{APACrefauthors}%
Ravanelli, C.%
\BCBT {}\ \BBA {} Svindland, G.%
\end{APACrefauthors}%
\unskip\
\newblock
\APACrefYearMonthDay{2014}{}{}.
\newblock
{\BBOQ}\APACrefatitle {Comonotone {P}areto optimal allocations for law invariant robust utilities on ${L}^1$} {Comonotone {P}areto optimal allocations for law invariant robust utilities on ${L}^1$}.{\BBCQ}
\newblock
\APACjournalVolNumPages{Finance and Stochastics}{18}{1}{249--269}.
\PrintBackRefs{\CurrentBib}

\bibitem [\protect \citeauthoryear {%
Rockafellar%
}{%
Rockafellar%
}{%
{\protect \APACyear {1997}}%
}]{%
rockafellar1997convex}
\APACinsertmetastar {%
rockafellar1997convex}%
\begin{APACrefauthors}%
Rockafellar, R.%
\end{APACrefauthors}%
\unskip\
\newblock
\APACrefYear{1997}.
\newblock
\APACrefbtitle {Convex {A}nalysis} {Convex {A}nalysis}.
\newblock
\APACaddressPublisher{}{Princeton University Press}.
\PrintBackRefs{\CurrentBib}

\bibitem [\protect \citeauthoryear {%
Sung%
, Yam%
, Yung%
\BCBL {}\ \BBA {} Zhou%
}{%
Sung%
\ \protect \BOthers {.}}{%
{\protect \APACyear {2011}}%
}]{%
sung2011}
\APACinsertmetastar {%
sung2011}%
\begin{APACrefauthors}%
Sung, K\BPBI C\BPBI J.%
, Yam, S\BPBI C\BPBI P.%
, Yung, S\BPBI P.%
\BCBL {}\ \BBA {} Zhou, J\BPBI H.%
\end{APACrefauthors}%
\unskip\
\newblock
\APACrefYearMonthDay{2011}{}{}.
\newblock
{\BBOQ}\APACrefatitle {Behavioral optimal insurance} {Behavioral optimal insurance}.{\BBCQ}
\newblock
\APACjournalVolNumPages{Insurance: Mathematics and Economics}{49}{3}{418--428}.
\PrintBackRefs{\CurrentBib}

\bibitem [\protect \citeauthoryear {%
Wang%
, Boonen%
, Jiang%
\BCBL {}\ \BBA {} Zhang%
}{%
Wang%
\ \protect \BOthers {.}}{%
{\protect \APACyear {2026}}%
}]{%
WANG2026261}
\APACinsertmetastar {%
WANG2026261}%
\begin{APACrefauthors}%
Wang, W.%
, Boonen, T\BPBI J.%
, Jiang, W.%
\BCBL {}\ \BBA {} Zhang, Y.%
\end{APACrefauthors}%
\unskip\
\newblock
\APACrefYearMonthDay{2026}{}{}.
\newblock
{\BBOQ}\APACrefatitle {Optimal insurance design under distortion risk measures with variance constraint} {Optimal insurance design under distortion risk measures with variance constraint}.{\BBCQ}
\newblock
\APACjournalVolNumPages{European Journal of Operational Research}{333}{1}{261--278}.
\PrintBackRefs{\CurrentBib}

\bibitem [\protect \citeauthoryear {%
Wierzbicki%
}{%
Wierzbicki%
}{%
{\protect \APACyear {1980}}%
}]{%
wierzbicki1980}
\APACinsertmetastar {%
wierzbicki1980}%
\begin{APACrefauthors}%
Wierzbicki, A\BPBI P.%
\end{APACrefauthors}%
\unskip\
\newblock
\APACrefYearMonthDay{1980}{}{}.
\newblock
{\BBOQ}\APACrefatitle {The Use of Reference Objectives in Multiobjective Optimization} {The use of reference objectives in multiobjective optimization}.{\BBCQ}
\newblock
\BIn{} G.~Fandel\ \BBA {} T.~Gal\ (\BEDS), \APACrefbtitle {Multiple {C}riteria {D}ecision {M}aking {T}heory and {A}pplication} {Multiple {C}riteria {D}ecision {M}aking {T}heory and {A}pplication}\ (\BPGS\ 468--486).
\newblock
\APACaddressPublisher{Berlin, Heidelberg}{Springer Berlin Heidelberg}.
\PrintBackRefs{\CurrentBib}

\bibitem [\protect \citeauthoryear {%
Xia%
}{%
Xia%
}{%
{\protect \APACyear {2004}}%
}]{%
xia2004multi}
\APACinsertmetastar {%
xia2004multi}%
\begin{APACrefauthors}%
Xia, J.%
\end{APACrefauthors}%
\unskip\
\newblock
\APACrefYearMonthDay{2004}{}{}.
\newblock
{\BBOQ}\APACrefatitle {Multi-agent investment in incomplete markets} {Multi-agent investment in incomplete markets}.{\BBCQ}
\newblock
\APACjournalVolNumPages{Finance and Stochastics}{8}{2}{241--259}.
\PrintBackRefs{\CurrentBib}

\bibitem [\protect \citeauthoryear {%
Xu%
, Zhou%
\BCBL {}\ \BBA {} Zhuang%
}{%
Xu%
\ \protect \BOthers {.}}{%
{\protect \APACyear {2019}}%
}]{%
Xu2019}
\APACinsertmetastar {%
Xu2019}%
\begin{APACrefauthors}%
Xu, Z\BPBI Q.%
, Zhou, X\BPBI Y.%
\BCBL {}\ \BBA {} Zhuang, S\BPBI C.%
\end{APACrefauthors}%
\unskip\
\newblock
\APACrefYearMonthDay{2019}{}{}.
\newblock
{\BBOQ}\APACrefatitle {Optimal insurance under rank-dependent utility and incentive compatibility} {Optimal insurance under rank-dependent utility and incentive compatibility}.{\BBCQ}
\newblock
\APACjournalVolNumPages{Mathematical Finance}{29}{2}{659--692}.
\PrintBackRefs{\CurrentBib}

\bibitem [\protect \citeauthoryear {%
Zhuang%
, Weng%
, Tan%
\BCBL {}\ \BBA {} Assa%
}{%
Zhuang%
\ \protect \BOthers {.}}{%
{\protect \APACyear {2016}}%
}]{%
Zhuang2016}
\APACinsertmetastar {%
Zhuang2016}%
\begin{APACrefauthors}%
Zhuang, S\BPBI C.%
, Weng, C.%
, Tan, K\BPBI S.%
\BCBL {}\ \BBA {} Assa, H.%
\end{APACrefauthors}%
\unskip\
\newblock
\APACrefYearMonthDay{2016}{}{}.
\newblock
{\BBOQ}\APACrefatitle {Marginal indemnification function formulation for optimal reinsurance} {Marginal indemnification function formulation for optimal reinsurance}.{\BBCQ}
\newblock
\APACjournalVolNumPages{Insurance: Mathematics and Economics}{67}{}{65--76}.
\PrintBackRefs{\CurrentBib}

\end{thebibliography}
}}

\begin{appendices}



\section{Proof of Lemma~\ref{rhoconvexI}}\label{ap:rhoconvexI}
\begin{proof} Without loss of generality, we prove this result for $m=2$.
Since $\mathcal{C}$ is convex, let $\I=(I_1, I_2, \dots, I_n),\;\mathbf{J}= (J_1, J_2, \dots , J_n) \in \mathcal{C}$, and  $\theta_1\I + \theta_2 \mathbf{J}\in\mathcal{C}$, with $\theta_1\in [0,1]$, where $\theta_2=1-\theta_1\in [0,1]$. For policyholder $i=1,2,\dots, n$,
\begin{align*}
&\rho_i\left(Y_i(\theta_1 I_i + \theta_2 J_i)\right) = \rho_i\left( X_i - (\theta_1I_{i}(X_i) + \theta_2 J_i(X_i)) + \Pi_i( \theta_1 I_{i }(X_i) + \theta_2 J_{i}(X_i) ) \right)\\
& =  \rho_i\left( X_i - (\theta_1I_{i}(X_i) + \theta_2 J_i(X_i)) + \theta_1 \Pi_i(   I_{i }(X_i))  + \theta_2 \Pi_i( J_{i}(X_i) ) \right)\\
& = \rho_i\left( \theta_1 (X_i - I_{i}(X_i) +  \Pi_i(  I_{i}(X_i)) ) + \theta_2 (X_i- J_{i}(X_i) +\Pi_i( J_{i }(X_i) ) ) \right)\\
& \leq \theta_1 \rho_i\left( X_i - I_{i}(X_i) +  \Pi_i(  I_{i}(X_i)) \right) + \theta_2 \rho_i\left( X_i- J_{i}(X_i) +\Pi_i( J_{i }(X_i) ) \right) \\
& = \theta_1 \rho_i \left( Y_i\left(I_{i}\right) \right) + \theta_2 \rho_i \left( Y_i\left(J_{i}\right) \right) \,.
\end{align*}
Similarly, for the insurer,  
\begin{align*}
\rho_{n+1} & \left( Y_{n+1}(\theta_1 \mathbf{I} +\theta_2 \mathbf{J})\right) =\rho_{n+1}\left(  \sum_{i=1}^n \left(\theta_1 I_i(X_i) + \theta_2 J_i(X_i)\right)  - \sum_{i=1}^n \Pi_i (\theta_1 I_{i}(X_i) + \theta_2 J_i(X_i))   \right) \\ 
&=  \rho_{n+1}\left(  \sum_{i=1}^n \left(\theta_1 I_i(X_i) + \theta_2 J_i(X_i)\right)  - \sum_{i=1}^n \left( \theta_1\Pi_i ( I_{i}(X_i))  + \theta_2 \Pi_i(J_i(X_i)) \right)\right)\\
&= \rho_{n+1}\left(\theta_1  \sum_{i=1}^n \left( I_i(X_i) - \Pi_i ( I_{i}(X_i))\right) + \theta_2\sum_{i=1}^n      \left( J_i(X_i)  
 - \Pi_i(J_i(X_i)) \right)  \right)\\
& \leq \theta_1 \rho_{n+1}\left(  \sum_{i=1}^n  \left( I_i(X_i) - \Pi_i ( I_{i}(X_i))\right) \right) + \theta_2 \rho_{n+1}\left( \sum_{i=1}^n  \left( J_i(X_i)  - \Pi_i(J_i(X_i)) \right)  \right)\\
&= \theta_1\rho_{n+1}\left( Y_{n+1}\left( \mathbf{I} \right) \right) + \theta_2\rho_{n+1}\left( Y_{n+1}\left(\mathbf{J}\right) \right)\, ,
\end{align*}
where, for each agent, the second equality is due to the semi-linearity of the premium principles, while the inequality is due to the convexity of the risk measures. 
\end{proof}

\section{Details of Remark~\ref{rem:Anpartition}}\label{ap:Anpartition}
Property (i) holds because each $\bm\lambda^{(j)}\in \Delta_n[0,1]$ is a non-zero column, for all $j=1,2,\dots, n(\Lambda)$. To prove the disjointness in property (ii), suppose that there exists $i\in \Nn$ such that $i\in \mathcal{G}_j\cap \mathcal{G}_k$, for some distinct $j,k\in\{1,2,\dots,n(\Lambda)\}$. Then both $\lambda_i^{(j)}$ and $\lambda_i^{(k)}$ are non-zero, and hence $\nzp(\Lambda[i,\cdot]^T)\supseteq\{j,k\}$, which contradicts the second condition in the definition of $\mathcal{A}_n$. Therefore, the sets $\mathcal{G}_j$, for $j=1,2,\dots, n(\Lambda)$, are pairwise disjoint. To prove the covering in property (ii), suppose that there exists $i\in \Nn$ such that $i\notin \mathcal{G}_j$, for all $j=1,2,\dots, n(\Lambda)$. Then $\lambda_i^{(j)}=0$ for all $j=1,2,\dots,n(\Lambda)$, so that $\nzp(\Lambda[i, \cdot]^T) =\emptyset$, again contradicting the second condition in the definition of $\mathcal{A}_n$. Hence, $\Nn\subseteq \cup_{j=1}^{n(\Lambda)}\mathcal{G}_j $. The reverse inclusion $\cup_{j=1}^{n(\Lambda)}\mathcal{G}_j\subseteq\Nn$ is immediate from the definition of each set $\mathcal{G}_j$.

\section{Proof of Proposition~\ref{n_K_lambda_c_convex}}\label{ap:n_K_lambda_c_convex}
\begin{proof}
Let $\mathbf{I}^{(1)}=(I^{(1)}_1,I^{(1)}_2,\dots, I^{(1)}_n),\mathbf{I}^{(2)}=(I^{(2)}_1,I^{(2)}_2,\dots, I^{(2)}_n)\in K(\bm\lambda;\mathcal{C})$, and let $\theta_1,\theta_2\in[0,1]$ with $\theta_1+\theta_2=1$. Since $\mathcal{C}$ is convex, we have $\theta_1\I^{(1)}+\theta_2\I^{(2)}\in\mathcal{C}$. For any $\I=\left(I_1,I_2,\dots, I_n\right)\in\mathcal{C}$,
\begin{align*}
&\;\sum_{i=1}^{n}\lambda_i \rho_i\left(Y_i\left(\theta_1 I_i^{(1)}+\theta_2I_i^{(2)}\right)\right)+\lambda_{n+1} \rho_{n+1}\left(Y_{n+1}\left(\theta_1\I^{(1)}+\theta_2\I^{(2)}\right)\right)
\\\leq&\;\sum_{i=1}^{n}\lambda_i\left(\theta_1\rho_i\left(Y_i\left(I_i^{(1)}\right)\right)+\theta_2\rho_i\left(Y_i\left(I_i^{(2)}\right)\right)\right)\\&\;+\lambda_{n+1}\left(\theta_1\rho_{n+1}\left(Y_{n+1}\left(\I^{(1)}\right)\right)+\theta_2\rho_{n+1}\left(Y_{n+1}\left(\I^{(2)}\right)\right)\right)
\\=&\;\theta_1\left(\sum_{i=1}^{n}\lambda_i \rho_i\left(Y_i\left(I_i^{(1)}\right)\right)+\lambda_{n+1} \rho_{n+1}\left(Y_{n+1}\left(\I^{(1)}\right)\right)\right)\\&\;+\theta_2\left(\sum_{i=1}^{n}\lambda_i \rho_i\left(Y_i\left(I_i^{(2)}\right)\right)+\lambda_{n+1} \rho_{n+1}\left(Y_{n+1}\left(\I^{(2)}\right)\right)\right)
\\\leq&\;\theta_1\left(\sum_{i=1}^{n}\lambda_i \rho_i\left(Y_i\left(I_i\right)\right)+\lambda_{n+1} \rho_{n+1}\left(Y_{n+1}\left(\I\right)\right)\right)+\theta_2\left(\sum_{i=1}^{n}\lambda_i \rho_i\left(Y_i\left(I_i\right)\right)+\lambda_{n+1} \rho_{n+1}\left(Y_{n+1}\left(\I\right)\right)\right)
\\=&\;\sum_{i=1}^{n}\lambda_i \rho_i\left(Y_i\left(I_i\right)\right)+\lambda_{n+1} \rho_{n+1}\left(Y_{n+1}\left(\I\right)\right).
\end{align*}
The first inequality follows from Lemma \ref{rhoconvexI}, and the second one follows from the fact that $\mathbf{I}^{(1)},\mathbf{I}^{(2)}\in K(\bm\lambda;\mathcal{C})$. Hence, $\theta_1\I^{(1)}+\theta_2\I^{(2)}\in K(\bm\lambda;\mathcal{C})$, thus $K(\bm\lambda;\mathcal{C})$ is convex.
\end{proof}

\section{Example of Sequential Optimizations}\label{ap:ex_2}
To continue with Example~\ref{ex:A2}, we illustrate the sequential problems for the matrices in $\mathcal{A}_2$.
\begin{itemize}
\item For any $\Lambda \in \mathcal{A}_2^{(1)}$, $n(\Lambda) = 1$, $\Lambda[\cdot, 1] = (\lambda_1, \lambda_2, \lambda_3)^T\in  \Delta_2(0,1)$, thus
\[\begin{aligned}
K^\ast (\Lambda) 
& = K^{\ast (1)} ((\lambda_1, \lambda_2, \lambda_3)^T) \\
& = K((\lambda_1, \lambda_2, \lambda_3)^T;\mathcal{I}_2) = \argmin_{\I=(I_1, I_2)\in \mathcal{I}_2} \{ \lambda_1 \rho_1(Y_1) +  \lambda_2 \rho_2(Y_2) + \lambda_3 \rho_3(Y_3) \}.
\end{aligned}\]
\item For any $\Lambda \in \mathcal{A}_2^{(2)}$, $n(\Lambda) = 2$, $\Lambda[\cdot, 1] =(\lambda_1,\lambda_2,0)^T $, with $\lambda_1,\lambda_2\in (0,1)$ and $\lambda_1+\lambda_2=1$, $\Lambda[\cdot,2] = (0,0,1)^T$, thus
\[
K^{\ast(1)} ( (\lambda_1, \lambda_2, 0)^T ) = K((\lambda_1, \lambda_2, 0)^T;\mathcal{I}_2) = \argmin_{\I=(I_1, I_2)\in \mathcal{I}_2} \{ \lambda_1 \rho_1(Y_1) +  \lambda_2 \rho_2(Y_2) \} \, ,
\]
\[
\begin{aligned}
K^\ast (\Lambda) 
& = K^{\ast (2)} (\Lambda[\cdot, 1:2]) \\
& = K((0, 0, 1)^T; K^{\ast(1)} ( (\lambda_1, \lambda_2, 0)^T )) = \argmin_{\I=(I_1, I_2)\in K^{\ast(1)} ( (\lambda_1, \lambda_2, 0)^T )}   \rho_3(Y_3).
\end{aligned}\]
For $\Lambda \in \mathcal{A}_2^{(k)} $ with $k=3,4$, the illustration is similar and is therefore omitted.
\item For any $\Lambda \in \mathcal{A}_2^{(5)}$, $n(\Lambda) = 2$, $\Lambda[\cdot,1] = (1,0,0)^T$, $\Lambda[\cdot, 2] = (0,\lambda_2, \lambda_3)^T $, with $\lambda_2,\lambda_3\in (0,1)$ and $\lambda_2+\lambda_3=1$, thus
\[
K^{\ast(1)} ( (1, 0, 0)^T ) = K((1,0,0)^T;\mathcal{I}_2) = \argmin_{\I = (I_1, I_2)\in \mathcal{I}_2}  \rho_1(Y_1)   \,, 
\]
\[\begin{aligned}
K^\ast (\Lambda) 
& = K^{\ast (2)} (\Lambda[\cdot, 1:2]) \\
& = K((0, \lambda_2, \lambda_3)^T; K^{\ast(1)} ( (1, 0, 0)^T )) = \argmin_{\I = (I_1, I_2)\in K^{\ast(1)} ( (1, 0, 0)^T )} \{ \lambda_2 \rho_2(Y_2) +  \lambda_3 \rho_3(Y_3) \}\, .
\end{aligned}\]
For $\Lambda \in \mathcal{A}_2^{(k)}$ with $k=6,7$, the illustration is similar and is therefore omitted.
\item For any $\Lambda\in \mathcal{A}_2^{(8)}$, $n(\Lambda) = 3$, $\Lambda[\cdot,1] = (1,0,0)^T$, $\Lambda[\cdot, 2] = (0,1,0)^T$, $\Lambda[\cdot,3]=(0,0,1)^T$, 
thus
\[
K^{\ast(1)} ( (1, 0, 0)^T ) = K((1,0,0)^T;\mathcal{I}_2) = \argmin_{\I=(I_1, I_2)\in \mathcal{I}_2}  \rho_1(Y_1)  \, ,  
\]
\[
K^{\ast (2)} (\Lambda[\cdot, 1:2]) 
 = K((0, 1, 0)^T; K^{\ast(1)} ( (1, 0, 0)^T )) = \argmin_{\I = (I_1, I_2)\in K^{\ast(1)} ( (1, 0, 0)^T )} \rho_2(Y_2)\, ,
\]
\[
\begin{aligned}
K^\ast (\Lambda) 
& = K^{\ast (3)} (\Lambda[\cdot, 1:3]) \\
& = K((0, 0, 1)^T; K^{\ast(2)} (\Lambda[\cdot, 1:2])) = \argmin_{\I=(I_1, I_2)\in K^{\ast(2)} ( \Lambda[\cdot, 1:2] )}  \rho_3(Y_3) \, .
\end{aligned}
\]
For $\Lambda \in \mathcal{A}_2^{(k)}$ with $k=9,10,\dots,13$, the illustration is similar and is therefore omitted.
\end{itemize}

\section{Dimension for Set of Sequential Matrices}\label{ap:dimension_An}
Let $\mathrm{d}(\mathcal{A}_{-1})={\mathrm{d}(\mathcal{A}_0)=1}$. Let $n\in\mathbb{N}$. The dimension of $\mathcal{A}_n$ is calculated recursively:
\begin{equation*}
\mathrm{d}(\mathcal{A}_{n}) = \sum_{k=1}^{n+1} C_{n+1, k} \cdot \mathrm{d}(\mathcal{A}_{n-k})\, ,
\end{equation*}
where, for $k=1,2,\dots, n+1$, $ C_{n+1, k}= \binom{n+1}{k}$ is the binomial coefficient. The recursion is intuitive: among the $n+1$ agents, there are $C_{n+1, k}$ combinations for picking $k$ agents forming their objectives $\lambda_{i_1}\rho_{i_1}(Y_{i_1}) +\cdots +\lambda_{i_k}\rho_{i_k}(Y_{i_k}) $, for $i_1,i_2,\dots,i_k\in\mathcal{N}_n$, with $0<\lambda_{i_l}<1$ and $\sum_{l=1}^k \lambda_{i_l} = 1$; for each combination of $k$ agents being picked, the remaining $n+1-k$ agents arise $\mathrm{d}(\mathcal{A}_{n-k})$ distinct classes of their ordered sequences.
\section{Pareto Optimality in Euclidean Space}\label{ap:POsets}
Let $C\subseteq\mathbb{R}^n$. The convex hull and affine hull of $C$ are respectively defined below:
\[
\overline{C}=\mathrm{conv}(C) = \left\lbrace  \sum_{j=1}^m \theta_j\mathbf{y}_j: m\in\mathbb{N},\, \mathbf{y}_j\in C, \, \theta_j \geq 0,\, \sum_{j=1}^m\theta_j = 1 \right\rbrace\, ,
\]
\[
\mathrm{aff}(C)=\left\lbrace \sum_{j=1}^m \theta_j \mathbf{y}_j: m\in \mathbb{N}\,, \mathbf{y}_j\in C\,, \theta_j\in \mathbb{R}\,, \sum_{j=1}^m\theta_j = 1\right\rbrace \, .\]
The relative interior of $C$ is denoted by $\mathrm{ri}(C)$, which is defined as the interior of $C$ within the affine hull $\mathrm{aff}(C)$. If $C$ is $n$-dimensional, then $\mathrm{aff}(C) = \mathbb{R}^n$, and $\mathrm{ri}(C)=\mathrm{int}(C)$ which is the interior of $C$ within $\mathbb{R}^n$.

A point $\mathbf{y}^\ast=(y^\ast_1,y^\ast_2,\dots, y^\ast_n)^T\in C$ is Pareto optimal in $C$, if there does not exist another point $\mathbf{y}=(y_1,y_2,\dots, y_n)^T \in C$ such that $y_i\leq y^\ast_i$, for $i=1,2,\dots,n$, with at least one strict inequality; in terms of vectors, there does not exist $\mathbf{y}\in C$ such that $\mathbf{y}\leq \mathbf{y}^\ast$ with $\mathbf{y}\neq \mathbf{y}^\ast$, where the inequalities are component-wise.

Let $\mathbf{y}^\ast$ be a Pareto optimal point in $C$, and consider the set $T_{\mathbf{y}^\ast}= \{\mathbf{y} = (y_1,y_2,\dots , y_n)^T\in\mathbb{R}^n: \mathbf{y}\leq \mathbf{y}^\ast \}$. Since $T_{\mathbf{y}^\ast}$ is $n$-dimensional, $\mathrm{ri}(T_{\mathbf{y}^\ast}) = \mathrm{int}(T_{\mathbf{y}^\ast})$. 
For any $\mathbf{y}\in \mathrm{ri}(T_{\mathbf{y}^\ast}) = \mathrm{int}(T_{\mathbf{y}^\ast}) $, $\mathbf{y}< \mathbf{y}^\ast $; if such $\mathbf{y}\in\mathrm{ri}(C)$, then it contradicts that $\mathbf{y}^\ast$ is Pareto optimal in $C$. Therefore, $\mathrm{ri}(T_{\mathbf{y}^\ast }) \cap \mathrm{ri}(C) =\emptyset$.

Assume that $C$ is convex. By the Separation Theorem (Theorem 11.3 of \cite{rockafellar1997convex}), since $T_{\mathbf{y}^\ast }$ is also convex, by the fact that $\mathrm{ri}(T_{\mathbf{y}^\ast }) \cap \mathrm{ri}(C) =\emptyset$, there exists a vector $\bm{a}=(a_1,a_2,\dots, a_n)^T\in\mathbb{R}^n $, $\bm{a}\neq \mathbf{0}$, such that 
$\sup_{\mathbf{y}\in T_{\mathbf{y}^\ast}} \lbrace \bm{a}^T\mathbf{y}\rbrace\leq\inf_{\mathbf{y}\in C}\lbrace \bm{a}^T\mathbf{y}\rbrace$.

\end{appendices}

\end{document}